\documentclass[reqno]{amsart}

\usepackage[applemac]{inputenc}

\usepackage{amssymb}
\usepackage{graphicx}
\usepackage[cmtip,all]{xy}
\usepackage{verbatim}
\usepackage{tikz-cd}
\usepackage{mathtools}
\usepackage{graphicx}
\usepackage{mathrsfs}
\usepackage{geometry}
\usepackage{enumitem}
\usepackage{booktabs}

\geometry{verbose,a4paper,tmargin=30mm,bmargin=30mm,lmargin=30mm,rmargin=30mm}

\DeclareMathAlphabet{\mathpzc}{OT1}{pzc}{m}{it}

\newtheorem{theorem}{Theorem}[section]
\newtheorem{lemma}[theorem]{Lemma}
\newtheorem{corollary}{Corollary}[section]

\theoremstyle{definition}
\newtheorem{definition}[theorem]{Definition}
\newtheorem{example}[theorem]{Example}

\theoremstyle{remark}
\newtheorem{remark}[theorem]{Remark}

\theoremstyle{question}

\numberwithin{equation}{section}

\makeatletter
\DeclareRobustCommand{\cev}[1]{%
  \mathpalette\do@cev{#1}%
}
\newcommand{\do@cev}[2]{%
  \fix@cev{#1}{+}%
  \reflectbox{$\m@th#1\vec{\reflectbox{$\fix@cev{#1}{-}\m@th#1#2\fix@cev{#1}{+}$}}$}%
  \fix@cev{#1}{-}%
}
\newcommand{\fix@cev}[2]{%
  \ifx#1\displaystyle
    \mkern#23mu
  \else
    \ifx#1\textstyle
      \mkern#23mu
    \else
      \ifx#1\scriptstyle
        \mkern#22mu
      \else
        \mkern#22mu
      \fi
    \fi
  \fi
}

\makeatother

\setcounter{tocdepth}{1}

 \newcommand{\virgolette}{``}

\newcommand{\cat}[1]{\ensuremath{\mbox{\sffamily{#1}}}}

\newcommand{\slantone}[2]{{\raisebox{.1em}{$#1$}\left/\raisebox{-.1em}{$#2$}\right.}}
\newcommand*{\defeq}{\mathrel{\vcenter{\baselineskip0.5ex \lineskiplimit0pt
                     \hbox{\scriptsize.}\hbox{\scriptsize.}}}%
                     =}

\newcommand\asim{\mathrel{%
  \ooalign{\raise0.1ex\hbox{$\sim$}\cr\hidewidth\raise-0.8ex\hbox{\scalebox{0.9}{$\scriptstyle{x}$}}\hidewidth\cr}}}

\newcommand{\proj}[1]{\ensuremath{\mathbb{P}^{#1}}}

\newcommand{\mani}{\ensuremath{\mathpzc{X}}}

\newcommand{\stsheaf}{\ensuremath{\mathcal{O}_{\mathpzc{M}}}}

\newcommand{\beq}{\begin{equation}}
\newcommand{\eeq}{\end{equation}}
\newcommand{\bear}{\begin{eqnarray}}
\newcommand{\eear}{\end{eqnarray}}

\begin{document}

% \title[short text for running head]{full title}
\title{On BV Supermanifolds and the Super Atiyah Class}

%    Only \author and \address are required; other information is
%    optional.  Remove any unused author tags.

%    author one information
% \author[short version for running head]{name for top of paper}
\author{Simone Noja}
\address{Universit\"{a}t Heidelberg}
\curraddr{Im Neuenheimer Feld 205, 69120 Heidelberg, Germany}
\email{noja@mathi.uni-heidelberg.de}
%\thanks{}

%    author two information
%\author{Riccardo Re}
%\address{Università degli Studi dell'Insubria}
%\curraddr{Via Valleggio 11, 22100 Como, Italy}
%\email{riccardo.re@uninsubria.it}
%\thanks{}

%\subjclass[2000]{Primary }
%    The 2010 edition of the Mathematics Subject Classification is
%    now available.  If you are citing a classification from the
%    new scheme, use the following input coding instead.
%\subjclass[2010]{Primary }

%\date{}

\begin{abstract} We study global and local geometry of forms on odd symplectic BV supermanifolds, constructed from the total space of the bundle of 1-forms on a base supermanifold. 
We show that globally 1-forms are an extension of vector bundles defined on the base supermanifold. In the holomorphic category, we prove that this extension is split if and only if the super Atiyah class of the base supermanifold vanishes. This is equivalent to the existence of a holomorphic superconnection: we show how this condition is related to the characteristic non-split geometry of complex supermanifolds. From a local point of view, we prove that the deformed de Rham double complex naturally arises as a de-quantization of the de Rham/Spencer double complex of the base supermanifold. Following \v{S}evera, we show that the associated spectral sequence yields semidensities on the BV supermanifold, together with their differential in the form of a super BV Laplacian.  

%to the geometry of the reduced space and the obstructions to splitting the the base supermanifold

\end{abstract}

\maketitle

\tableofcontents

\section{Introduction}

%\noindent Originally designed as a tool for perturbative quantization of gauge theories, Batalin-Vilkovisky formalism (henceforth \virgolette BV formalism'') is now established as one of the prominent   \\

\noindent The Batalin-Vilkovisky formalism (henceforth the \virgolette BV formalism'') was originally designed in the early 1980's as a tool to deal with the perturbative quantization of gauge theories. Nowadays, its importance goes far beyond its original purpose: the BV formalism has grown into one of the foundational language of contemporary theoretical and mathematical physics \cite{Costello, Getzler, Mnev}, with several applications also to pure mathematics \cite{Sullivan, Pantev}. % On the foundational side, as testified for example by the recent Costello-Gwilliam' construction of local quantum field theories - where BV formalism plays a pivotal role {\bf Costello-Gwilliam}. 
%As such, BV formalism appears naturally in very different contexts. From a foundational point of view, it is a building pillar of the recent influential Costello-Gwilliam' construction of local quantum field theories via factorization algebras \cite{Costello}, but it also enters in several applications which span from general relativity \cite{Mitch1, Mitch2} to string field theory \cite{Zwiebach}, from conformal field theories \cite{Getzler} to holomorphic and \virgolette twisted'' supersymmetric field theories \cite{Ingmar1, Ingmar2}. The advent of BV formalism marked the entrance of homological algebra in theoretical physics: under its lens, the path integral looses its stigma of ill-defined analytical machinery, to become instead an algebraic or - better - a homological machinery, which is capable to account for different species of topological and geometric invariants related to the original physical theory. Indeed, BV formalism has proved itself very effective also in pure mathematics. Beside the celebrated Kontsevich's deformation quantization of Poisson manifolds, one should mention applications to algebraic topology - namely invariants of 3-manifolds, knots and string topology \cite{Sullivan} - and it has recently prompted advances in derived symplectic geometry \cite{Pantev}. \\
\noindent It was Albert Schwarz in \cite{Schwarz} who first elucidated the geometric framework that lies at the basis of the BV formalism, by recognizing the crucial role played by supergeometry. The BV formalism builds upon the BRST formalism, that in turns introduced a new point of view on the so-called Faddeev-Popov procedure. In particular, the BRST formalism identifies the space of fields of a gauge theory with \virgolette functions'' on a supermanifold $( \cat{M}_{\mathpzc{BRST}}, \mathcal{O}_{\tiny{\cat{M}}_{\mathpzc{BRST}}})$, typically constructed as the split supermanifold generated by an action Lie algebroid $ \mathfrak{g}\times M \rightarrow M$ associated to a Lie group action $G \times M \rightarrow M$, so that $\mathcal{O}_{\tiny{\cat{M}}_{\mathpzc{BRST}}}$ is trivialized as $\mathcal{O}_{\tiny{\cat{M}}_{\mathpzc{BRST}}} (U) \cong \mathcal{C}^{\infty} (U) \otimes \wedge^\bullet \Pi \mathfrak{g}^\ast$. %The supermanifold $\cat{M}_{\mathpzc{BRST}}$ is  endowed with a natural odd vector superfield $Q_{\mathpzc{BRST}}$, encoding the symmetries of the physical theory, \emph{i.e.}\ $Q_{\mathpzc{BRST}} (\mathcal{S}) = 0$, for $\mathcal{S}$ the $G$-invariant action functional, which thus defines a $\mathcal{Q}_{\mathpzc{BRST}}$-cocycle: these supermanifolds are indeed dubbed $Q$-(super)manifolds after the characterizing odd vector field $Q_{\mathpzc{BRST}}$. 
In this framework, ghost fields are not just byproducts of an integration procedure, as in the Faddeev-Popov procedure. Instead, they are conceptually understood as proper geometric data: odd sections of $\mathcal{O}_{\tiny{\cat{M}}_{\mathpzc{BRST}}}$ related to the generators of the Chevalley-Eilenberg algebra $\wedge^\bullet \Pi \mathfrak{g}^\ast$ resolving the (infinitesimal) gauge symmetry $\mathfrak{g}.$ The BV formalism enhances the above \virgolette BRST package'' by crucially allowing symplectic geometry to enter the picture: starting from $\cat{M}_{\mathpzc{BRST}}$, one constructs a canonically associated (odd) symplectic supermanifold $( \cat{M}_{\mathpzc{BV}}, \mathcal{O}_{\tiny{\cat{M}}_{\mathpzc{BV}}} )$ as the total space supermanifold of the (parity-shifted) cotangent bundle $\cat{M}_{\mathpzc{BV}} \defeq (\Pi \cat{T}^{\ast} \cat{M}_{\mathpzc{BRST}} \rightarrow \cat{M}_{\mathpzc{BRST}})$. This is called BV space or BV supermanifold. Functions on $\cat{M}_{\mathpzc{BV}}$ are given by polynomial functions on the fibers of $\cat{M}_{\mathpzc{BV}}$, that is $\mathcal{O}_{\tiny{\cat{M}}_{\mathpzc{BV}}} \defeq (\Omega^\bullet_{\tiny{\cat{M}}_\mathpzc{BRST}})^\ast$. This means that if in the BRST setting the local $n|m$ coordinates $(x^i, \theta^\alpha)$ of $\cat{M}_{\mathpzc{BRST}}$ are identified with the fields $x^i$ and the ghosts $\theta^\alpha$ of the related physical theory, in the BV setting these get supplemented by another set of $m|n$ coordinates $( q_\alpha | p_i)$, accounting for the fiber directions of $\cat{M}_{\mathpzc{BV}}$ and identified with the so-called anti-fields $p_i$ and anti-ghosts $q_\alpha$. In a similar fashion as above, whereas the odd ghosts $\theta_\alpha$ provided a homological resolution for the gauge symmetry, the new odd generators in $\cat{M}_{\mathpzc{BV}}$ - the anti-fields $q_i$'s - provide a homological (Koszul-Tate) resolution of the critical locus of the action $\mathcal{S}$, showing once again the vicinity of the formalism with a homological or, better, {derived} geometric point of view.\\

\begin{comment}

\vspace{.3cm}
\begin{center}
\begin{tabular}{cccccc}
\toprule
$\cat{M}_{\mathpzc{BRST}} $ & & $\mathpzc{even}$  & & $\mathpzc{odd}$ \\
\midrule
$\mathpzc{coordinates}$  &  & $x^i \leftrightsquigarrow \mathpzc{fields}$ & & $\theta^\alpha \leftrightsquigarrow  \mathpzc{ghosts}$ \\
%odd &  & $x^i$ = fields & $\theta^\alpha$  = ghosts & $p_i$ = anti-fields & $q_\alpha$ = anti-ghosts \\
\bottomrule
\end{tabular}
\end{center}
\vspace{.2cm}
\begin{center}
\begin{tabular}{cccccc}
\toprule
$\cat{M}_{\mathpzc{BV}} $ & & $\mathpzc{even}$  & & $\mathpzc{odd}$ \\
\midrule
$\mathpzc{base}$ $\mathpzc{coordinates}$  &  & $x^i \leftrightsquigarrow \mathpzc{fields}$ & & $\theta^\alpha \leftrightsquigarrow  \mathpzc{ghosts}$ \\
$\mathpzc{fiber}$ $\mathpzc{coordinates}$  &  & $q_\alpha \leftrightsquigarrow \mathpzc{anti}$-$\mathpzc{ghosts}$ & & $p_i \leftrightsquigarrow  \mathpzc{anti}$-$\mathpzc{fields}$ \\
\bottomrule
\end{tabular}
\end{center}
\vspace{.3cm}

\end{comment}

\noindent Generally speaking, the most interesting aspects of supergeometry are those which do not arise as a generalization of the ordinary commutative theory, but instead force us to revise our classical geometric intuition and to confront ourself with unexpected new features. These new features are both of local and global nature. Locally, the geometry of forms on supermanifolds and the related integration theory present the most peculiar non-trivial novelties: the failure of a trivial generalization of Poincaré duality leads to the introduction of a new kind of forms, which are crucial for the purpose of a meaningful integration theory and for these reason are called integral forms. On the other hand, globally, complex supermanifolds can be non-split or non-projected: this means that they cannot be reconstructed from ordinary geometric data, but instead they are genuinely new geometric spaces living a life of their own. \\
\noindent In this paper we relate these two aspects, by starting from the geometry of forms arising from a BV supermanifold of the type of $\cat{M}_{\mathpzc{BV}}$ above. %\emph{i.e.}\ arising as the total space of the parity shifted cotangent bundle of a certain supermanifold in any geometric category. \\%not necessarily of the kind of $\cat{M}_{\mathpzc{BRST}}$ above. \\
More precisely, the paper is organized as follows. In Section 2 we recall the main definitions of the supergeometric objects and constructions that we will use. We then address the global aspects related to the geometry of forms on BV supermanifolds in Section 3. In particular, we prove that 1-forms on $\cat{M}_{\mathpzc{BV}}$-type supermanifolds are an extension of vector bundles defined on the base supermanifold, Theorem \ref{extteo}. For real supermanifolds we show in Theorem \ref{splittingsmooth} that this extension is always split and the splitting corresponds to a reduction of the structural symplectic supergroup. More interestingly, building upon a projection to (the cohomology of) the base supermanifold of $\cat{M}_{\mathpzc{BV}}$, Theorem \ref{projection}, in the case of complex supermanifolds we show that the extension is split if and only if the super Atiyah class of the base supermanifold vanishes, corresponding to the existence of a holomorphic connection, Theorem \ref{PseudoAt}. We then show how this condition is related to the characteristic non-split or non-projected geometry of complex and algebraic supermanifolds in Theorem \ref{obsthm} and we comment on future perspectives in this context. Several examples spanning different levels of sophistication are discussed in Section 5. From a local point of view, in Section 6, we focus on the geometry related to the symplectic nature of the supermanifold $\cat{M}_{\mathpzc{BV}} $ and we show that the associated deformed de Rham double complex - whose differentials are given by the odd symplectic form and the de Rham differential - naturally arises as a de-quantization of the de Rham / Spencer double complex associated to the base supermanifold. Following \v{S}evera, we show in Theorem \ref{BVop} that the related spectral sequence yields semidensities on the odd symplectic BV supermanifold and their differential in the form of a super BV Laplacian.

\vspace{.3cm}

\noindent {\bf Acknowledgments.} The author wish to thank J. Huerta and J. Walcher for fruitful remarks and R. Re for conversations and collaboration on related projects. This work is funded by Deutsche Forschungsgemeinschaft (DFG, German Research Foundation) under Germany’s Excellence Strategy EXC-2181/1 - 390900948 (the Heidelberg STRUCTURES Cluster of Excellence).

\section{Main Definitions: Local and Global Data}

\noindent In this section we recollect the definitions of the main geometric objects that will be used throughout the paper. For thorough introductions to the geometry of supermanifolds the reader is suggest to refer to the books \cite{BR, Manin}. The recent \cite{Noja} offers a detailed account of the geometry of forms on supermanifolds. \\
We let $\mani$ be a {smooth}, {analytic} or {algebraic} supermanifold of dimension $p|q$ with structure sheaf given by $\mathcal{O}_\mani$ and we denote $\mani_{\mathpzc{red}}$ its \emph{reduced space}, which is an ordinary (commutative smooth, analytic or algebraic) manifold of dimension $p$ with structure sheaf $\mathcal{O}_{\mani_{\mathpzc{red}}}$. We will denote with $\mathcal{O}_{\mani, 0}$ and $\mathcal{O}_{\mani, 1}$ the even, respectively odd part of the structure sheaf with respect to its $\mathbb{Z}_2$-gradation, and likewise for other sheaves or vector bundles introduced in the following. We define $\cat{T}_\mani$ be the tangent sheaf of $\mani$. This is a locally-free sheaf of (left) $\mathcal{O}_\mani$-modules of rank $p|q$: if we let $U$ be an open set in the topological space $|\mani_{\mathpzc{red}}|$ underlying $\mani$ and $x_a \defeq z_i | \theta_\alpha$ for $i = 1, \ldots, p$ and $\alpha = 1, \ldots, q$ be a system of local coordinates over $U$ for $\mani,$ then
\bear
\cat{T}_\mani (U) = \mathcal{O}_{\mani} (U) \cdot \{\partial_{z_1}, \ldots, \partial_{z_p} | \partial_{\theta_1}, \ldots, \partial_{\theta_q} \}, 
\eear
where $\mathcal{O}_\mani$ is the structure sheaf of $\mani$ and the local generators $\partial_{z_i}$'s are even and the $\partial_{\theta_\alpha}$'s are odd. Given the tangent sheaf as defined above we can immediately introduce two related sheaves. The first one, is the cotangent sheaf $\cat{T}^\ast_{\mani}$, which is the \emph{dual} of $\cat{T}_\mani$, \emph{i.e.}\ $\cat{T}^\ast_\mani \defeq \mathcal{H}om_{\mathcal{O}_{\mani}} (\cat{T}_\mani, \mathcal{O}_\mani)$. The second one is the \emph{parity shifted} tangent sheaf $\Pi \cat{T}_\mani$, which is a locally-free sheaf of $\mathcal{O}_\mani$-module of rank $q|p$. With reference to the above trivialization over $U$, the parity shifted tangent sheaf is locally generated as follows:
\bear
\Pi \cat{T}_\mani (U) = \mathcal{O}_{\mani} (U) \cdot \{\pi \partial_{\theta_1}, \ldots, \pi \partial_{\theta_q} | \pi \partial_{z_1}, \ldots, \pi \partial_{z_p} \}, 
\eear
where we stress that the local generators $\pi \partial_{\theta}$'s are even and the $\pi \partial_{z}$'s are odd. Sections of the parity-shifted tangent sheaf are called $\Pi$-vector fields or vector fields for short. We call the \emph{dual} of the parity-shifted tangent sheaf the \emph{sheaf of $1$-forms} on $\mani$ and we denote it as $\Omega^1_{\mani} \defeq \Pi \cat{T}^\ast_\mani$. This is a locally-free sheaf of (right) $\mathcal{O}_\mani$-modules of rank $q|p$: over an open set $U$ one has the trivialization 
\bear
\Omega^1_{\mani} (U) = \{d\theta_1, \ldots, d\theta_q | dz_1, \ldots, dz_p \} \cdot \mathcal{O}_\mani (U),
\eear 
where the local generators $d\theta$'s are even and the $dz$'s are odd. The duality paring between $\Omega^1_\mani$ and $\cat{T}_\mani$ over $U$ reads $dx_a (\pi \partial_{x_b}) = \delta_{ab}$ for any $a,b$ both even or odd. The name sheaf of $1$-forms is in some sense justified by the fact that in supergeometry it is customary take the de Rham differential to be an odd morphism, so that one indeed has a sheaf morphism $d : \mathcal{O}_\mani \rightarrow \Omega^1_\mani$ which satisfies the $\mathbb{Z}_2$-graded Leibniz rule. Application of the (super)symmetric power functor $\cat{S}^\bullet : \mathbf{Sh}^{\mathbf{Mod}}_{\mathcal{O}_{\mani}} \rightarrow \mathbf{Sh}^{\mathbf{Alg}}_{\mathcal{O}_{\mani}}$ to the sheaf of 1-forms $\Omega^1_{\mani}$ and to $\Pi \cat{T}_{\mani}$ yields respectively the algebra of forms and $\Pi$-vector fields on $\mani.$ A section of $\Omega^k_\mani \defeq \cat{S}_{\mathcal{O}_{\mani}}^k \Omega^1_{\mani}$ is called a $k$-form and a section of $\cat{S}^k \Pi\cat{T}_\mani$ is called a ($k$-)$\Pi$-polyfield or $k$-polyfield for short. In this context the de Rham differential lifts to the \emph{exterior derivative} $d: \Omega^k_{\mani} \rightarrow \Omega^{k+1}_\mani$, which is an odd derivation of $\Omega^\bullet_\mani$, \emph{i.e.} it obeys the $\mathbb{Z}_2$-graded Leibniz rule in the form
\bear
d (\omega \eta ) = d \omega  \eta + (-1)^{|\omega|} \omega d \eta, 
\eear 
where $\omega$ and $\eta$ are two generic forms in $\Omega^\bullet_\mani$ and where we have denoted $|\omega|$ the $\mathbb{Z}_2$-degree, henceforth \emph{parity} of $\omega$. Further, it is easy to see that the exterior derivative is nilpotent: the pair $(\Omega_\mani^\bullet, d)$ defines a sheaf of $dg$-algebras, the \emph{de Rham complex} of $\mani$. \\
As it is well-known, differential forms are not suitable for integration in a supergeometric setting \cite{Manin}: this leads to the introduction of a second complex, which is \virgolette dual'' to the de Rham complex. This is the so-called complex of \emph{integral forms}, which are defined as sections of the tensor product of sheaves $\mathcal{B}er (\mani) \otimes_{\mathcal{O}_\mani} (\Omega^\bullet_\mani)^\ast,$ where $\mathcal{B}er (\mani) \defeq \mathcal{B}er (\Omega^1_\mani)^\ast$ is the Berezinian sheaf of $\mani$, see \cite{Manin}, \cite{NojaRe}, \cite{Noja}, which substitutes the notion of canonical sheaf of an ordinary manifold and whose sections are tensor densities. The interested reader can refer to \cite{Manin} or the recent \cite{CNR} and \cite{Noja} for a construction \emph{ab initio} of the differential for integral forms.\\
After this preliminary recollections of definitions, conventions and notations we introduce one of the main object under study in this paper.
\begin{definition}[The BV Supermanifold $\cat{M}$] \label{defM} Let $\mani$ be a smooth, analytic or algebraic supermanifold of dimension $p|q$ and let $\Omega^1_{\mani}$ be its sheaf of $1$-forms. We call $\cat{M} \defeq \cat{Tot} (\Omega^1_\mani \stackrel{\pi}{\longrightarrow} \mani)$ the $p+q | p + q$-dimensional supermanifold defined as a ringed space by the pair $(|\cat{M}_{\mathpzc{red}}|, \mathcal{O}_{\tiny{\cat{M}}})$, where the topological space $|\cat{M}_{\mathpzc{red}}|$ is given by the total space $|\cat{M}_{\mathpzc{red}}| \defeq \cat{Tot} ( (\Pi \cat{T}^\ast_{\mani})_0 \stackrel{\tilde \pi}{\longrightarrow} \mani_{\mathpzc{red}})$ of the vector bundle $(\Pi \cat{T}^\ast_\mani)_0 \stackrel{\tilde\pi}{\rightarrow} \mani_{\mathpzc{red}}$ of rank $p+q$ endowed with its canonical topology, %$|\cat{M}_{\mathpzc{red}}| \defeq \bigsqcup_{x \in \mani} (\Pi \cat{T}^\ast_{\mani, x})_0$, \emph{i.e.}\ the total space of even part of the sheaf of 1-forms on $\mani$, 
and the structure sheaf $\mathcal{O}_{\tiny{\cat{M}}}$ is defined as $\mathcal{O}_{\tiny{\cat{M}}} \defeq (\Omega^\bullet_\mani)^\ast$, where $(\Omega^\bullet_{\mani})^\ast$ is taken with its $\mathcal{O}_{\mani}$-module structure.% \emph{i.e.}\ functions on $\cat{M}$ are polynomial functions on the fibers $\cat{M}_x = \Pi \cat{T}^\ast_{\mani, x}.$
\end{definition}
\begin{remark} Notice that the canonical topology on the total space of a vector bundle is defined locally via the product topology, and then glueing along the transition functions: the related quotient topology is the desired topology on the total space of the vector bundle.  
\end{remark}
\begin{remark} The previous definition says that \virgolette functions'' on $\cat{M}$ are \emph{polynomial functions} on the fibers $\cat{M}_x = \Pi \cat{T}^\ast_{\mani, x}$, \emph{i.e.} they are polyfields having shifted parity. Locally, on open sets of the kind $\pi^{-1} (U) \cong U \times \mathbb{K}^{p+q}$ for $U$ an open set in $|\mani_{\mathpzc{red}}|$, one has that 
\bear
\mathcal{O}_{\cat{\tiny{M}}} (\pi^{-1} (U)) \defeq \mathcal{O}_\mani (U) \otimes_{\mathbb{K}}\mathbb{K}[\mathpzc{F}_1, \ldots \mathpzc{F}_{p+q}],
\eear  
for even and odd fiber coordinates $\{\mathpzc{F}_i \}_{i = 1, \ldots, p+q}$ and $\mathbb{K}$ the field of real or complex numbers. In the following subsection we will give an explicit local description via transition functions.  
\end{remark}
\begin{remark} A notational remark is in order. Indeed, in the introduction of the paper we have denoted the supermanifold $\cat{M}$ defined above in \ref{defM} with $\cat{M}_{\mathpzc{BV}}$, to distinguish it from the supermanifold $\cat{M}_{\mathpzc{BRST}},$ arising in the context of the BRST formalism. In the following we will always consider the supermanifold $\cat{M} = \cat{M}_{\mathpzc{BV}}.$
\end{remark}

%\noindent {\bf Domande: \\
%Definizione dello spazio topologico... prendo quello o $\Omega^1_\mani$?
%Forse bisogna aggiungere dipendenza da \mani... es: $\cat{M}_{\mani}$
%}

\subsection*{Local Description} It is worth to provide a {local} description of $\cat{M}$ in terms of transition functions of its local coordinates. We let $(U, x_a)$ be a local chart for the $p|q$-dimensional supermanifold \mani, where we stress that the index $a$ spans both even and odd local coordinates. Then $(\pi^{-1} (U), x_a, p_a)$ is a chart for $\cat{M}$ with 
\bear
p_a \defeq (-1)^{|x_a|} \partial_{dx_a}.
\eear
The following is immediate.
\begin{lemma}[Transition Functions of $\cat{M}$] \label{transM} Let $(U, x_a)$ and $(V, z_b)$ be two charts on $\mani$ with $U \cap V \neq \emptyset$, and let $(\pi^{-1}(U), x_a, p_a)$ and $(\pi^{-1} (V), z_b, q_b)$ the corresponding open sets on $\emph{\cat{M}}$. Then the transition functions of $\cat{\emph{M}}$ read
\bear
x_a = z_a (x), \qquad p_a = (-1)^{|x_a| + |z_b|} \left ( \frac{\partial z_b}{\partial x_a} \right ) q_b.
\eear
\end{lemma}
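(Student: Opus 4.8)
The plan is to obtain the rule by dualizing the transformation law of the local frame of $1$-forms, which in turn follows from applying the de Rham differential to the base transition functions. First I would determine how the frame $\{dx_a\}$ of $\Omega^1_\mani$ over $\pi^{-1}(U)$ is related to the frame $\{dz_b\}$ over $\pi^{-1}(V)$ on the overlap. Since $d$ is an odd derivation satisfying the graded Leibniz rule and each $z_b$ is a function of the $x_a$, the graded chain rule gives, in the right-module convention $\Omega^1_\mani(U) = \{dx_a\}\cdot\mathcal{O}_\mani(U)$ fixed in Section 2,
\[
dz_b = \sum_a (dx_a)\,\frac{\partial z_b}{\partial x_a}.
\]
This is the only genuinely geometric input; everything else is bookkeeping.

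Next, I would use that by Definition \ref{defM} the structure sheaf $\mathcal{O}_{\cat{M}} = (\Omega^\bullet_\mani)^\ast$ is generated as an $\mathcal{O}_\mani$-algebra by the derivations $\partial_{dx_a}$ dual to the symmetric generators $dx_a$, and that the fiber coordinates are precisely $p_a = (-1)^{|x_a|}\partial_{dx_a}$ and $q_b = (-1)^{|z_b|}\partial_{dz_b}$. Dualizing the linear relation above --- equivalently, applying the chain rule to the derivations with respect to the form variables, for which $\partial(dz_b)/\partial(dx_a) = \partial z_b/\partial x_a$ --- yields
\[
\partial_{dx_a} = \sum_b \frac{\partial z_b}{\partial x_a}\,\partial_{dz_b}.
\]
Substituting the definitions of $p_a$ and $q_b$ and commuting the scalar signs through the coefficients then gives
\[
p_a = (-1)^{|x_a|}\sum_b \frac{\partial z_b}{\partial x_a}\,\partial_{dz_b} = \sum_b (-1)^{|x_a|+|z_b|}\,\frac{\partial z_b}{\partial x_a}\,q_b,
\]
which is the claimed formula, the base component $x_a = z_a(x)$ being unchanged. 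As a consistency check I would verify parities: $p_a$ is of parity $|x_a|+1$, while the right-hand term is of parity $(|z_b|+|x_a|)+(|z_b|+1)\equiv|x_a|+1 \pmod 2$.

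The only real subtlety, and the step I would treat most carefully, is the sign bookkeeping in the $\mathbb{Z}_2$-graded setting. Because the de Rham differential is odd, the generators $dx_a$ carry the shifted parity $|x_a|+1$, so one must be consistent about whether module coefficients and dual derivations act from the left or from the right, and about the explicit signs built into $p_a = (-1)^{|x_a|}\partial_{dx_a}$ and into the duality pairing $dx_a(\pi\partial_{x_b}) = \delta_{ab}$. Once a single convention is fixed, the argument reduces to the direct chain-rule manipulation above, which is precisely why the statement can be declared immediate.
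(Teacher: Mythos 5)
Your proof is correct and follows essentially the same route as the paper's: apply the de Rham differential to the base transition functions to get $dz_b = dx_a\,(\partial z_b/\partial x_a)$, dualize this to obtain $\partial_{dx_a} = (\partial z_b/\partial x_a)\,\partial_{dz_b}$, and then substitute the definitions $p_a = (-1)^{|x_a|}\partial_{dx_a}$ and $q_b = (-1)^{|z_b|}\partial_{dz_b}$ to collect the sign $(-1)^{|x_a|+|z_b|}$. Your added parity consistency check is a harmless bonus not present in the paper.
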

\begin{proof} The first ones are obvious, being the transition functions on $\mani$. For the latters, it is enough to observe that from $d z_b = dx_a (\partial_{x_a} z_b)$ it follows that
\begin{align}
p_a & = (-1)^{|x_a|} \partial_{dx_a} =  (-1)^{|x_a|} \partial_{dx_a} \left ( dx_c \frac{ \partial z_b}{\partial x_c} \right ) \partial_{dz_b} = (-1)^{|x_a| + |z_b|}  \left ( \frac{\partial z_b}{\partial x_a} \right ) q_b,
\end{align}
where we have made use of the definition of $q_b$ in the last step.
\end{proof}

\section{The Geometry of Forms: Split and Non-Split Extensions}

\noindent We now study the geometry of the cotangent sheaf $\Omega^1_{\tiny{\cat{M}}}$ of the supermanifold $\cat{M}$. Note that this is a locally-free sheaf of $\mathcal{O}_{\tiny{\cat{M}}}$-modules of rank  $p+q | p+q$. We can characterize its transition functions thanks to Lemma \ref{transM}.
\begin{lemma}[Transition Functions of $\Omega^1_{\tiny{\cat{M}}}$] \label{transCotM} Let $\emph{\cat{M}}$ be defined as above and let $(dx_a, dp_a)$ and $(dz_b, dq_b)$ be two local bases of $\Omega^1_{\tiny{\cat{\emph{M}}}}$ on the open sets $\pi^{-1} (U)$ and $\pi^{-1}(V)$ on $\emph{\cat{M}}$ with $U \cap V \neq \emptyset $. Then the transition functions of $\Omega^1_{\tiny{\cat{\emph{M}}}}$ read
\begin{align} \label{tfcot1}
dx_a = dz_b \left ( \frac{\partial x_a}{\partial z_b} \right ), 
\end{align}
\begin{align} \label{tfcot2}
&dp_a = \left (\frac{\partial z_b}{\partial x_a } \right ) dq_b + (-1)^{|x_a| + |z_b|} d \left ( \frac{\partial z_b}{\partial x_a} \right ) q_b. 
\end{align}
\end{lemma}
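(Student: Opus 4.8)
The plan is to obtain both identities simply by applying the exterior derivative $d$ to the coordinate transition functions established in Lemma \ref{transM}, and then reading off the resulting relations among the induced local bases of $\Omega^1_{\tiny{\cat{M}}}$. Since $d$ is an odd derivation satisfying the $\mathbb{Z}_2$-graded Leibniz rule, the whole argument reduces to careful sign bookkeeping governed by the parities of the Jacobian entries, together with respecting the right $\mathcal{O}_{\tiny{\cat{M}}}$-module structure of $1$-forms.

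For the first identity \eqref{tfcot1}, I would regard $x_a = x_a(z)$ as a function on the overlap $\pi^{-1}(U)\cap\pi^{-1}(V)$ and differentiate it in the convention used throughout the excerpt, namely $df = dz_b\,(\partial_{z_b} f)$, with the coefficient written on the right. This gives immediately $dx_a = dz_b\,(\partial x_a/\partial z_b)$, which is exactly \eqref{tfcot1}; equivalently, one may invert the chain-rule relation $dz_b = dx_a\,(\partial z_b/\partial x_a)$ already recorded in the excerpt.

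For the second identity \eqref{tfcot2}, I would start from $p_a = (-1)^{|x_a| + |z_b|}(\partial z_b/\partial x_a)\,q_b$ and apply $d$. The key point is that the Jacobian entry $\partial z_b/\partial x_a$ is homogeneous of parity $|x_a| + |z_b|$, since differentiation by $x_a$ shifts parity by $|x_a|$ and $z_b$ carries parity $|z_b|$. Hence the graded Leibniz rule yields
\[
d\!\left[\left(\frac{\partial z_b}{\partial x_a}\right) q_b\right] = d\!\left(\frac{\partial z_b}{\partial x_a}\right) q_b + (-1)^{|x_a|+|z_b|}\left(\frac{\partial z_b}{\partial x_a}\right) dq_b,
\]
and multiplying through by the prefactor $(-1)^{|x_a|+|z_b|}$ the second term acquires $(-1)^{2(|x_a|+|z_b|)} = 1$, while the first keeps a single factor $(-1)^{|x_a|+|z_b|}$. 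Reordering the two summands reproduces exactly \eqref{tfcot2}.

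The computation itself is routine; the only delicate step — and the one I would treat as the main obstacle — is the sign tracking. One must verify that the parity assigned to $\partial z_b/\partial x_a$ is consistent both with the prefactor inherited from Lemma \ref{transM} and with the sign produced by commuting the odd derivation $d$ past it, so that the two contributions of $(-1)^{|x_a|+|z_b|}$ collapse to unity in the $dq_b$ term while surviving in the $q_b$ term. I would sanity-check this by specializing to the four parity combinations of $(x_a, z_b)$ to confirm that no stray signs appear.
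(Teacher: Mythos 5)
Your proposal is correct and takes essentially the same route as the paper: both proofs differentiate the transition relation $p_a = (-1)^{|x_a|+|z_b|}\left(\frac{\partial z_b}{\partial x_a}\right) q_b$ from Lemma \ref{transM}, the paper by expanding $dp_a = dz_b\left(\frac{\partial p_a}{\partial z_b}\right) + dq_b\left(\frac{\partial p_a}{\partial q_b}\right)$ and computing the two partial derivatives, you by applying the graded Leibniz rule directly to the product --- the two resulting summands are identical. Your sign bookkeeping (parity $|x_a|+|z_b|$ for the Jacobian entry, with the doubled sign collapsing to unity in the $dq_b$ term) is exactly what the paper's computation of the two summands amounts to.
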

\begin{proof} The first ones are obvious. For the transition functions of the $dp$'s we observe that we have
\bear
d p_a = dz_b \left ( \frac{\partial p_a}{\partial z_b} \right ) + dq_b \left ( \frac{\partial p_a}{\partial q_b} \right ). 
\eear
The first summand reads
\begin{align}
dz_b \frac{\partial p_a }{\partial z_b} & = dz_b \frac{\partial}{\partial z_b} \left ( (-1)^{|x_a| + |z_b|} \frac{\partial z_c}{\partial x_a} q_c \right ) = (-1)^{|x_a| + |z_b|} d \left ( \frac{\partial z_b }{\partial x_a } \right) q_b.
\end{align}
The second summands reads 
\begin{align}
dq_b \left ( \frac{\partial p_a}{\partial q_b} \right ) & =  dq_b \frac{\partial}{\partial q_b} \left ( (-1)^{|x_a| + |z_c|} \left ( \frac{\partial z_c}{\partial x_a }\right ) q_c \right )  = \frac{\partial z_b}{\partial x_a} dq_b.
%& = (-1)^{|x_a| + |x_b| + (|z_b| + 1)(|z_b| + |y_a|) + |z_b|(|z_b| + |x_a|)} \frac{\partial z_b}{ \partial x_a} dq_b \nonumber \\ 
\end{align}
%thus concluding the proof.
\end{proof}
\noindent The previous lemma describes $\Omega^1_{\tiny{\cat{M}}}$ locally in terms of its transition functions, but it yields informations also on its global geometry, as the following shows. 
\begin{theorem}[$\Omega^1_{\tiny{\cat{M}}}$ as Extension of Vector Bundles] \label{extteo}Let $\emph{\cat{M}}$ be defined as above. Then the canonical exact sequence 
\bear
\xymatrix{
0 \ar[r] & \pi^{\ast} \Omega^1_{\mani} \ar[r] & \Omega^{1}_{\tiny{{\cat{\emph{M}}}}} \ar[r] & \Omega^1_{\tiny{\cat{\emph{M}}} / \mani} \ar[r] & 0,
}
\eear
induces the isomorphism $\Omega^1_{\tiny{\cat{\emph{M}}} / \mani} \cong \pi^\ast \emph{\cat{T}}_\mani$. In particular, $\Omega^1_{\tiny{\cat{\emph{M}}}}$ is an extension of locally-free sheaves 
\bear \label{extomega}
\xymatrix{
0 \ar[r] & \pi^{\ast} \Omega^1_{\mani} \ar[r] & \Omega^{1}_{\tiny{{\cat{\emph{M}}}}} \ar[r] & \pi^\ast \emph{\cat{T}}_\mani \ar[r] & 0.
}
\eear
\end{theorem}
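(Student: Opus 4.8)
The plan is to realize the displayed sequence as the \emph{relative cotangent sequence} of the projection $\pi \colon \cat{M} \to \mani$ and then to compute the relative term explicitly from the transition functions of Lemma \ref{transCotM}. For any morphism of supermanifolds there is a canonical right-exact sequence $\pi^\ast \Omega^1_{\mani} \to \Omega^1_{\cat{M}} \to \Omega^1_{\cat{M}/\mani} \to 0$, in which the first arrow sends the generator $\pi^\ast dx_a$ to the differential $dx_a$ on $\cat{M}$ and $\Omega^1_{\cat{M}/\mani}$ is by definition the cokernel, i.e.\ differentials modulo those pulled back from the base. First I would check exactness on the left: since $\cat{M}$ is the total space of a bundle it is locally a product, $\pi^{-1}(U) \cong U \times \mathbb{K}^{p+q}$, so over each such chart the sequence splits as $\Omega^1_{\cat{M}}|_{\pi^{-1}(U)} \cong \pi^\ast \Omega^1_{\mani} \oplus \Omega^1_{\cat{M}/\mani}$, the first summand being free on the $dx_a$ and the second free on the classes $[dp_a]$ of the fibre differentials. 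The local splitting gives injectivity of the left arrow, hence exactness of the full sequence, and simultaneously shows that $\Omega^1_{\cat{M}/\mani}$ is locally free of rank $p|q$.

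It then remains to identify $\Omega^1_{\cat{M}/\mani}$ with $\pi^\ast \cat{T}_\mani$, and the key computation is the transition law of the local frame $[dp_a]$. Reading off \eqref{tfcot2} one has $dp_a = (\partial z_b/\partial x_a)\, dq_b + (-1)^{|x_a|+|z_b|} d(\partial z_b/\partial x_a)\, q_b$. The crucial observation is that the inhomogeneous second term is harmless in the quotient: $\partial z_b/\partial x_a$ is a function pulled back from $\mani$, so $d(\partial z_b/\partial x_a)$ is a combination of the base differentials $dx_c$ with coefficients in $\mathcal{O}_{\mani}$, hence lies in the image of $\pi^\ast \Omega^1_{\mani}$. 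Passing to $\Omega^1_{\cat{M}/\mani}$ it vanishes, leaving the purely tensorial law $[dp_a] = (\partial z_b / \partial x_a)\, [dq_b]$.

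I would then compare this with $\pi^\ast \cat{T}_\mani$. The tangent frames transform by the super chain rule as $\partial_{x_a} = (\partial z_b / \partial x_a)\, \partial_{z_b}$, with the Jacobian acting on the left exactly as above; pulling back along $\pi$ leaves this cocycle unchanged. Hence $[dp_a] \mapsto \pi^\ast \partial_{x_a}$ is a well-defined isomorphism of locally free sheaves, and substituting $\Omega^1_{\cat{M}/\mani} \cong \pi^\ast \cat{T}_\mani$ into the sequence yields \eqref{extomega}. The point that genuinely requires care -- and where I expect the main bookkeeping effort -- is the consistency of signs and parities: one must verify that no residual sign spoils the match of cocycles and, in particular, confirm the parity count, namely $|dp_a| = |p_a| + 1$ together with $|p_a| = |x_a| + 1$, so that $|dp_a| = |x_a| = |\partial_{x_a}|$. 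It is precisely this cancellation -- the parity shift built into $\Omega^1_{\mani} = \Pi \cat{T}^\ast_\mani$ being undone by the odd de Rham differential -- that produces the \emph{unshifted} tangent sheaf $\cat{T}_\mani$, rather than $\Pi \cat{T}_\mani$, as the quotient.
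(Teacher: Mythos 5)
Your proposal is correct and takes essentially the same route as the paper: both arguments rest on the transition functions of Lemma \ref{transCotM}, identifying the $dx_a$ as a frame of $\pi^\ast \Omega^1_{\mani}$, the (class of the) $dp_a$ --- i.e.\ the first summand of \eqref{tfcot2} --- as the cocycle of $\pi^\ast \cat{T}_\mani$, and the inhomogeneous second summand as the off-diagonal extension datum. The paper's proof is terser, reading the block-triangular structure directly off the transition matrix, while you additionally spell out the left-exactness of the relative cotangent sequence, the vanishing of the inhomogeneous term in the quotient, and the parity bookkeeping that the paper leaves implicit.
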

\begin{proof} We work in the same setting of Lemma \ref{transCotM}. We first observe that the transformations of equation \eqref{tfcot1} identify the sections $(dx_a)$'s as a local basis of $\pi^\ast \Omega^1_{\mani}$ (notice the slight abuse of notation). The first summand in the transformations given by equation \eqref{tfcot2} identify the transformations of the parity-reversed dual of $\pi^\ast \Omega^1_{\mani}$, as the $dp_a$ have opposite parity with respect to the $dx_a$. This is hence identified with $\pi^\ast \cat{T}_\mani.$ The second summand in \eqref{tfcot2} gives the off-diagonal term of the extension of $\pi^\ast \cat{T}_{\mani}$ with $\pi^\ast \Omega^1_\mani.$
%$ \{ \partial_{x_a} \}$ to the basis $\{\partial_{z_b} \}$ of $\pi^\ast \cat{T}_\mani$ in the open set $U$ and $V$ respectively, with $U \cap V \neq \emptyset.$
 \end{proof}
%\begin{definition}[$\Omega^1_{\tiny{\cat{M}}}$-extension]
%We call $\Omega^1_{\tiny{\cat{M}}}$\emph{-extension} the short exact sequence \eqref{extomega}. 
%\end{definition} 

\noindent It follows from the previous theorem that in order to study the geometry of $\Omega^1_{\tiny{\cat{M}}}$ one needs to consider the cohomology group 
\bear \label{extprimo}
Ext^1_{\mathcal{O}_{\cat{\tiny{{M}}}}} (\pi^\ast \cat{T}_{\mani}, \pi^\ast \Omega^1_{\mani}) \cong H^1 (|\cat{M}_{\mathpzc{red}}|, \mathcal{H}om_{\mathcal{O}_{\cat{\tiny{M}}}} (\pi^\ast \cat{T}_{\mani}, \pi^\ast \Omega^1_{\mani} )),
\eear
which controls the splitting of the exact sequence \eqref{extomega}, that will be called $\Omega^1_{\tiny\cat{M}}$\emph{-extension} in the rest of the paper. For ease of reading, we have deferred to the appendix a very concrete construction of the $Ext$-group related to an extension, which highlights the structure of the representatives in the above cohomology group in terms of the transition functions of the vector bundles involved. As we shall see, this concrete approach will play a significant role in what follows.\\

\noindent It is convenient to re-express this Ext-group appearing in \eqref{extprimo} as a cohomology group computed on the supermanifold $\mani$ - and hence on $\mani_{\mathpzc{red}}$ - instead of $\cat{M}$. 
\begin{theorem}[Projection to $\mani$] \label{projection} Let $\mani$ be a smooth, analytic or algebraic supermanifold, and let $\cat{\emph{M}} $ be constructed as above with $\pi : \cat{\emph{M}} \rightarrow \mani$ its projection map. Then one has the following natural isomorphism
\bear
Ext^1_{\mathcal{O}_{\cat{\tiny{\emph{M}}}}} (\pi^\ast \cat{\emph{T}}_{\mani} , \pi^\ast \Omega^1_{\mani}) \cong H^1 (|\mani_{\mathpzc{red}}|, \cat{\emph{T}}_{\mani}^\ast \otimes_{\mathcal{O}_{\mani}} \mathcal{E}nd_{\mathcal{O}_\mani} (\cat{\emph{T}}_{\mani} )).
\eear
\end{theorem}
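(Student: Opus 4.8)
The plan is to push the computation of the $Ext$ group down from $\cat{M}$ to the base $\mani$, exploiting that the projection $\pi\colon\cat{M}\to\mani$ is an affine morphism. Starting from the identification \eqref{extprimo} and using that $\pi^\ast\cat{T}_\mani$ is locally free, so that the inner $\mathcal{H}om$ commutes with pullback, I would first rewrite
\[
\mathcal{H}om_{\mathcal{O}_{\cat{M}}}(\pi^\ast\cat{T}_\mani,\pi^\ast\Omega^1_\mani)\;\cong\;\pi^\ast\mathcal{H}om_{\mathcal{O}_\mani}(\cat{T}_\mani,\Omega^1_\mani)\;=\;\pi^\ast\mathcal{G},\qquad \mathcal{G}:=\cat{T}^\ast_\mani\otimes_{\mathcal{O}_\mani}\Omega^1_\mani .
\]
Since $\cat{M}=\cat{Tot}(\Omega^1_\mani\to\mani)$ is the total space of a vector bundle, $\pi$ is affine: the higher direct images of a quasi-coherent sheaf vanish (the fibres are cohomologically trivial) and the Leray spectral sequence collapses, giving $H^1(|\cat{M}_{\mathpzc{red}}|,\pi^\ast\mathcal{G})\cong H^1(|\mani_{\mathpzc{red}}|,\pi_\ast\pi^\ast\mathcal{G})$.

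Second, I would compute the direct image. The key input is $\pi_\ast\mathcal{O}_{\cat{M}}=\cat{S}^\bullet_{\mathcal{O}_\mani}\Pi\cat{T}_\mani$, which is immediate from Definition \ref{defM}: by construction $\mathcal{O}_{\cat{M}}=(\Omega^\bullet_\mani)^\ast=\cat{S}^\bullet_{\mathcal{O}_\mani}\big((\Omega^1_\mani)^\ast\big)$ and $(\Omega^1_\mani)^\ast=\Pi\cat{T}_\mani$. The projection formula then yields the weight decomposition in the fibre coordinates
\[
\pi_\ast\pi^\ast\mathcal{G}\;\cong\;\mathcal{G}\otimes_{\mathcal{O}_\mani}\pi_\ast\mathcal{O}_{\cat{M}}\;=\;\bigoplus_{k\ge 0}\mathcal{G}\otimes_{\mathcal{O}_\mani}\cat{S}^k_{\mathcal{O}_\mani}\Pi\cat{T}_\mani .
\]
On the weight-one summand the two parity shifts cancel, $\Pi\cat{T}^\ast_\mani\otimes_{\mathcal{O}_\mani}\Pi\cat{T}_\mani\cong\cat{T}^\ast_\mani\otimes_{\mathcal{O}_\mani}\cat{T}_\mani=\mathcal{E}nd_{\mathcal{O}_\mani}(\cat{T}_\mani)$, so that
\[
\mathcal{G}\otimes_{\mathcal{O}_\mani}\cat{S}^1_{\mathcal{O}_\mani}\Pi\cat{T}_\mani=\cat{T}^\ast_\mani\otimes_{\mathcal{O}_\mani}\Pi\cat{T}^\ast_\mani\otimes_{\mathcal{O}_\mani}\Pi\cat{T}_\mani\;\cong\;\cat{T}^\ast_\mani\otimes_{\mathcal{O}_\mani}\mathcal{E}nd_{\mathcal{O}_\mani}(\cat{T}_\mani),
\]
which is precisely the sheaf on the right-hand side of the statement.

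Finally, I would isolate this weight-one summand as the one carrying the extension data, and this is where Lemma \ref{transCotM} does the work: the off-diagonal term of \eqref{tfcot2}, namely $(-1)^{|x_a|+|z_b|}d(\partial z_b/\partial x_a)\,q_b$, is manifestly $\mathcal{O}_\mani$-linear in a single fibre coordinate $q_b$, so the \v{S}evera-style \v{C}ech cocycle representing the $\Omega^1_{\cat{M}}$-extension \eqref{extomega} takes values in $\mathcal{G}\otimes_{\mathcal{O}_\mani}\Pi\cat{T}_\mani$ and in no higher weight. Tracking this class through the identifications above produces the asserted natural isomorphism with $H^1(|\mani_{\mathpzc{red}}|,\cat{T}^\ast_\mani\otimes_{\mathcal{O}_\mani}\mathcal{E}nd_{\mathcal{O}_\mani}(\cat{T}_\mani))$. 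I expect the main obstacle to be exactly this last bookkeeping step: one must verify that, after descent to $\mani$, the extension-controlling class is concentrated in the linear (weight-one) part of $\pi_\ast\pi^\ast\mathcal{G}$, and one must keep the super sign conventions consistent through the parity-shift cancellation $\Pi\cat{T}^\ast_\mani\otimes\Pi\cat{T}_\mani\cong\mathcal{E}nd_{\mathcal{O}_\mani}(\cat{T}_\mani)$, so that the resulting isomorphism is canonical rather than an abstract coincidence of cohomology groups.
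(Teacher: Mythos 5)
Your proposal follows the paper's own proof essentially step for step: commuting $\mathcal{H}om$ with $\pi^\ast$ for locally free sheaves, using affineness of $\pi$ to collapse the Leray spectral sequence, applying the projection formula with $\pi_\ast\mathcal{O}_{\tiny{\cat{M}}} \cong (\Omega^\bullet_\mani)^\ast = \cat{S}^\bullet_{\mathcal{O}_\mani}\Pi\cat{T}_\mani$, isolating the weight-one summand via the linearity in the fibre coordinates of the off-diagonal term in \eqref{tfcot2}, and cancelling the two parity shifts to land on $\cat{T}^\ast_\mani \otimes_{\mathcal{O}_\mani} \mathcal{E}nd_{\mathcal{O}_\mani}(\cat{T}_\mani)$. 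The only difference is presentational — you make the weight decomposition of $\pi_\ast\pi^\ast\mathcal{G}$ and the final bookkeeping caveat explicit, which the paper leaves more implicit — so this is the same argument.
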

\begin{proof} First, notice that
\bear
Ext^1_{\mathcal{O}_{\cat{\tiny{{M}}}}} (\pi^\ast \cat{T}_{\mani} , \pi^\ast \Omega^1_{\mani}) \cong H^1 (|\cat{M}_{\mathpzc{red}}|, \pi^\ast \mathcal{H}om_{\mathcal{O}_\mani} (\cat{{T}}_\mani, \Omega^1_{\mani})).
\eear
Since in the given hypotheses, $\pi : \cat{M} \rightarrow \mani$ is an affine morphism, then by Leray's spectral sequence
\bear
H^1 (|\cat{M}_{\mathpzc{red}}|, \pi^\ast \mathcal{H}om_{\mathcal{O}_\mani} (\cat{{T}}_\mani, \Omega^1_{\mani})) \cong H^1 (|\mani_{\mathpzc{red}}|, \pi_\ast \pi^\ast \mathcal{H}om_{\mathcal{O}_\mani} (\cat{{T}}_\mani, \Omega^1_{\mani})). 
\eear
Finally, by projection formula applied to $R^i \pi_{\ast} - $ in the case $i=0$ (see \cite{Har} page 253) we have 
\bear
H^1 (|\mani_{\mathpzc{red}}|, \pi_\ast \pi^\ast \mathcal{H}om_{\mathcal{O}_\mani} (\cat{{T}}_\mani, \Omega^1_{\mani})) \cong H^1 (|\mani_{\mathpzc{red}}|, \mathcal{H}om_{\mathcal{O}_\mani} (\cat{{T}}_\mani, \Omega^1_{\mani}) \otimes_{\mathcal{O}_\mani} \pi_\ast \mathcal{O}_{\tiny{\cat{M}}}).
\eear
Further, since $\pi_\ast \mathcal{O}_{\tiny{\cat{M}}} \cong (\Omega^\bullet_{\mani})^\ast$ as $\mathcal{O}_\mani$-modules, this can be rewritten as 
\bear
H^1 (|\mani_{\mathpzc{red}}|, \mathcal{H}om_{\mathcal{O}_\mani} (\cat{{T}}_\mani, \Omega^1_{\mani}) \otimes_{\mathcal{O}_\mani} \pi_\ast \mathcal{O}_{\tiny{\cat{M}}}) \cong H^1 (|\mani_{\mathpzc{red}}|, \mathcal{H}om_{\mathcal{O}_\mani} (\cat{{T}}_\mani, \Omega^1_{\mani}) \otimes_{\mathcal{O}_\mani}  (\Omega^\bullet_{\mani})^\ast).
\eear
Finally, the linear dependence on $p$ in the second summand of the \eqref{tfcot2} shows that this extension class appears in degree one only in $(\Omega^\bullet_\mani)^\ast$, \emph{i.e.}\ in the summand $H^1 (|\mani_{\mathpzc{red}}|, \mathcal{H}om_{\mathcal{O}_\mani} (\cat{{T}}_\mani, \Omega^1_{\mani}) \otimes_{\mathcal{O}_\mani} (\Omega^1_{\mani})^\ast)$ of the above direct image, so that one finds
\bear
Ext^1_{\mathcal{O}_{\cat{\tiny{M}}}} (\pi^\ast \cat{T}_{\mani} , \pi^\ast \Omega^1_{\mani}) \cong H^1 (|\mani_{\mathpzc{red}}|, \mathcal{H}om_{\mathcal{O}_\mani} (\cat{{T}}_\mani, \Omega^1_{\mani}) \otimes_{\mathcal{O}_\mani} \Pi \cat{T}_\mani). 
\eear
The conclusions follows observing that $\mathcal{H}om_{\mathcal{O}_\mani} (\cat{{T}}_\mani, \Omega^1_{\mani}) \otimes_{\mathcal{O}_\mani} \Pi \cat{T}_\mani \cong \mathcal{H}om_{\mathcal{O}_\mani} (\cat{{T}}_\mani, \Pi \cat{T}_\mani) \otimes_{\mathcal{O}_\mani}  \Omega^1_{\mani} $ and that $\Omega^1_{\mani} \defeq \Pi \cat{T}_\mani^\ast \cong \Pi \mathcal{O}_{\mani} \otimes_{\mathcal{O}_\mani} \cat{T}^\ast_{\mani}$, so that $\mathcal{H}om_{\mathcal{O}_\mani} (\cat{T}_\mani,  \Pi \mathcal{O}_\mani \otimes_{\mathcal{O}_\mani} \cat{T}_\mani )  \otimes_{\mathcal{O}_\mani} \Pi \mathcal{O}_\mani \otimes_{\mathcal{O}_\mani} \cat{T}^\ast_\mani \cong \mathcal{E}nd (\cat{T}_\mani) \otimes_{\mathcal{O}_\mani} \cat{T}^\ast_\mani.$
\end{proof}

\noindent The above theorem can be applied to smooth real supermanifolds, as to show the existence of a reduction of the structure group of $\Omega^{1}_{\tiny{{\cat{{M}}}}}$. %to the extension $\Omega^1_{\tiny{\cat{M}}}$ as in equation \eqref{extomega}, providing a possibily more explicit form of Theorem \ref{splitting}.
To this end, following \cite{Manin} (see chapter 4, section 10), we recall that the structure group of $\Omega^{1}_{\tiny{{\cat{{M}}}}}$ is given by the \emph{symplectic supergroup} $\Pi Sp (p+q|p+q)$, that can be understood as the stabilizer of the \virgolette metric'' in $H^0 (|\cat{M}_{\mathpzc{red}}|, \Omega^1_{\tiny{\cat{M}}} \otimes \Omega^1_{\tiny{\cat{M}}})$ given by the \emph{odd symplectic form} $\omega$ - whose related geometry will be discussed in Section 6, see Definition \ref{oddsympl}.

\begin{theorem}[Splitting \& Reduction of Symplectic Supergroup] \label{splittingsmooth} Let $\mani$ be a smooth supermanifold and let $\cat{\emph{M}} = \cat{\emph{Tot}} (\Omega^1_{\mani})$ be the smooth supermanifold associated to $\mani$ as defined above. Then the following are true.
\begin{enumerate}[leftmargin=*]
\item[(1)] The $\Omega^1_{\tiny{\cat{\emph{M}}}}$-extension 
\bear
\xymatrix{
0 \ar[r] & \pi^{\ast} \Omega^1_{\mani} \ar[r] & \Omega^{1}_{\tiny{{\cat{\emph{M}}}}} \ar[r] & \pi^\ast \emph{\cat{T}}_\mani  \ar[r] \ar@{->}@/_1.2pc/[l] & 0
}
\eear
is split, \emph{i.e.} $\Omega^{1}_{\tiny{{\cat{\emph{M}}}}} \cong \pi^{\ast} \Omega^1_{\mani} \oplus  \pi^\ast \emph{\cat{T}}_\mani$ non-canonically.
\item[(2)] There exists a reduction of the structure group of $\Omega^1_{\tiny{\cat{\emph{M}}}}$ as follows 
\bear
\Pi Sp (p+q|p+q) \longrightarrow \left \{ \left ( \begin{array}{c|c} 
T & \\
\hline 
& (T^{-1})^{st}
\end{array}
\right ):
T \in GL (p|q)
\right \},
\eear
where $\Pi Sp(p+q|p+q)$ is the symplectic supergroup.
\end{enumerate}
\end{theorem}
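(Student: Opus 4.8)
The plan is to deduce part (1) from the cohomological reformulation in Theorem \ref{projection} together with the fineness of the smooth structure sheaf, and then to extract part (2) from the resulting splitting by tracking the transition functions against the symplectic form $\omega$.

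For part (1), I would begin from the isomorphism furnished by Theorem \ref{projection},
\[
Ext^1_{\mathcal{O}_{\cat{M}}}(\pi^\ast \cat{T}_\mani, \pi^\ast \Omega^1_\mani) \cong H^1(|\mani_{\mathpzc{red}}|, \cat{T}_\mani^\ast \otimes_{\mathcal{O}_\mani} \mathcal{E}nd_{\mathcal{O}_\mani}(\cat{T}_\mani)),
\]
which identifies the obstruction to splitting the $\Omega^1_{\cat{M}}$-extension \eqref{extomega} with the first cohomology of a locally free $\mathcal{O}_\mani$-module on $\mani_{\mathpzc{red}}$. The decisive point in the smooth category is that $\mathcal{O}_\mani$ is a \emph{fine} sheaf: it admits smooth partitions of unity subordinate to any open cover of the paracompact space $|\mani_{\mathpzc{red}}|$. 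Consequently every $\mathcal{O}_\mani$-module, being closed under multiplication by partition-of-unity functions, is fine, hence soft, hence acyclic, so all its higher cohomology vanishes. In particular the $Ext^1$ group above is zero, the extension class vanishes, and the sequence splits; the splitting is non-canonical precisely because it depends on the choice of a splitting cochain, equivalently a partition of unity.

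For part (2), I would use the splitting just obtained to pass to local frames of $\Omega^1_{\cat{M}}$ adapted to the decomposition $\pi^\ast \Omega^1_\mani \oplus \pi^\ast \cat{T}_\mani$. In such frames the off-diagonal term of the transition functions \eqref{tfcot2} — the one proportional to $d(\partial z_b/\partial x_a)\,q_b$ that carries the extension class — is gauged away, so the transition cocycle of $\Omega^1_{\cat{M}}$ becomes block diagonal. By Theorem \ref{extteo} the first block acts on $\pi^\ast \Omega^1_\mani$ through $\partial x_a/\partial z_b$ as in \eqref{tfcot1}, while the second acts on $\pi^\ast \cat{T}_\mani$ through the Jacobian $\partial z_b/\partial x_a$ as in the split form of \eqref{tfcot2}; writing $T \in GL(p|q)$ for this Jacobian representation, the two diagonal blocks are then inverse super-transposes of one another.

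Finally I would verify that this block-diagonal cocycle indeed lands inside $\Pi Sp(p+q|p+q)$ and realizes the stated reduction. In Darboux coordinates the odd symplectic form reads (up to signs) $\omega = \sum_a dx_a\, dp_a$, pairing the two summands nondegenerately, and a block-diagonal transformation $\mathrm{diag}(A, D)$ preserves $\omega$ if and only if $D = (A^{-1})^{st}$, the super-analogue of the classical condition $D = (A^{-1})^{\mathrm{T}}$ for block-diagonal symplectic matrices. Since part (1) guarantees that adapted frames exist globally, the cocycle can be taken valued in $\{\mathrm{diag}(T, (T^{-1})^{st}) : T \in GL(p|q)\}$, which is the desired reduction. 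The main obstacle is \emph{not} part (1), which is formal once fineness of $\mathcal{O}_\mani$ is invoked, but part (2): correctly matching the two diagonal blocks under the parity-shift and dualization conventions $\Omega^1_\mani = \Pi\cat{T}^\ast_\mani$, and checking that ``block diagonal with inverse super-transpose'' is exactly the symplectic condition, keeping every super sign in the super-transpose and in the symplectic pairing consistent.
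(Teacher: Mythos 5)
Your proposal is correct and follows essentially the same route as the paper: part (1) is the projection isomorphism of Theorem \ref{projection} combined with fineness/acyclicity of sheaves of $\mathcal{O}_\mani$-modules in the smooth category, and part (2) is the observation that once the extension class vanishes the upper-triangular cocycle of Lemma \ref{transCotM} can be gauged to block-diagonal form with blocks $T$ and $(T^{-1})^{st}$. The paper packages your ``gauging away'' step as the appendix result (Theorem \ref{reductionlemma}) and leaves the symplectic-stabilizer verification implicit, whereas you spell it out inline, but the argument is the same.
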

\begin{proof} For the first point it is enough to observe that the existence of a smooth partition of unity in the smooth category leads to the exactness of the \v{C}ech cochain complex of any sheaf in degree $i > 0$, which is therefore fine, thus soft and acyclic. Applying this to $\cat{{T}}_{\mani}^\ast \otimes_{\mathcal{O}_\mani} \mathcal{E}nd_{\mathcal{O}_\mani} (\cat{{T}}_{\mani} )$ yields the conclusion, \emph{i.e.} $H^1 (|\mani_{\mathpzc{red}}|, \cat{{T}}_{\mani}^\ast \otimes_{\mathcal{O}_\mani} \mathcal{E}nd_{\mathcal{O}_\mani} (\cat{{T}}_{\mani} )) = 0$.\\
The second point follows from the first one and Theorem \ref{reductionlemma} in the Appendix, which generalizes to the $\mathbb{Z}_2$-graded context. To this end it is enough to observe that the structure of the transition functions as in \eqref{transomega} follows from Lemma \ref{transCotM}. %this leads to transition functions of the claimed form.
\end{proof}
\begin{remark} It is to be noted that the above splitting is non-canonical. It would be interesting to see if it is possibile to classify or provide constraints on smooth supermanifolds such that the above reduction of the structure group is possible via a suitable choice of charts.
\end{remark}
%\begin{remark} In physical applications the splitting of the $\Omega^1_{\tiny{\cat{M}}}$-extension is often left understood, and forms are indeed considered as given by the direct sum of those coming from the base supermanifold and those coming from the fiber.
%\end{remark}

\section{Connections and Obstructions: the Super Atiyah Class }

\noindent Theorem \ref{splittingsmooth} proves the existence of a splitting for the $\Omega^1_{\tiny{\cat{M}}}$-extension in the smooth category: this fact could have been easily inferred directly from equation \eqref{extprimo}. On the other hand, the \virgolette projection'' result of Theorem \ref{projection} allows for a very nice interpretation when working in the complex analytic or algebraic category, where sheaves admit non-trivial higher cohomologies and the splitting of the $\Omega^1_{\tiny{\cat{M}}}$-extension is far from obvious. Quite the contrary, we will see that in general the conditions under which the $\Omega^1_{\tiny{\cat{M}}}$-extension splits are quite restrictive. To this end, in the following we restrict ourself to work on complex supermanifolds in the holomorphic category: the reader shall see that everything holds true also in the algebraic category.    
%\noindent The previous theorem opens to a very nice interpretation. To this end we start with the following definition {\bf citazioni}.
\begin{definition}[Affine Connection on $\mani$] Let $\mani$ be a complex supermanifold and let $\cat{T}_\mani$ be the (holomorphic) tangent sheaf of $\mani.$ An affine connection on $\mani$ is an even morphism of sheaves of $\mathbb{C}$-vector spaces $\nabla : \cat{T}_\mani \rightarrow \cat{T}^\ast_\mani \otimes_{\mathcal{O}_\mani} \cat{T}_\mani$ such that it satisfies the Leibniz rule  
\bear
\nabla (fX) = d_{\mathpzc{ev}} f \otimes X + f \nabla X,
\eear
for any $f \in \mathcal{O}_\mani$ and $X \in \cat{T}_\mani$, where $d_{\mathpzc{ev}}: \mathcal{O}_\mani \rightarrow \cat{T}^\ast_\mani$ is the \emph{even} de Rham differential, see \cite{Manin}. 
%\begin{enumerate}[leftmargin=*]
%\item it is $\mathcal{O}_\mani$-linear in its first argument, \emph{i.e.}\ for any $f \in \mathcal{O}_{\mani}$ and $X, Y \in \cat{T}_\mani$ \
%\bear
%\nabla (f X \otimes Y) = f \, \nabla (X \otimes Y); 
%\eear
%\item it satisfies the Leibniz rule in its second argument, \emph{i.e.}\ for any $f \in \mathcal{O}_\mani$ and $X, Y \in \cat{T}_\mani$ 
%\bear
%\nabla (X \otimes f Y ) = X (f) Y + (-1)^{|X| |f|} f \nabla (X \otimes Y).
%\eear
%\end{enumerate} 
\end{definition} 
%\begin{remark} The above is usually called affine \emph{holomorphic} connection. Notice that an affine connection can be analogously characterized as a sheaf morphism $\nabla : \cat{T}_\mani \rightarrow \mathcal{H}om (\cat{T}_\mani, \cat{T}_\mani ) = \cat{T}^\ast_{\mani} \otimes \cat{T}_\mani $ together with the corresponding Leibniz rule $\nabla (fX ) = d_{ev} f \otimes X + f\, \nabla X$, where $d_{ev} : \mathcal{O}_{\mani} \rightarrow \cat{T}_{\mani}^\ast$ is the \emph{even} de Rham differential, see \cite{Kos} \cite{Manin}. 
%\end{remark}
\noindent Obstructions to the existence of an affine connection on a complex supermanifold \cite{BR} \cite{Betta} \cite{DonWit} \cite{Kos} can be established in same fashion of the original Atiyah's result \cite{At} for ordinary complex manifolds. We spell out the main points of the construction following \cite{BR}, which is very close to the original \cite{At}. \\

\noindent First, one defines the \emph{sheaf of $1$-jets} of $\cat{T}_\mani$. One starts introducing the sheaf of $\mathbb{C}$-vector spaces given by 
\bear
U \longmapsto \mathcal{J}^1 \cat{T}_\mani (U) \defeq \cat{T}_\mani (U) \oplus (\cat{T}^\ast_{\mani} \otimes_{\mathcal{O}_\mani (U)} \cat{T}_\mani) (U).
\eear 
for $U$ an open set of $\mani.$ Notice that sections of $\cat{T}^\ast_{\mani} \otimes_{\mathcal{O}_\mani} \cat{T}_\mani$ are 1-forms valued in the tangent bundle. The sheaf $\mathcal{J}^1 \cat{T}_\mani$ can be endowed with the structure of sheaf of $\mathcal{O}_\mani$-modules as follows: let $\mathpzc{j} \defeq (X , \tau) \in \mathcal{J}^1 \mathcal{T}_\mani (U) $ and $f \in \mathcal{O}_\mani (U)$. One defines the product
\bear \label{OmodJet}
f \cdot \mathpzc{j} = f \cdot (X, \tau)  \defeq (f X , f \tau  +  d_{\mathpzc{ev}}f \otimes X),
\eear
where $d_{\mathpzc{ev}} : \mathcal{O}_\mani \rightarrow \cat{T}^\ast_\mani$ is the {even} de Rham differential, see again \cite{Manin}. One can then verify that the sequence of sheaves of $\mathcal{O}_\mani$-modules given by
\bear \label{1jets}
\xymatrix{
0 \ar[r] & \cat{T}^\ast_\mani \otimes_{\mathcal{O}_\mani} \cat{T}_\mani \ar[r]^\alpha & \mathcal{J}^1 \cat{T}_\mani \ar[r]^{\; \; \beta} & \cat{T}_\mani \ar[r] & 0,
}
\eear
where 
$
\alpha (\tau) \defeq (0, \tau),$ and $\beta ((X, \tau )) = X,$ is exact. Notice that since $\cat{T}_\mani$ is locally-free, then the sequence \eqref{1jets} is locally split, hence there exists a covering $\{U_a \}_{a \in I}$ such that $\cat{T}_\mani |_{U_a}$ and $\cat{T}^\ast_\mani \otimes \cat{T}_\mani |_{U_a}$ are free and 
\bear
\mathcal{J}^1 \cat{T}_\mani |_{U_a} \cong  \cat{T}_\mani |_{U_a} \oplus (\cat{T}^\ast_\mani \otimes \cat{T}_\mani) |_{U_a} \cong \mathcal{O}_\mani^{\oplus n|m}|_{U_a} \oplus \mathcal{O}_{\mani}^{n^2+ m^2 | 2nm} |_{U_a}, 
\eear
which guarantees that $\mathcal{J}^1 (\cat{T}_\mani)$ is locally-free. On the other hand, due to the non-trivial $\mathcal{O}_\mani$-module structure of the sheaf of 1-jets of $\cat{T}_\mani$, the previous short exact sequence of locally-free sheaves of $\mathcal{O}_\mani$-modules \eqref{1jets} - henceforth \emph{1-jets short exact sequence} - does not necessarily split. 
Applying the functor $\mathcal{H}om (\cat{T}_\mani, -) \defeq \mathcal{H}om_{\mathcal{O}_\mani} (\cat{T}_\mani, -)$, and taking the long exact sequence in cohomology one gets
\bear \label{1jetscoho}\hspace*{-1.5cm} 
\xymatrix{
0 \ar[r] & H^0 (|\mani_{\mathpzc{red}}|, \cat{T}^\ast_\mani \otimes \mathcal{H}om (\cat{T}_\mani, \cat{T}_\mani )  ) \ar[r] & H^0 (|\mani_{\mathpzc{red}}|, \mathcal{H}om (\cat{T}_\mani,\mathcal{J}^1\cat{T}_\mani)) \ar[r] & H^0 (|\mani_{\mathpzc{red}}|, \mathcal{H}om (\cat{T}_\mani, \cat{T}_\mani)) \ar@(dr,ul)[dll]_{\delta} & \\
& H^1 (|\mani_{\mathpzc{red}}|,  \cat{T}^\ast_\mani \otimes \mathcal{H}om (\cat{T}_\mani, \cat{T}_\mani) ) \ar[r] & H^1 (|\mani_{\mathpzc{red}}|, \mathcal{H}om (\cat{T}_\mani,\mathcal{J}^1\cat{T}_\mani)) \ar[r] & H^1 (|\mani_{\mathpzc{red}}|, \mathcal{H}om (\cat{T}_\mani, \cat{T}_\mani)) \ar[r] & \ldots
}
\eear
We call this long exact sequence in cohomology the \emph{1-jets long exact cohomology sequence}. %The connection homomorphism reads 
%\bear
%\delta : H^0 (|\mani_{\mathpzc{red}}|, \mathcal{H}om (\cat{T}_\mani, \cat{T}_\mani)) \rightarrow H^1 (|\mani_{\mathpzc{red}}|, \mathcal{H}om (\cat{T}_\mani, \cat{T}_\mani) \otimes \cat{T}^\ast_\mani).
%\eear 
We can thus give the following definition.
\begin{definition}[Super Atiyah Class] Let $\mani$ be a complex supermanifold and let $\cat{T}_\mani$ be its tangent sheaf. We define the Atiyah class $\mathfrak{At} (\cat{T}_\mani)$  of $\cat{T}_\mani$ to be the image of the identity map $id_{ \cat{\tiny{{T}}}_\mani} \in H^0 (|\mani_{\mathpzc{red}}|, \mathcal{H}om_{\mathcal{O}_{\mani}} (\cat{T}_\mani, \cat{T}_\mani))$ via the $1$-connecting homomorphism $\delta$ in the $1$-jets long exact cohomology sequence, \emph{i.e.}  %$ \delta   (id_{\cat{\tiny{{T}}}^\ast_\mani \otimes \cat{\tiny{{T}}}_\mani} ) \in H^1 (|\mani|, \mathcal{H}om (\cat{T}_\mani, \cat{T}_\mani) \otimes \cat{T}^\ast_\mani)$ of the identity $id_{\cat{\tiny{{T}}}^\ast_\mani \otimes \cat{\tiny{{T}}}_\mani} \in H^0 (|\mani|, \mathcal{H}om (\cat{T}_\mani, \cat{T}_\mani))$ via the $1$-connecting homomorphism $\delta$ in the $1$-jets long exact cohomology sequence the Atiyah class of $\cat{T}_\mani$, \emph{i.e.} 
\bear
\xymatrix@R=1.5pt{
\mathfrak{At} : H^0 (|\mani_{\mathpzc{red}}|, \mathcal{H}om_{\mathcal{O}_\mani} (\cat{{T}}_\mani, \cat{{T}}_\mani)) \ar[r] & H^1 (|\mani_{\mathpzc{red}}|,  \cat{{T}}^\ast_\mani \otimes \mathcal{H}om_{\mathcal{O}_\mani} (\cat{{T}}_\mani,\cat{{T}}_\mani ) ) \\
id_{ \cat{\tiny{{T}}}_\mani} \ar@{|->}[r] & \mathfrak{At} ({\cat{{{T}}}_\mani}) \defeq \delta   (id_{  \cat{\tiny{{T}}}_\mani} ). 
}
\eear
\end{definition}
\noindent The following theorem is adapted from \cite{At} to the super-setting, and it shows how the super Atiyah class is related to the existence of an affine connection on the complex supermanifold $\mani$.
\begin{theorem}[Pseudo-Atiyah] \label{PseudoAt} Let $\mani$ be a complex supermanifold and let $\cat{\emph{T}}_\mani$ be the tangent sheaf of $\mani$, then:
\begin{enumerate}[leftmargin=*] 
\item[(1)] the 1-jets short exact sequence \eqref{1jets} splits if and only if there exists an affine connection on $\mani$; 
\item[(2)] there exists an affine connection on $\mani$ if and only if $\mathfrak{At} (\cat{\emph{T}}_\mani)$ is trivial. 
%\bear
%\xymatrix@R=1.5pt{
%\mathfrak{At}_{\cat{\tiny{\emph{T}}}_\mani} : H^0 (|\mani|, \mathcal{H}om (\cat{\emph{T}}_\mani, \cat{\emph{T}}_\mani)) \ar[r] & H^1 (|\mani|, \mathcal{H}om (\cat{\emph{T}}_\mani, \cat{\emph{T}}_\mani ) \otimes \cat{\emph{T}}^\ast_\mani) \\
%id_{\cat{\tiny{\emph{T}}}^\ast_\mani \otimes \cat{\tiny{\emph{T}}}_\mani} \ar@{|->}[r] & \mathfrak{At} ({\cat{{\emph{T}}}_\mani}) \defeq \delta   (id_{\cat{\tiny{\emph{T}}}^\ast_\mani \otimes \cat{\tiny{\emph{T}}}_\mani} ) 
%}
%\eear
%whose image $\mathfrak{At}  (\cat{{\emph{T}}}_\mani )  \in H^1 (|\mani|, \mathcal{H}om (\cat{\emph{T}}_\mani, \cat{\emph{T}}_\mani ) \otimes \cat{\emph{T}}^\ast_\mani)$ is called \emph{Atiyah class} of $\cat{\emph{T}}_\mani$. In particular $\cat{\emph{T}}_\mani$ admits an affine connection if and only if $\mathfrak{At} (\cat{{\emph{T}}}_\mani )$ is trivial. 
\end{enumerate}
More in particular, letting $\{U_i\}_{i \in I}$ be an open covering for $|\mani_{\mathpzc{red}}|$ and $\{ g_{ij} \}_{i, j \in I}$ be the transition functions of $\cat{\emph{T}}_\mani$ on the intersections $U_i \cap U_j$, %with $g_{ij}$ the Jacobian of the change of coordinate, 
then the Atiyah class of $\cat{\emph{T}}_\mani$ is represented by the \v{C}ech $1$-cocycle 
\bear \label{AtiyahCocy}
\mathfrak{At} (\cat{\emph{T}}_\mani) \; \leftrightsquigarrow \;  \prod_{i < j}  \big (- (d_{\mathpzc{ev} }g_{ij} ) g_{ij}^{-1} \big )  \in H^1 (|\mani_{\mathpzc{red}}|, \cat{\emph{T}}^\ast_\mani \otimes_{\mathcal{O}_\mani} \mathcal{E}nd_{\mathcal{O}_{\mani}} (\cat{\emph{T}}_\mani)).
\eear
\end{theorem}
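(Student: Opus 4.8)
The plan is to follow Atiyah's classical argument \cite{At}, adapting it to the $\mathbb{Z}_2$-graded setting where the essential simplification is that the even de Rham differential $d_{\mathpzc{ev}}$ is a \emph{parity-preserving} operator, so that the signs coming from the grading stay under control. For part (1), I would show that a right splitting $s : \cat{T}_\mani \rightarrow \mathcal{J}^1 \cat{T}_\mani$ of the $1$-jets sequence \eqref{1jets} is the same datum as an affine connection. Since $\beta \circ s = \mathrm{id}$, any such $s$ has the form $s(X) = (X, \sigma(X))$ for a map $\sigma : \cat{T}_\mani \rightarrow \cat{T}^\ast_\mani \otimes_{\mathcal{O}_\mani} \cat{T}_\mani$ of sheaves of $\mathbb{C}$-vector spaces. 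Imposing $\mathcal{O}_\mani$-linearity of $s$ and using the twisted module structure \eqref{OmodJet}, i.e. $f \cdot (X, \sigma(X)) = (fX, f \sigma(X) + d_{\mathpzc{ev}} f \otimes X)$, one reads off that $s(fX) = f \cdot s(X)$ holds precisely when $\sigma(fX) = d_{\mathpzc{ev}} f \otimes X + f \sigma(X)$, which is exactly the Leibniz rule defining an affine connection. Hence $\sigma = \nabla$ and splittings correspond bijectively to connections; as the existence of a right splitting is equivalent to the short exact sequence being split, this proves (1).

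For part (2), I would use the standard fact that \eqref{1jets} splits if and only if $\mathrm{id}_{\cat{T}_\mani} \in H^0(|\mani_{\mathpzc{red}}|, \mathcal{H}om(\cat{T}_\mani, \cat{T}_\mani))$ lifts to a global section of $\mathcal{H}om(\cat{T}_\mani, \mathcal{J}^1 \cat{T}_\mani)$: such a lift $s$ satisfies $\beta \circ s = \mathrm{id}$ and is therefore a right splitting, and conversely. By exactness of the $1$-jets long exact cohomology sequence \eqref{1jetscoho}, the section $\mathrm{id}_{\cat{T}_\mani}$ lifts if and only if its image $\delta(\mathrm{id}_{\cat{T}_\mani}) = \mathfrak{At}(\cat{T}_\mani)$ vanishes in $H^1(|\mani_{\mathpzc{red}}|, \cat{T}^\ast_\mani \otimes \mathcal{E}nd(\cat{T}_\mani))$. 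Together with part (1) this gives the chain of equivalences: an affine connection exists $\iff$ \eqref{1jets} splits $\iff \mathfrak{At}(\cat{T}_\mani) = 0$.

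For the explicit cocycle, I would compute $\delta$ at the level of \v{C}ech cochains. Choosing a cover $\{U_i\}_{i \in I}$ trivializing $\cat{T}_\mani$ with transition functions $g_{ij}$, the sequence \eqref{1jets} splits on each $U_i$, so $\mathrm{id}_{\cat{T}_\mani}$ admits the local lift $s_i$ given by the flat connection $d^{(i)}$ attached to the chosen frame (acting as $d_{\mathpzc{ev}}$ on components). Since $\beta(s_i - s_j) = 0$, the differences $s_i - s_j$ lie in the image of $\alpha$ and hence define a \v{C}ech $1$-cochain valued in $\cat{T}^\ast_\mani \otimes \mathcal{E}nd(\cat{T}_\mani)$ representing $\delta(\mathrm{id}_{\cat{T}_\mani})$. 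Expressing the two flat connections in a common frame produces the change-of-frame connection $1$-form, and the standard manipulation $d^{(i)} - d^{(j)} = -(d_{\mathpzc{ev}} g_{ij}) g_{ij}^{-1}$ --- the analogue of the classical Atiyah formula with $d$ replaced by $d_{\mathpzc{ev}}$ --- yields the claimed cocycle \eqref{AtiyahCocy}.

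The main obstacle is this final step: carefully transporting the two local flat connections into a common frame while tracking the $\mathbb{Z}_2$-signs produced by the super transition functions $g_{ij} \in GL(p|q)$ and by the two tensor slots of $\cat{T}^\ast_\mani \otimes \mathcal{E}nd(\cat{T}_\mani)$. The point that makes the computation go through essentially verbatim from the classical case is that $d_{\mathpzc{ev}}$ is even, hence obeys the \emph{ungraded} Leibniz rule $d_{\mathpzc{ev}}(g_{ij} g_{jk}) = (d_{\mathpzc{ev}} g_{ij}) g_{jk} + g_{ij}(d_{\mathpzc{ev}} g_{jk})$ with no extra sign; this is precisely what one needs to verify that $\{-(d_{\mathpzc{ev}} g_{ij}) g_{ij}^{-1}\}$ is a cocycle and to identify it with $\delta(\mathrm{id}_{\cat{T}_\mani})$.
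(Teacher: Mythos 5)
Your proposal is correct and follows essentially the same route as the paper's proof: part (1) via the correspondence between $\mathcal{O}_\mani$-linear splittings of \eqref{1jets} and connections, read off from the twisted module structure \eqref{OmodJet}; part (2) via the connecting homomorphism in the long exact sequence \eqref{1jetscoho}; and the cocycle formula by taking differences of the local flat connections attached to the trivializations. The only difference is an immaterial sign convention in the \v{C}ech connecting map (your $s_i - s_j$ versus the paper's $\mathfrak{a}_{ij} = \nabla_j - \nabla_i$), which does not affect the stated representative $-(d_{\mathpzc{ev}}g_{ij})g_{ij}^{-1}$ nor the vanishing criterion.
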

\begin{proof} The first point is the crucial one. First, we let $\nabla$ be an affine connection on $\mani$ and we define the morphism
%\bear
%\xymatrix@R=1.5pt{
%\sigma :  \cat{T}_\mani \ar[r] & \mathcal{J}^1 (\cat{T}_\mani) \\
%X \ar@{|->}[r] & (X, \nabla X) 
%}
%\eear
$s_\nabla : \cat{T}_\mani \rightarrow \mathcal{J}^1 \cat{T}_\mani$ by $j (X) \defeq (X , \nabla X).$ Notice that $s_{\nabla}$ is a well-defined morphism of sheaves of $\mathcal{O}_\mani$-modules, as
\bear
s_{\nabla} (f X) = ( f X , \nabla (fX) ) = (fX , d_{\mathpzc{ev}}f \otimes X + f \nabla X ) = f \cdot (X, \nabla X),
\eear 
by equation \eqref{OmodJet}. By definition of the 1-jets short exact sequence \eqref{1jets}, one has that the surjective morphism $\beta : \mathcal{J}^1 (\cat{T}_\mani) \rightarrow \cat{T}_\mani$ is given by $\beta ((X, \tau )) = X$. Therefore $\beta \circ s_{\nabla} = id_{\cat{\tiny{T}}_\mani}$, which implies that the affine connection $\nabla$ determines a splitting $s_{\nabla} $ of the 1-jets short exact sequence, \emph{i.e.}\
\bear
\xymatrix{
0 \ar[r] & \ar[r] \cat{T}^\ast_\mani \otimes_{\mathcal{O}_\mani} \cat{T}_\mani \ar[r]^\alpha & \mathcal{J}^1 \cat{T}_\mani \ar[r]_{\beta} & \cat{T}_\mani \ar@{->}@/_1.2pc/[l]^{s_\nabla} \ar[r] & 0.
}
\eear 
Vice versa, let the 1-jets short exact sequence \eqref{1jets} be split. Then there exists a morphism of sheaves of $\mathcal{O}_\mani$-modules $s: \cat{T}_\mani \rightarrow \mathcal{J}^1 \cat{T}_\mani$ such that $\beta \circ s = id_{\cat{\tiny{T}}_\mani}$. We let then $p: \mathcal{J}^1 \cat{T}_\mani \rightarrow \cat{T}^\ast_\mani \otimes_{\mathcal{O}_\mani} \cat{T}_\mani$ be defined by $p ((X, \tau)) = \tau$. Notice that $p $ is $\mathbb{C}$-linear, but not $\mathcal{O}_\mani$-linear. Let us then define $\nabla^{(s)} \defeq p \circ s : \cat{T}_\mani \rightarrow \cat{T}^\ast_\mani \otimes_{\mathcal{O}_\mani} \cat{T}_\mani.$ It is immediate that $\nabla^{(s)}$ is $\mathbb{C}$-linear. Finally, it satisfies Leibniz rule, indeed
\begin{align}
\nabla^{(s)} (fX) &= p ( s (fX)) = p (f s (X)) = p (f \cdot (X, \tau)) \nonumber =  p ((f X ,  d_{\mathpzc{ev}}f \otimes X + f \tau )) \\
& =  d_{\mathpzc{ev}}f \otimes X + f \tau = d_{\mathpzc{ev}}f \otimes X + f \nabla^{(s)}X, 
\end{align}
for any $f \in \mathcal{O}_X$ and $X \in \cat{T}_\mani.$ It follows that $\nabla^{(s)}$ defines an affine connection. \\
The second point of the Theorem depends on the first one. Let $\mathfrak{At} (\cat{{T}}_\mani) = 0$. Then, by definition $\delta (id_{ \cat{\tiny{{T}}}_\mani} ) = 0.$ By exactness, it follows from the 1-jets long cohomology exact sequence \eqref{1jetscoho} 
\bear \nonumber \hspace*{-.4cm}
\xymatrix@C=9pt{
\ldots \ar[r] &  H^0(|\mani_{\mathpzc{red}}|, \mathcal{H}om (\cat{T}_\mani,\mathcal{J}^1\cat{T}_\mani)) \ar[r] & H^0 (|\mani_{\mathpzc{red}}|, \mathcal{H}om (\cat{T}_\mani, \cat{T}_\mani)) \ar[r] & H^1 (|\mani_{\mathpzc{red}}|,  \cat{T}^\ast_\mani \otimes \mathcal{H}om (\cat{T}_\mani, \cat{T}_\mani) )  \ar[r] & \ldots
}
\eear
that there exists an element $h \in H^0 (|\mani_{\mathpzc{red}}|, \mathcal{H}om_{\mathcal{O}_\mani} (\cat{T}_\mani,\mathcal{J}^1(\cat{T}_\mani))$ such that $\beta \circ h = id_{ \cat{\tiny{{T}}}_\mani} $, where $\beta$ is the surjection in 1-jets short exact sequence \eqref{1jets}, which therefore splits. By the previous point of the Theorem, this is equivalent to the existence of an affine connection on $\mani.$\\
Viceversa, let $\mani$ be such that it admits an affine connection. Then the 1-jets short exact sequence is split by the previous point of the Theorem. This implies that there exists a map $h : H^0 (|\mani_{\mathpzc{red}}|, \mathcal{H}om_{\mathcal{O}_\mani} (\cat{T}_\mani,\mathcal{J}^1\cat{T}_\mani))$ such that $\beta \circ h = id_{\cat{\tiny{{T}}}_\mani}$. It follows that $id_{\cat{\tiny{{T}}}_\mani} $ belongs to the image of the map $H^0 (|\mani_{\mathpzc{red}}|, \mathcal{H}om_{\mathcal{O}_\mani} (\cat{T}_\mani,\mathcal{J}^1\cat{T}_\mani)) \rightarrow H^0 (|\mani_{\mathpzc{red}}|, \mathcal{H}om_{\mathcal{O}_\mani} (\cat{T}_\mani, \cat{T}_\mani))$ and hence $\delta (id_{\cat{\tiny{{T}}}_\mani}) = 0,$ \emph{i.e.}\ $\mathfrak{At} (\cat{{T}}_\mani) = 0$.\\
For the last point, we let $\{U_i \}_{i\in I}$ be an open covering of $\mani$ and $\{g_{ij} \}_{i, j \in I}$ the transition functions of $\cat{{T}}_\mani$ on the intersections $U_i \cap U_j$. For the sake of notation, compositions of maps are left understood in what follows. A \v{C}ech 1-cocycle representation of the class $\mathfrak{At} (\cat{T}_\mani)$ in terms of $\{g_{ij} \}_{i,j \in I} $ can be obtained by letting $\nabla_i $ be the (flat) connection on $\cat{T}_\mani |_{U_i}$ which is determined by a fixed trivialization relative to $\{ U_i\}_{i\in I}.$ In particular, following \cite{At} and \cite{BR} we let
\bear
\xymatrix@R=1.5pt{
\nabla_i : \cat{T}_\mani |_{U_i} \ar[r] & (\cat{T}^\ast_\mani \otimes_{\mathcal{O}_\mani} \cat{T}_\mani)|_{U_i}  \\
s \ar@{|->}[r] & \nabla_i s \defeq  \phi_i d_{\mathpzc{ev}} \phi^{-1}_i \, s,
}
\eear
where $\phi_i $ is a trivialization on $U_i$, and we define $ ( \mathfrak{a}_{ij} )_{i<j, i,j \in I} \in \Gamma (U_i \cap U_j , \cat{T}_\mani^\ast \otimes_{\mathcal{O}_{\mani} }\mathcal{E}nd_{\mathcal{O}_\mani} (\cat{T}_\mani) ) $ by 
\bear
\mathfrak{a}_{ij} = \nabla_j - \nabla_i.
\eear
Observing that $\phi_j = \phi_i \circ (\phi^{-1}_i \circ \phi_j) = \phi_i \circ g_{ji}^{-1} $, one computes
\begin{align}
\mathfrak{a}_{ij} & = \phi^{-1}_j d_{\mathpzc{ev}} \phi_j - \phi^{-1}_i d_{\mathpzc{ev}} \phi_i = \phi_i^{-1}  g_{ji}^{-1} d_{\mathpzc{ev}} g_{ji} \phi_i - \phi_i^{-1} d_{\mathpzc{ev}} \phi_i \nonumber \\
& = \phi^{-1}_{i} (g_{ji}^{-1} d_{\mathpzc{ev}} g_{ji} + g^{-1}_{ji}g_{ji} d_{\mathpzc{ev}}  - d_{\mathpzc{ev}}) \phi_i.
\end{align}
This simplifies to
\begin{align}
\mathfrak{a}_{ij} & = \phi^{-1}_i  (g^{-1}_{ji} (d_{\mathpzc{ev}}g_{ji}))  \phi_i  = \phi^{-1}_i  (g_{ij} (d_{\mathpzc{ev}}g^{-1}_{ij}))  \phi_i \nonumber \\
& = \phi^{-1}_i  (-(d_{\mathpzc{ev}}g_{ij}) g^{-1}_{ij})  \phi_i, 
\end{align}
where we have used Leibniz rule applied to $d_{\mathpzc{ev}} (g_{ij}g_{ij}^{-1}) = 0.$ It follows that 
\bear
\phi^{-1}_i \mathfrak{a}_{ij} \phi_i = -(d_{\mathpzc{ev}}g_{ij}) g^{-1}_{ij}.
\eear
Finally, upon using $g_{ij}g_{jk}g_{ki} = id_{\cat{\tiny{T}}_\mani}$, one checks that $\mathfrak{a}_{ij} \in Z^1 (\{ U_i\}_{i \in I}, \cat{T}_\mani^\ast \otimes_{\mathcal{O}_{\mani} }\mathcal{E}nd_{\mathcal{O}_\mani} (\cat{T}_\mani))$, \emph{i.e.}\ it defines a \v{C}ech 1-cocycle in the cohomology of $\cat{T}_\mani^\ast \otimes_{\mathcal{O}_{\mani} }\mathcal{E}nd_{\mathcal{O}_\mani} (\cat{T}_\mani)$.
\end{proof}

\begin{remark} 
With reference to the last part of Theorem \ref{PseudoAt}, one can notice that a local holomorphic connection can be written in the form $d_{\mathpzc{ev}} + \mathcal{A}_i $ in a trivialization $\phi_i : \pi^{-1} (U_i) \rightarrow U_i \times \mathbb{C}^{n|m},$ with $\mathcal{A}_i$ a matrix-valued holomorphic 1-form on $U_i$. These can be patched together to form a globally defined (holomorphic) affine connection if and only if
\begin{align} \label{condcech1}
\phi^{-1}_i (d_{\mathpzc{ev}} + \mathcal{A}_i ) \phi_i = \phi^{-1}_j (d_{\mathpzc{ev}} + \mathcal{A}_j) \phi_j,
%0 = (d_{\mathpzc{ev}} + A_j) - g_{ij}^{-1}(d_{\mathpzc{ev}} + A_i) g_{ij} = (d_{\mathpzc{ev}} + A_j) - (\phi_i \circ \phi_j^{-1})^{-1} (d_{\mathpzc{ev}} + A_i) (\phi_i \circ \phi_j), 
\end{align}
that can be rearranged as 
\bear \label{cechcomplete}
\phi^{-1}_j d_{\mathpzc{ev}} \phi_j - \phi^{-1}_i d_{\mathpzc{ev}} \phi_i = \phi^{-1}_i \mathcal{A}_i \phi_i - \phi^{-1}_j \mathcal{A}_j \phi_j.
\eear
Then, in view of Theorem \ref{PseudoAt}, the left-hand side is (a \v{C}ech 1-cocycle representing) the Atiyah class of $\cat{T}_\mani$, and equation \eqref{cechcomplete} can be written as
\bear
-(d_{\mathpzc{ev}}g_{ij}) g_{ij}^{-1} = \mathcal{A}_i - g_{ji}^{-1} \mathcal{A}_j g_{ji},
\eear
where the right-hand side is the \v{C}ech coboundary of $\mathcal{A}_i \in \Gamma (U_i, \cat{T}_\mani \otimes_{\mathcal{O}_\mani} \mathcal{E}nd(\cat{T}_\mani)).$ This shows via \v{C}ech cohomology that local connections can be patched together if and only if $\mathfrak{At}(\cat{T}_\mani) = 0,$ providing a different proof of the second point of Theorem \ref{PseudoAt} in a local-to-global fashion, as it is customary in \v{C}ech cohomology.
\end{remark}

\begin{remark}
Further, notice that the same construction as above can be carried out for any locally-free sheaf $\mathcal{E}$ on $\mani$, not only the tangent sheaf $\cat{T}_\mani$. On this respect the non-vanishing of the corresponding Atiyah class, which we still denote as $\mathfrak{At} (\mathcal{E})$, is an obstruction to define a holomorphic connection on $\mathcal{E}$.
\end{remark}

%Finally, observe that the Atiyah class of the tangent sheaf can be represented in terms of the transition functions of the tangent sheaf itself. If we let $\{U_i\}_{i \in I}$ be an open covering for $|\mani|$ and $\{ g_{ij} \}_{i, j \in I}$ be the transition functions of $\cat{T}_\mani$ on the intersections $U_i \cap U_j$ with $g_{ij}$ the Jacobian of the change of coordinate, then the Atiyah class of $\cat{T}_\mani$ is represented by the \v{C}ech $1$-cocycle 
%\bear \label{AtiyahCocy}
%\mathfrak{At} (\cat{T}_\mani) \; \leftrightsquigarrow \;  \prod_{i < j}  \big (- (dg_{ij} ) g_{ij}^{-1} \big )  \in H^1 (|\mani|, \cat{T}^\ast_\mani \otimes \mathcal{E}nd (\cat{T}_\mani)).
%\eear
\vspace{.3cm}
\noindent The previous Theorem \ref{PseudoAt} allows to identify the obstruction to splitting the $\Omega^1_{\tiny{\cat{M}}}$-extension. %More in details, we prove the following.
\begin{theorem}[$\Omega^{1}_{\tiny{\cat{M}}}$ and the Atiyah Class] \label{SplitAt} Let $\mani$ be a complex supermanifold and let $\cat{\emph{M}}$ be constructed as above. Then the $\Omega^1_{\tiny{\cat{\emph{M}}}}$-extension  
\bear \label{omegases}
\xymatrix{
0 \ar[r] & \pi^{\ast} \Omega^1_{\mani} \ar[r] & \Omega^{1}_{\tiny{{\cat{\emph{M}}}}} \ar[r] & \pi^\ast \emph{\cat{T}}_\mani  \ar[r] & 0
}
\eear
is split if and only if $\mathfrak{At} (\cat{\emph{T}}_\mani)$ is trivial.
In particular, the short exact sequence is split if and only if $\mani$ admits an affine connection. 
\end{theorem}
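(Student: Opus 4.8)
The plan is to show that, under the natural isomorphism of Theorem \ref{projection}, the extension class of the sequence \eqref{omegases} is identified with a nonzero multiple of the super Atiyah class $\mathfrak{At}(\cat{T}_\mani)$, so that the two classes vanish simultaneously. First I would recall that the $\Omega^1_{\tiny{\cat{M}}}$-extension \eqref{omegases} is classified by its extension class in $Ext^1_{\mathcal{O}_{\tiny{\cat{M}}}}(\pi^\ast \cat{T}_\mani, \pi^\ast \Omega^1_\mani)$, and that it splits precisely when this class is zero. By Theorem \ref{projection} this $Ext$-group is canonically isomorphic to $H^1(|\mani_{\mathpzc{red}}|, \cat{T}^\ast_\mani \otimes_{\mathcal{O}_\mani} \mathcal{E}nd_{\mathcal{O}_\mani}(\cat{T}_\mani))$, which is exactly the cohomology group in which $\mathfrak{At}(\cat{T}_\mani)$ lives by Theorem \ref{PseudoAt}. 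Hence the whole problem reduces to comparing two distinguished classes inside one and the same group.

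Next I would produce an explicit \v{C}ech representative of the extension class. Using the concrete description of the $Ext$-group deferred to the Appendix together with the transition functions of Lemma \ref{transCotM}, equation \eqref{tfcot2}, the local splittings $dp_a \mapsto dp_a$ fail to agree on overlaps precisely by the off-diagonal summand $(-1)^{|x_a|+|z_b|}\, d\!\left(\tfrac{\partial z_b}{\partial x_a}\right) q_b$ of \eqref{tfcot2}. This is the sought \v{C}ech $1$-cocycle with values in $\mathcal{H}om_{\mathcal{O}_\mani}(\cat{T}_\mani, \Omega^1_\mani)$. Its defining feature is that it is \emph{linear} in the fibre coordinate $q$: following the proof of Theorem \ref{projection}, this linearity places the class in the degree-one part of $\pi_\ast \mathcal{O}_{\tiny{\cat{M}}} \cong (\Omega^\bullet_\mani)^\ast$, that is, in the summand $\mathcal{H}om_{\mathcal{O}_\mani}(\cat{T}_\mani, \Omega^1_\mani) \otimes_{\mathcal{O}_\mani} \Pi\cat{T}_\mani$, exactly as in the proof of Theorem \ref{projection}.

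The heart of the argument is the identification of this cocycle with the Atiyah cocycle \eqref{AtiyahCocy}. Here I would invoke the chain of canonical isomorphisms $\mathcal{H}om_{\mathcal{O}_\mani}(\cat{T}_\mani, \Omega^1_\mani) \otimes_{\mathcal{O}_\mani} \Pi\cat{T}_\mani \cong \mathcal{E}nd_{\mathcal{O}_\mani}(\cat{T}_\mani) \otimes_{\mathcal{O}_\mani} \cat{T}^\ast_\mani$ used in Theorem \ref{projection}, under which $\Omega^1_\mani = \Pi\cat{T}^\ast_\mani$ pairs with the $\Pi\cat{T}_\mani$ factor carried by $q$. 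The key observation is that the Jacobian $\partial z_b/\partial x_a$ appearing in \eqref{tfcot2} \emph{is} the transition matrix $g_{ij}$ of $\cat{T}_\mani$, while the fibre coordinates $q_b$ transform, by Lemma \ref{transM}, again through this Jacobian; the concrete $Ext$-formula together with the transformation of $q$ therefore supplies the inverse transition $g_{ij}^{-1}$, and the de Rham differential $d$ restricts on the base to the even differential $d_{\mathpzc{ev}}$. The upshot is that the extension cocycle becomes $(d_{\mathpzc{ev}} g_{ij})\, g_{ij}^{-1}$ up to an overall sign, i.e.\ exactly $-\mathfrak{At}(\cat{T}_\mani)$ as in \eqref{AtiyahCocy}. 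As a consistency check one verifies directly, using $g_{ij}g_{jk}g_{ki} = id$, that this expression satisfies the $1$-cocycle condition in $\cat{T}^\ast_\mani \otimes \mathcal{E}nd(\cat{T}_\mani)$, exactly as at the end of Theorem \ref{PseudoAt}.

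I expect the main obstacle to be precisely this last matching step: keeping careful track of the signs and parities through the super-identification $\Omega^1_\mani \otimes_{\mathcal{O}_\mani} \Pi\cat{T}_\mani \cong \mathcal{E}nd(\cat{T}_\mani) \otimes_{\mathcal{O}_\mani} \cat{T}^\ast_\mani$, and in particular seeing that the transformation of the fibre coordinate $q$ is what converts the naive derivative-of-Jacobian $d g_{ij}$ (which is not by itself a cocycle, since $g_{ij}g_{jk}g_{ki}=id$ forces the presence of $g_{ij}^{-1}$) into the Atiyah combination $(d_{\mathpzc{ev}} g_{ij}) g_{ij}^{-1}$. Once the two classes are seen to agree up to sign, the conclusion is immediate: the $\Omega^1_{\tiny{\cat{M}}}$-extension splits $\iff \mathfrak{At}(\cat{T}_\mani)=0$, and by the second point of Theorem \ref{PseudoAt} the latter holds $\iff \mani$ admits an affine connection, which is the final \virgolette in particular'' clause.
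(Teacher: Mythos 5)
Your proposal is correct and takes essentially the same route as the paper: the paper's proof likewise combines Theorem \ref{projection}, the concrete $Ext$-description of Theorem \ref{reductionlemma} from the Appendix, and the transition functions of Lemma \ref{transCotM} to represent the extension class by the \v{C}ech $1$-cocycle $-(d_{\mathpzc{ev}}g_{ij})\,g_{ij}^{-1}$, identified with $\mathfrak{At}(\cat{T}_\mani)$ via \eqref{AtiyahCocy}, and then invokes Theorem \ref{PseudoAt} for the equivalence with the existence of an affine connection. Your write-up simply makes explicit the cocycle-matching step (including the $q$-linearity and the role of $B^{-1}$ in the formula $-CB^{-1}$) that the paper's proof states in one line.
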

\begin{proof} By the previous Theorem \ref{projection} obstructions to splitting the short exact sequences lie indeed in $H^1 (|\mani_{\mathpzc{red}}|, \cat{T}^\ast_\mani \otimes_{\mathcal{O}_\mani} \mathcal{E}nd_{\mathcal{O}_{\mani}} (\cat{T}_\mani))$. By Lemma \ref{reductionlemma} and the structure of the transition functions given in Theorem \ref{transCotM} one sees that the obstructions are represented as \v{C}ech $1$-cocycles by elements of the form $ - (dg_{ij}) g^{-1}_{ij} $, (where the $g_{ij}$'s are the transition functions of the tangent sheaf $\cat{T}_\mani$), which is identified with the Atiyah class $\mathfrak{At} (\cat{T}_{\mani})$ by \eqref{AtiyahCocy}. %whose triviality is equivalent to the existence of an affine connection for $\mani$ by Theorem \ref{PseudoAt}.
\end{proof}
\noindent We now aim to relate the splitting of the short exact sequence \eqref{omegases} to the geometry of the complex supermanifold $\mani$. To this end we first recall some basic constructions specific to the theory of complex supermanifolds, see \cite{CNR} or \cite{Manin}. To each complex supermanifold is attached the short exact sequence
\bear \label{defses}
\xymatrix{
0 \ar[r] & \mathcal{J}_{\mani} \ar[r] & \mathcal{O}_{\mani} \ar[r] & \mathcal{O}_{\mani_{\mathpzc{red}}} \ar[r] & 0, 
}
\eear
where $\mathcal{J}_{\mani}$ is the sheaf of nilpotent sections in $\mathcal{O}_{\mani}$ and $\mathcal{O}_{\mani_{\mathpzc{red}}} = \mathcal{O}_\mani / \mathcal{J}_\mani$ is the structure sheaf of the reduced space $\mani_{\mathpzc{red}}$ - and ordinary complex manifold - of the supermanifold $\mani$. If \eqref{defses} splits, then the supermanifold $\mani$ is said to be \emph{projected}, because the splitting corresponds to the existence of a \virgolette projection'' morphisms $\pi : \mani \rightarrow \mani_{\mathpzc{red}} $ such that $\pi  \circ \iota = id_{\mani_{\mathpzc{red}}}$, if $\iota : \mani_{\mathpzc{red}} \hookrightarrow \mani$ is the canonical embedding of the reduced space $\mani_{\mathpzc{red}}$ into the supermanifold $\mani$. Moreover, the quotient $\mathcal{J}_{\mani } / \mathcal{J}^2_{\mani}$ defines a locally-free sheaf of $\mathcal{O}_{\mani_{\mathpzc{red}}}$-modules of rank $q$ - where $q$ is the odd dimension of $\mani$ - and whose section are seen to be odd. We call the quotient $\mathcal{J}_\mani / \mathcal{J}_\mani^2$ the \emph{fermionic sheaf} of $\mani$ and we denote with $\mathcal{F}_\mani$. 
We say that the supermanifold $\mani$ is \emph{split} if its structure sheaf is \emph{globally} isomorphic to the sheaf of exterior algebras $\wedge^\bullet \mathcal{F}_{\mani}$ over $\mathcal{O}_{\mani_{\mathpzc{red}}}$. %or, analogously, if $\mani$ is globally isomorphic to its so-called \emph{split model} $\mathfrak{S}(\manir, \mathcal{F}_\mani)$, constructed from the ordinary complex manifold $\manir$ and the sheaf of exterior algebras of the locally-free sheaf of $\mathcal{O}_{\manir}$-module $\mathcal{F}_\mani$. 
Notice that a split supermanifold is in particular projected. 
The corresponding obstruction theory to splitting a supermanifold is currently a compelling active research topic, see for example \cite{Betta} \cite{Noja}. \\
In this context, the \emph{fundamental obstruction class} to splitting a supermanifold $\mani$ is given by a class 
\bear
\omega_{\mani} \in H^1 (|\mani_{\mathpzc{red}}|, \mathcal{H}om_{\mathcal{O}_{\mani_{\mathpzc{red}}}} (\wedge^2 \mathcal{F}^\ast_\mani, \cat{T}_{\mani_{\mathpzc{red}}})) \cong H^1 (|\mani_{\mathpzc{red}}|, \cat{T}_{\mani_{\mathpzc{red}}} \otimes_{\mathcal{O}_{\mani_{\mathpzc{red}}}} \wedge^2 \mathcal{F}_\mani)
\eear
If $\omega_{\mani}$ is non-vanishing then $\mani$ is non-projected and in particular non-split. Whereas the fundamental obstruction class is always defined, higher obstruction classes 
\bear
H^1 (|\mani_{\mathpzc{red}}|, \mathcal{H}om_{\mathcal{O}_{\mani_{\mathpzc{red}}}} (\wedge^{2i+1} \mathcal{F}^\ast_\mani, \mathcal{F}^\ast_\mani)), \quad H^1 (|\mani_{\mathpzc{red}}|, \mathcal{H}om_{\mathcal{O}_{\mani_{\mathpzc{red}}}} (\wedge^{2i+2} \mathcal{F}^\ast_\mani, \cat{T}_{\mani_{\mathpzc{red}}}))
\eear
for $i\geq 1$ are defined if and only if all of the previous ones are vanishing, see the discussion in \cite{DonWit} for example. \\

%Whereas \emph{higher} obstruction classes to splitting a supermanifold might not be even defined, what counts to our purposes is that the so-called \emph{fundamental}  (or first) \emph{obstruction class} to splitting a supermanifold $\mani$ is always defined and it is given by a class $\omega_{\mani} \in H^1 (|\mani_{\mathpzc{red}}|, \mathcal{H}om (\wedge^2 \mathcal{F}^\ast_\mani, \cat{T}_{\mani_{\mathpzc{red}}})) \cong H^1 (|\mani_{\mathpzc{red}}|, \cat{T}_{\mani_{\mathpzc{red}}} \otimes \wedge^2 \mathcal{F}_\mani):$ if $\omega_{\mani}$ is non-vanishing then $\mani$ is non-projected and in particular non-split.\\
%{\bf ...storia sulle ostruzioni superiori $H^1 (|\mani_{\mathpzc{red}}|, \mathcal{H}om (\wedge^i \mathcal{F}^\ast_\mani, \mathcal{F}^\ast_\mani))$...}\\

\noindent A different criterion, actually a \emph{sufficient} condition, for the existence of a splitting of a supermanifold, has been given by Koszul in \cite{Kos}, relating the question about the splitting of $\mani$ to the existence of an affine connection on it.
\begin{theorem}[Koszul] \label{koszul} Let $\mani$ be a complex supermanifold. If $\mani$ admits an affine connection, then it is split. In particular, the affine connection defines a unique splitting of the supermanifold.
\end{theorem}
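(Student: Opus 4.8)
The plan is to turn the affine connection $\nabla$ into an explicit global isomorphism of sheaves of algebras $\mathcal{O}_\mani \cong \wedge^\bullet \mathcal{F}_\mani$ over $\mathcal{O}_{\mani_{\mathpzc{red}}}$, thereby exhibiting $\mani$ as split. The starting point is the nilpotent filtration $\mathcal{O}_\mani \supseteq \mathcal{J}_\mani \supseteq \mathcal{J}_\mani^2 \supseteq \cdots \supseteq \mathcal{J}_\mani^{q+1} = 0$, whose associated graded is canonically $\bigoplus_{k} \mathcal{J}_\mani^k / \mathcal{J}_\mani^{k+1} \cong \wedge^\bullet \mathcal{F}_\mani$. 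Being split means exactly that this filtration admits a global splitting compatible with the algebra structure, equivalently that the naive local grading - realized in coordinates $(z^i, \theta^\alpha)$ by the local Euler derivation $\sum_\alpha \theta^\alpha \partial_{\theta^\alpha}$, which acts on a degree-$k$ monomial in the $\theta$'s by multiplication by $k$ - can be globalized. The obstruction to globalizing it is precisely that the odd coordinates $\theta^\alpha$ transform with nonlinear (cubic and higher) corrections across charts, so the local Euler fields do not patch.

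First I would use $\nabla$ to cancel these corrections. The affine connection on $\cat{T}_\mani$ induces by duality and functoriality connections on $\Omega^1_\mani$, on $\mathcal{F}_\mani$ and on all of $\wedge^\bullet \mathcal{F}_\mani$; locally each is of the form $d_{\mathpzc{ev}} + A_i$ with connection matrices whose transition behavior is governed by exactly the $-(d_{\mathpzc{ev}} g_{ij}) g_{ij}^{-1}$ cocycle appearing in Theorem \ref{PseudoAt}. Using these I would define on each chart a corrected, \emph{covariant} Euler operator $\mathcal{E}_\nabla$ - the local Euler field dressed by the connection coefficients - and then verify, by a Leibniz/cocycle computation parallel to the one establishing \eqref{AtiyahCocy}, that the nonlinear discrepancies in the transformation of $\sum_\alpha \theta^\alpha \partial_{\theta^\alpha}$ are matched term by term by the inhomogeneous part of the transformation of the $A_i$. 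This yields a single globally defined even derivation $\mathcal{E}_\nabla$ of $\mathcal{O}_\mani$.

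Granting global well-definedness, I would then observe that $\mathcal{E}_\nabla$ preserves the filtration and induces multiplication by $k$ on $\mathrm{gr}^k$, so it is annihilated by the polynomial $\prod_{k=0}^{q}(t-k)$, which has distinct roots; hence $\mathcal{E}_\nabla$ is diagonalizable and produces an eigenspace decomposition $\mathcal{O}_\mani = \bigoplus_{k=0}^{q} \mathcal{O}_\mani^{(k)}$ splitting the filtration, with each $\mathcal{O}_\mani^{(k)} \xrightarrow{\sim} \mathrm{gr}^k \mathcal{O}_\mani \cong \wedge^k \mathcal{F}_\mani$. Because $\mathcal{E}_\nabla$ is a derivation, $\mathcal{O}_\mani^{(j)}\cdot \mathcal{O}_\mani^{(k)}\subseteq \mathcal{O}_\mani^{(j+k)}$, so the decomposition is an algebra grading and the resulting isomorphism $\mathcal{O}_\mani \cong \wedge^\bullet \mathcal{F}_\mani$ is one of sheaves of algebras, i.e.\ the desired global splitting. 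Uniqueness follows because the decomposition is canonically determined by $\nabla$: a different connection dresses the Euler field differently and, in general, yields a different eigenspace decomposition.

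The hard part will be the gluing step, i.e.\ proving that the covariant Euler operator is genuinely globally defined. This requires careful $\mathbb{Z}_2$-graded bookkeeping of signs and of the noncommutative matrix-valued connection coefficients, checking that the inhomogeneous, nonlinear terms in the transformation of the odd coordinates are cancelled exactly by the inhomogeneous part of the connection's transformation law; this is the genuinely supergeometric content, and the finite nilpotency order $q$ (namely $\mathcal{J}_\mani^{q+1}=0$) is what guarantees the construction terminates and produces an honest, rather than merely formal, isomorphism. An alternative and equivalent route would bypass the global operator and instead split the filtration one step at a time, showing inductively that the extension class relating $\mathcal{O}_\mani / \mathcal{J}_\mani^{k+1}$ to $\mathcal{O}_\mani / \mathcal{J}_\mani^{k+2}$ is trivialized by $\nabla$; this reproduces the fundamental obstruction class $\omega_\mani$ together with the higher classes discussed above and shows that each of them is killed by the connection.
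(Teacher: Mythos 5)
The paper itself offers no proof of this theorem --- its ``proof'' is the citation \emph{see \cite{Kos}, recently reviewed in \cite{Betta}} --- so your attempt has to be measured against Koszul's own argument. Your overall architecture is in fact the same as Koszul's: reduce splitness to the existence of a global even \emph{grading derivation} lifting the Euler derivation of the associated graded, then split by diagonalizing it. The second half of your sketch is correct and essentially complete: a global derivation $\mathcal{E}$ preserving the $\mathcal{J}_\mani$-adic filtration and inducing multiplication by $k$ on $\mathcal{J}^k_\mani/\mathcal{J}^{k+1}_\mani$ satisfies $\prod_{k=0}^{q}(\mathcal{E}-k)=0$, the Lagrange idempotents (polynomials in $\mathcal{E}$) then produce eigen-subsheaves $\mathcal{O}^{(k)}_\mani \xrightarrow{\;\sim\;} \mathcal{J}^k_\mani/\mathcal{J}^{k+1}_\mani$, and the derivation property makes the decomposition multiplicative, hence an algebra isomorphism $\mathcal{O}_\mani \cong \wedge^\bullet \mathcal{F}_\mani$. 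That reduction is sound.

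The genuine gap is the construction of the global operator $\mathcal{E}_\nabla$, which is where the entire content of the theorem lives, and the mechanism you propose for it is too weak as stated. Dressing the local Euler field with the connection matrices and cancelling against the inhomogeneous term of their transformation law --- ``a computation parallel to \eqref{AtiyahCocy}'' --- can only remove the \emph{lowest-order} discrepancy: the cocycle $-(d_{\mathpzc{ev}}g_{ij})g_{ij}^{-1}$ carries only second-order jet data of the coordinate changes, whereas the failure of $\sum_\alpha \theta^\alpha\partial_{\theta^\alpha}$ to glue involves the whole Taylor tail of the transition functions (the degree $3,4,\ldots$ corrections to the odd and even coordinates). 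The paper's own Theorem \ref{donwit} makes this precise: the restriction of $\mathfrak{At}(\cat{T}_\mani)$ to $\mani_{\mathpzc{red}}$ sees $\mathfrak{At}(\cat{T}_{\mani_{\mathpzc{red}}})$, $\mathfrak{At}(\mathcal{F}_\mani)$ and the \emph{fundamental} obstruction $\omega_\mani$ only; the higher obstruction classes are not components of this cocycle, so for odd dimension $q>2$ a single cocycle cancellation cannot yield a global grading field. What actually closes the argument --- and what \cite{Kos} and \cite{Betta} do --- is the iterative version: either your last-paragraph ``alternative route'' (induction on the nilpotent filtration, where after each partial splitting the induced connection is used again to kill the next extension class), or Koszul's closed-form construction by iterated covariant derivatives, whose coordinate-independence is proved once and for all rather than order by order. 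So the plan is salvageable, but the inductive argument you relegate to an aside must be promoted to the main proof; as written, the primary route stalls after the first order. (A minor further point: the uniqueness assertion should be justified by the canonicity of the construction from $\nabla$ --- no choices enter once $\mathcal{E}_\nabla$ is shown to be coordinate-independent --- not by the observation that different connections give different splittings.)
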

\begin{proof} see \cite{Kos}, recently reviewed in \cite{Betta}. 
\end{proof}
\noindent This result can in turn be related with a recent result by Donagi and Witten \cite{DonWit}. Indeed, when restricted to the reduced space, the tangent and cotangent sheaf $\cat{T}_\mani$ and $\cat{T}^\ast_\mani$ split into a direct sum of an even and an odd part. The latter is isomorphic to the fermionic sheaf or its dual in the case of the cotangent and tangent sheaf respectively. More precisely, one finds 
\begin{align}
\cat{T}_{\mani} |_{\mani_{\mathpzc{red}}} = \cat{T}_\mani \otimes_{\mathcal{O}_\mani} \mathcal{O}_{\mani_{\mathpzc{red}}} \cong \cat{T}_{\mani_{\mathpzc{red}}} \oplus \mathcal{F}^{\ast}_\mani,\\
\cat{T}^\ast_{\mani} |_{\mani_{\mathpzc{red}}} = \cat{T}^\ast_\mani \otimes_{\mathcal{O}_\mani} \mathcal{O}_{\mani_{\mathpzc{red}}} \cong \cat{T}^\ast_{\mani_{\mathpzc{red}}} \oplus \mathcal{F}_\mani.
\end{align} 
%In particular, posing $\cat{T}_{\mani} \lfloor_{\mani_{\mathpzc{red}}} \defeq \cat{T}_{\mani, + } \oplus \cat{T}_{\mani, -}$ one has $\cat{T}_{\mani, +} \cong \cat{T}_{\mani_{\mathpzc{red}}}$ and $\cat{T}_{\mani, -} \cong \mathcal{F}^{\ast}_\mani$, and similarly for $\cat{T}^\ast_\mani,$ the fundamental obstruction class take values in the cohomology group $H^1 (|\mani|, \mathcal{H}om (\wedge^2 \cat{T}_{\mani, -} , \cat{T}_{\mani, +})).$ 
In this spirit, one of the key result in \cite{DonWit} is concerned with the decomposition of the Atiyah class of $\cat{T}_\mani$ upon restriction of the tangent sheaf to the reduced manifold $\mani_{\mathpzc{red}}.$ 
\begin{theorem}[Donagi \& Witten] \label{donwit} Let $\mani$ be a complex supermanifold. Then, the restriction $\cat{\emph{T}}_{\mani} |_{\mani_{\mathpzc{red}}}$ of the tangent sheaf to $\mani_{\mathpzc{red}}$ induces the following decomposition of the cohomology group $H^1 (|\mani_{\mathpzc{red}}|, \mathcal{H}om_{\mathcal{O}_\mani} (S^2 \cat{\emph{T}}_\mani, \cat{\emph{T}}_\mani))$
\begin{align}
& H^1 (|\mani_{\mathpzc{red}}|, \mathcal{H}om_{\mathcal{O}_{\mani_{\mathpzc{red}}}} (\cat{\emph{S}}^2 \cat{\emph{T}}_\mani \lfloor_{\mani_{\mathpzc{red}}}, \cat{\emph{T}}_\mani \lfloor_{\mani_{\mathpzc{red}}})) \cong  \\
H^1 (|\mani_{\mathpzc{red}}|, \mathcal{H}om_{\mathcal{O}_{\mani_{\mathpzc{red}}}} (\cat{\emph{S}}^2 \cat{\emph{T}}_{\mani_{\mathpzc{red}}}, \cat{\emph{T}}_{\mani_{\mathpzc{red}}}))\,  \oplus \, & H^1 (|\mani_{\mathpzc{red}}|, \mathcal{H}om_{\mathcal{O}_{\mani_{\mathpzc{red}}}} (\wedge^2 \mathcal{F}_\mani, \cat{\emph{T}}_{\mani_{\mathpzc{red}}})) \oplus H^1 (|\mani_{\mathpzc{red}}|, \mathcal{H}om_{\mathcal{O}_{\mani_{\mathpzc{red}}}} ( \cat{\emph{T}}_{\mani_{\mathpzc{red}}} \otimes  \mathcal{F}_\mani, \mathcal{F}_\mani)). \nonumber
\end{align}
In particular, with respect to the above decomposition the Atiyah class $\mathfrak{At} (\cat{\emph{T}}_\mani)$ decomposes as follows
\bear \label{decompat}
\mathfrak{At} ( \cat{\emph{T}}_\mani ) \lfloor_{\mani_{\mathpzc{red}}} = \mathfrak{At} (\cat{\emph{T}}_{\mani_{\mathpzc{red}}}) \oplus \omega_{\mani} \oplus \mathfrak{At}( \mathcal{F}_\mani).
\eear 
where $\mathfrak{At} ( \cat{\emph{T}}_{\mani_{\mathpzc{red}}} )$ is the Atiyah class of the tangent sheaf of $\mani_{\mathpzc{red}}$, $\omega_{\mani}$ is first obstruction class, and $\mathfrak{At} (\mathcal{F}_\mani)$ is the Atiyah class of the fermionic sheaf.
\end{theorem}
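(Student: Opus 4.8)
The plan is to combine the \v{C}ech description of the super Atiyah class from Theorem~\ref{PseudoAt} with the even/odd splitting of $\cat{T}_\mani$ along $\mani_{\mathpzc{red}}$, and then to read off the three blocks of the representing cocycle.

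First I would set up the decomposition of the source. Along the reduced space one has $\cat{T}_\mani|_{\mani_{\mathpzc{red}}} \cong \cat{T}_{\mani_{\mathpzc{red}}} \oplus \mathcal{F}^\ast_\mani$, with $\cat{T}_{\mani_{\mathpzc{red}}}$ even and $\mathcal{F}^\ast_\mani$ odd. Applying the super-symmetric square and recalling that in the $\mathbb{Z}_2$-graded setting $\cat{S}^2$ of a purely odd module is its exterior square, I obtain
\beq
\cat{S}^2(\cat{T}_\mani|_{\mani_{\mathpzc{red}}}) \cong \cat{S}^2 \cat{T}_{\mani_{\mathpzc{red}}} \,\oplus\, (\cat{T}_{\mani_{\mathpzc{red}}} \otimes_{\mathcal{O}_{\mani_{\mathpzc{red}}}} \mathcal{F}^\ast_\mani) \,\oplus\, \wedge^2 \mathcal{F}^\ast_\mani,
\eeq
whose three summands are even, odd and even respectively. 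Since $\mathfrak{At}(\cat{T}_\mani)$ is an even section, I retain only the parity-preserving homomorphisms into $\cat{T}_\mani|_{\mani_{\mathpzc{red}}} = \cat{T}_{\mani_{\mathpzc{red}}} \oplus \mathcal{F}^\ast_\mani$; a short parity bookkeeping then selects exactly three blocks — the even sources $\cat{S}^2 \cat{T}_{\mani_{\mathpzc{red}}}$ and $\wedge^2 \mathcal{F}^\ast_\mani$ must map to the even target $\cat{T}_{\mani_{\mathpzc{red}}}$, while the odd source $\cat{T}_{\mani_{\mathpzc{red}}} \otimes \mathcal{F}^\ast_\mani$ must map to the odd target $\mathcal{F}^\ast_\mani$. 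Using the standard dualities $(\wedge^2 \mathcal{F}^\ast_\mani)^\ast \cong \wedge^2 \mathcal{F}_\mani$ and $\mathcal{E}nd(\mathcal{F}^\ast_\mani) \cong \mathcal{E}nd(\mathcal{F}_\mani)$ to rewrite the two fermionic blocks, and passing to $H^1$, gives precisely the claimed three-term direct-sum decomposition of $H^1(|\mani_{\mathpzc{red}}|, \mathcal{H}om_{\mathcal{O}_{\mani_{\mathpzc{red}}}}(\cat{S}^2 \cat{T}_\mani|_{\mani_{\mathpzc{red}}}, \cat{T}_\mani|_{\mani_{\mathpzc{red}}}))$.

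For the decomposition of the class itself I would represent $\mathfrak{At}(\cat{T}_\mani)$ by the cocycle $-(d_{\mathpzc{ev}} g_{ij}) g_{ij}^{-1}$ of Theorem~\ref{PseudoAt}, where the $g_{ij}$ are the transition matrices (super-Jacobians) of $\cat{T}_\mani$, and choose an atlas of $\mani$ with coordinates $(x^i, \theta^\alpha)$ adapted to the reduced structure. Setting the odd coordinates to zero makes $g_{ij}|_{\mani_{\mathpzc{red}}}$ block-diagonal with respect to $\cat{T}_{\mani_{\mathpzc{red}}} \oplus \mathcal{F}^\ast_\mani$, its diagonal blocks being the transition functions of $\cat{T}_{\mani_{\mathpzc{red}}}$ and of $\mathcal{F}^\ast_\mani$; filtering the whole cocycle by degree in the odd coordinates then organizes $-(d_{\mathpzc{ev}} g_{ij}) g_{ij}^{-1}$ into the three blocks isolated above. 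In the even–even block the computation is formally identical to the ordinary Atiyah construction of Theorem~\ref{PseudoAt} carried out for $\cat{T}_{\mani_{\mathpzc{red}}}$, and reproduces $\mathfrak{At}(\cat{T}_{\mani_{\mathpzc{red}}})$; in the odd–odd block the same computation applied to the fermionic transition functions reproduces $\mathfrak{At}(\mathcal{F}_\mani)$.

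The hard part will be the middle summand, that is, matching the remaining mixed block of $-(d_{\mathpzc{ev}} g_{ij}) g_{ij}^{-1}$ with the fundamental obstruction class $\omega_\mani \in H^1(|\mani_{\mathpzc{red}}|, \cat{T}_{\mani_{\mathpzc{red}}} \otimes_{\mathcal{O}_{\mani_{\mathpzc{red}}}} \wedge^2 \mathcal{F}_\mani)$. The point is that, although $g_{ij}|_{\mani_{\mathpzc{red}}}$ is block-diagonal, the even de Rham differential $d_{\mathpzc{ev}}$ differentiates $g_{ij}$ in the odd directions as well, so the genuine non-split datum of the atlas — the part of the even coordinate change $z_b(x,\theta)$ that is quadratic in the $\theta$'s — survives into $d_{\mathpzc{ev}} g_{ij}$ and feeds exactly the off-diagonal $\mathcal{H}om(\mathcal{F}^\ast_\mani, \cat{T}_{\mani_{\mathpzc{red}}})$ entry of the cocycle. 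I would then show that, after the antisymmetrization in the two fermionic indices forced by the $\wedge^2$ appearing in the source, this entry is precisely the \v{C}ech cocycle defining $\omega_\mani$, with the correct sign. Getting the normalization, the antisymmetrization, and the interaction between $d_{\mathpzc{ev}} g_{ij}$ and $g_{ij}^{-1}$ right in this mixed block is the delicate point; once it is settled, the three identifications assemble into the asserted decomposition \eqref{decompat}.
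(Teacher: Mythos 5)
You should know at the outset that the paper does not actually prove this theorem: its proof reads, in full, ``See \cite{DonWit}.'' Your proposal therefore supplies what the paper delegates to the literature, and what you sketch is essentially the argument of that reference, correctly transposed into the paper's own \v{C}ech machinery (the cocycle $-(d_{\mathpzc{ev}}g_{ij})g_{ij}^{-1}$ of Theorem \ref{PseudoAt}). The skeleton is right: $\cat{S}^2(\cat{T}_{\mani_{\mathpzc{red}}}\oplus\mathcal{F}^\ast_\mani)\cong \cat{S}^2\cat{T}_{\mani_{\mathpzc{red}}}\oplus(\cat{T}_{\mani_{\mathpzc{red}}}\otimes\mathcal{F}^\ast_\mani)\oplus\wedge^2\mathcal{F}^\ast_\mani$; evenness of $\mathfrak{At}(\cat{T}_\mani)$ discards the parity-reversing blocks; the super-Jacobian $g_{ij}$ is block-diagonal at $\theta=0$, so the diagonal blocks of the cocycle reproduce $\mathfrak{At}(\cat{T}_{\mani_{\mathpzc{red}}})$ and $\mathfrak{At}(\mathcal{F}_\mani)$; and the mixed block is fed, through the odd derivatives inside $d_{\mathpzc{ev}}$, by the $\theta$-quadratic part of the even transition functions, which is exactly the datum defining $\omega_\mani$. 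Incidentally, your source $\wedge^2\mathcal{F}^\ast_\mani$ (middle summand $H^1(|\mani_{\mathpzc{red}}|,\cat{T}_{\mani_{\mathpzc{red}}}\otimes\wedge^2\mathcal{F}_\mani)$) is the consistent choice: the statement as printed, with $\mathcal{H}om(\wedge^2\mathcal{F}_\mani,\cat{T}_{\mani_{\mathpzc{red}}})$, clashes with where the paper earlier placed $\omega_\mani$, so you have in effect corrected a typo. Two gaps keep this a plan rather than a proof, and you should close them explicitly. First, you defer the step that carries the theorem's content: verifying that the mixed block, after antisymmetrization and multiplication by the block-diagonal $g_{ij}^{-1}$, equals the cocycle of $\omega_\mani$ with the right normalization and sign --- without this, \eqref{decompat} is plausible but unproved. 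Second, you silently use that $-(d_{\mathpzc{ev}}g_{ij})g_{ij}^{-1}$ is (super)symmetric in its two covariant indices --- a consequence of $g_{ij}$ being a Jacobian, hence of the graded symmetry of second derivatives --- which is what entitles you to regard $\mathfrak{At}(\cat{T}_\mani)$ as a class in $H^1(|\mani_{\mathpzc{red}}|,\mathcal{H}om(\cat{S}^2\cat{T}_\mani,\cat{T}_\mani))$ at all. Both checks are routine; with them written out, your outline becomes a complete, self-contained proof of a statement the paper only cites.
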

\begin{proof} See \cite{DonWit}.
\end{proof}
\noindent This result together with Koszul's Theorem \ref{koszul} leads to the following for the geometry of $\Omega^1_{\tiny{\cat{M}}}$. 
\begin{theorem}[Splitting of $\Omega^{1}_{\tiny{\cat{M}}}$] \label{obsthm} Let $\mani$ be a complex supermanifold and let $\cat{\emph{M}}$ be constructed as above. Then any of the following is an obstruction to split $\Omega^{1}_{\tiny{\cat{\emph{M}}}}$-extension \eqref{omegases}:
\begin{enumerate}[leftmargin=*]
\item[(1)] $\mathfrak{At} (\cat{\emph{T}}_{\mani_{\mathpzc{red}}}) \neq 0$, \emph{i.e.}\ $\cat{\emph{T}}_{\mani_{\mathpzc{red}}}$ does not admit a holomorphic connection;
\item[(2)] $\mathfrak{At} (\mathcal{F}_\mani ) \neq 0$, \emph{i.e.}\ $\mathcal{F}_\mani$ does not admit a holomorphic connection;
\item[(3)] $\omega_{\mani} \neq 0$, \emph{i.e.} $\mani$ is non-projected or non-split.  
\end{enumerate}
In particular, a necessary condition for the $\Omega^{1}_{\tiny{\cat{\emph{M}}}}$-extension to split is that $\mani$ is a {split} supermanifold. 
\end{theorem}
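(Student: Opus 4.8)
The plan is to deduce the statement directly from the splitting criterion of Theorem~\ref{SplitAt}, combined with the Donagi--Witten decomposition of the super Atiyah class (Theorem~\ref{donwit}) and Koszul's splitting criterion (Theorem~\ref{koszul}). The whole argument requires only one implication, namely that a splitting of the $\Omega^1_{\tiny{\cat{M}}}$-extension forces the vanishing of certain cohomology classes; no delicate converse is needed, since all three conditions are asserted to be obstructions, i.e.\ necessary conditions.

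First I would record that, by Theorem~\ref{SplitAt}, the $\Omega^1_{\tiny{\cat{M}}}$-extension \eqref{omegases} splits if and only if $\mathfrak{At}(\cat{T}_\mani) = 0$ as a class in $H^1(|\mani_{\mathpzc{red}}|, \cat{T}^\ast_\mani \otimes_{\mathcal{O}_\mani} \mathcal{E}nd_{\mathcal{O}_\mani}(\cat{T}_\mani))$. Assuming the extension is split, this gives the vanishing of $\mathfrak{At}(\cat{T}_\mani)$. Next I would pass to the reduced space: restriction along the canonical embedding $\iota : \mani_{\mathpzc{red}} \hookrightarrow \mani$ induces a pullback on \v{C}ech cohomology, so the vanishing of $\mathfrak{At}(\cat{T}_\mani)$ immediately implies the vanishing of the restricted class $\mathfrak{At}(\cat{T}_\mani)\lfloor_{\mani_{\mathpzc{red}}}$. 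It is precisely this (trivial) functoriality of restriction — and not any converse statement — that the obstruction claim needs.

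At this point I invoke Theorem~\ref{donwit}: the cohomology group hosting the restricted class splits as a direct sum of three independent summands, and correspondingly one has $\mathfrak{At}(\cat{T}_\mani)\lfloor_{\mani_{\mathpzc{red}}} = \mathfrak{At}(\cat{T}_{\mani_{\mathpzc{red}}}) \oplus \omega_\mani \oplus \mathfrak{At}(\mathcal{F}_\mani)$ as in \eqref{decompat}. Since the three classes are the projections of $\mathfrak{At}(\cat{T}_\mani)\lfloor_{\mani_{\mathpzc{red}}}$ onto distinct, independent direct summands, the vanishing of the total restricted class forces each of the three components to vanish separately, with no possible cancellation between them. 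Reading this contrapositively yields exactly statements (1), (2), (3): if any one of $\mathfrak{At}(\cat{T}_{\mani_{\mathpzc{red}}})$, $\mathfrak{At}(\mathcal{F}_\mani)$, or $\omega_\mani$ is non-zero, then $\mathfrak{At}(\cat{T}_\mani) \neq 0$, and hence the $\Omega^1_{\tiny{\cat{M}}}$-extension cannot split.

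Finally, for the concluding \virgolette in particular'' assertion I would argue as follows: if the extension splits, then by Theorem~\ref{SplitAt} the supermanifold $\mani$ admits an affine connection, and Koszul's Theorem~\ref{koszul} then guarantees that $\mani$ is split. Hence splitness of $\mani$ is a necessary condition for the extension to split, as claimed. The only point deserving explicit care is the bookkeeping in the previous paragraph, where the direct-sum decomposition is used to conclude the vanishing of each individual summand; I expect this mild organizational step — rather than any substantial computation — to be the main obstacle, since all the heavy lifting is already carried out in the three cited results.
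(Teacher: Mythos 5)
Your proposal is correct and follows essentially the same route as the paper's proof: Theorem~\ref{SplitAt} to identify splitting with the vanishing of $\mathfrak{At}(\cat{T}_\mani)$, the Donagi--Witten decomposition \eqref{decompat} to extract the three component obstructions, and Koszul's Theorem~\ref{koszul} for the final splitness claim. The only difference is that you spell out two steps the paper leaves implicit --- the functoriality of restriction to $\mani_{\mathpzc{red}}$ and the fact that vanishing in a direct sum forces vanishing of each summand --- which is a useful clarification rather than a departure.
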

\begin{proof} The $\Omega^1_{\tiny{\cat{M}}}$-extension is split if and only if the Atiyah class of $\cat{T}_\mani$ vanishes by Theorem \ref{SplitAt} and the three obstructions to split the $\Omega^1_{\tiny{\cat{M}}}$-extension follow from the decomposition at the Atiyah class given in \eqref{decompat} of Theorem \ref{donwit}. Finally, the vanishing of the Atiyah class implies the existence of an affine connection, which is equivalent to the existence of a splitting for $\mani$ by Theorem \ref{koszul}.
\end{proof}
%{\color{red} Not sure this is correct...!
%\begin{corollary}[Splitting of $\Omega^{1}_{\tiny{\cat{M}}}$ in dimension $n|2$] Let $\mani$ be a complex supermanifold of dimension $n|2$ and let $\cat{\emph{M}}$ be constructed as above from $\mani$. Then the $\Omega^{1}_{\tiny{\cat{\emph{M}}}}$-extension \eqref{omegases} splits if and only if 
%\begin{enumerate}[leftmargin=*]
%\item[(1)] $\mathfrak{At} (\cat{\emph{T}}_{\mani_{\mathpzc{red}}}) = 0$, \emph{i.e.}\ $\cat{\emph{T}}_{\mani_{\mathpzc{red}}}$ admits a holomorphic connection;
%\item[(2)] $\mathfrak{At} (\mathcal{F}_\mani ) = 0$, \emph{i.e.}\ $\mathcal{F}_\mani$ admits a holomorphic connection;
%\item[(3)] $\omega_{\mani} = 0$, \emph{i.e.} $\mani$ is split.  
%\end{enumerate}
%\end{corollary}}
\noindent We conclude this sections with some general remarks and speculation on the nature of the super Atiyah class for a complex supermanifold.   
\begin{remark}[Super Atiyah Class \& Super Characteristic Classes] It should be clear by the above considerations that the vanishing of the super Atiyah class provides a very strong constraint on the geometry of a complex or algebraic supermanifold. Namely, the following is an immediate consequence of the Koszul's result, Theorem \ref{koszul}, and the very definition of split supermanifold.
\begin{corollary} Let $\mani$ be a complex supermanifold of dimension $n|m$ such that $\mathfrak{At} (\cat{\emph{T}}_\mani) = 0,$ then $\mani $ is split. In particular all of the obstruction classes to splitting $\mani$ vanish, \emph{i.e.}\ for any $i = 1, \ldots, \lfloor m/2 \rfloor$
\bear
    H^1 (|\mani_{\mathpzc{red}}|, \mathcal{H}om_{\mathcal{O}_{\mani_{\mathpzc{red}}}} (\wedge^{2i} \mathcal{F}^\ast_\mani, \cat{\emph{T}}_{\mathpzc{red}} ) = 0, \quad  H^1 (|\mani_{\mathpzc{red}}|, \mathcal{H}om_{\mathcal{O}_{\mani_{\mathpzc{red}}}} (\wedge^{2i+1}\mathcal{F}_\mani, \mathcal{F}^\ast_\mani )) = 0.
\eear
%\begin{proof} Follows from Theorem \ref{koszul}, and the fact that a split supermanifold has vanishing obstruction classes by definition   
%\end{proof}
%\bear
%H^1 (|\mani_{\mathpzc{red}}|, \mathcal{H}om (\wedge^i \mathcal{F}^\ast_\mani, ( \cat{\emph{T}}_{\mani} \lfloor_{\mani_{\mathpzc{red}}})_{i \mbox{\tiny{\emph{mod}}} 2} ) = 0
%\eear 
\end{corollary}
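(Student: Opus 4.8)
The plan is to read off both assertions from the two structural results already in hand, with essentially no new computation. First I would note that the hypothesis $\mathfrak{At}(\cat{T}_\mani) = 0$ is, by the Pseudo-Atiyah Theorem \ref{PseudoAt}, precisely equivalent to the existence of an affine connection on $\mani$. Feeding this into Koszul's Theorem \ref{koszul} immediately yields that $\mani$ is split, and moreover that the connection determines the splitting. This disposes of the first sentence of the statement outright.

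For the vanishing of the obstruction classes I would argue from the inductive structure of the splitting obstruction theory. Recall that these classes are defined order by order: the fundamental class $\omega_\mani \in H^1(|\mani_{\mathpzc{red}}|, \mathcal{H}om_{\mathcal{O}_{\mani_{\mathpzc{red}}}}(\wedge^2 \mathcal{F}^\ast_\mani, \cat{T}_{\mani_{\mathpzc{red}}}))$ measures the failure to split at second order, and each higher class living in $H^1(|\mani_{\mathpzc{red}}|, \mathcal{H}om_{\mathcal{O}_{\mani_{\mathpzc{red}}}}(\wedge^{2i}\mathcal{F}^\ast_\mani, \cat{T}_{\mani_{\mathpzc{red}}}))$ resp. $H^1(|\mani_{\mathpzc{red}}|, \mathcal{H}om_{\mathcal{O}_{\mani_{\mathpzc{red}}}}(\wedge^{2i+1}\mathcal{F}^\ast_\mani, \mathcal{F}^\ast_\mani))$ is only defined once all preceding ones vanish, whereupon it obstructs extending a partial splitting by one further order. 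A split structure furnishes a global isomorphism $\mathcal{O}_\mani \cong \wedge^\bullet \mathcal{F}_\mani$, i.e.\ a splitting compatible at every order at once; consequently each successive obstruction is automatically well-defined and zero. This is the desired conclusion, with the understanding that the displayed formulas assert the vanishing of the obstruction \emph{classes} (distinguished elements of the listed groups), not of the ambient cohomology groups themselves.

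As an independent cross-check of the lowest orders I would invoke the Donagi--Witten decomposition \eqref{decompat} of Theorem \ref{donwit}: restricting the vanishing class $\mathfrak{At}(\cat{T}_\mani)$ to $\mani_{\mathpzc{red}}$ forces each summand $\mathfrak{At}(\cat{T}_{\mani_{\mathpzc{red}}})$, $\omega_\mani$ and $\mathfrak{At}(\mathcal{F}_\mani)$ to vanish separately, recovering in particular $\omega_\mani = 0$ directly from the hypothesis. The only genuine subtlety I anticipate is the inductive bookkeeping for the higher obstructions, namely guarding against a circular appeal to successive vanishing. This is handled cleanly because \emph{split} is a single global statement that produces the splitting in one stroke rather than relying on the order-by-order extension; each obstruction class is therefore simultaneously well-defined and trivial as a formal consequence of splitness. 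Everything else is a direct concatenation of Theorems \ref{PseudoAt} and \ref{koszul}.
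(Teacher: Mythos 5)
Your proposal is correct and follows essentially the same route as the paper, which derives the corollary as an immediate consequence of Theorem \ref{PseudoAt} (vanishing Atiyah class $\Leftrightarrow$ affine connection), Koszul's Theorem \ref{koszul} (connection $\Rightarrow$ split), and the very definition of a split supermanifold. Your added care in distinguishing the vanishing of the obstruction \emph{classes} from the vanishing of the ambient cohomology groups (the paper's displayed formulas are an abuse of notation on this point), and the Donagi--Witten cross-check for $\omega_\mani$, are sensible refinements but do not change the argument.
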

\noindent This, together with Theorem \ref{donwit}, should make apparent the existence of a close relation between the super Atiyah class and the obstructions to splitting a complex supermanifold (see also the recent \cite{Betta} on this regard). \\
It should be stresed indeed that, in a classical setting, for a compact complex K\"{a}hler manifold $X$, the Atiyah class of tangent bundle contains informations about all the Chern classes $c_k (X) = c_{k} (\cat{T}_X)$ of the manifold \cite{Huy}. In particular, if $X$ admits a holomorphic connections, \emph{i.e.}\ the Atiyah class of $\cat{T}_X$ vanishes, then all the Chern classes vanishes as well, \emph{i.e.}\ $c_k (X) = 0$ for any $k>0$. It can be reasonably conjectured that, in a very similar fashion, also on a complex supermanifold $\mani$ the Atiyah class of the tangent bundle $\cat{T}_\mani$ contains informations about \emph{all} of the characteristic classes related to $\cat{T}_\mani$. The difference relies in that among these are to be counted not only the Atiyah classes of the reduced manifold $\mani_{\mathpzc{red}}$ and of the fermionic sheaf $\mathcal{F}_\mani$ - as seen in Theorem \ref{donwit} - and hence the related Chern classes, but also all of the obstruction classes to splitting the supermanifold, which indeed arises as well from the tangent bundle $\cat{T}_\mani$ of the supermanifold. Whereas the fundamental obstruction $\omega_\mani \in H^1 (|\mani_{\mathpzc{red}}|, \cat{T}_{\mani_{\mathpzc{red}}} \otimes_{\mathcal{O}_{\mani_{\mathpzc{red}}}} \wedge^2 \mathcal{F}_\mani)$ arises from the super Atiyah class upon restriction to $\mani_{\mathpzc{red}}$ \cite{DonWit}, it is quite reasonable to imagine that higher obstruction classes would arise upon allowing for higher fermionic terms in the filtration of $\mathcal{O}_\mani$ by the ideal sheaf of nilpotent sections $\mathcal{J}_\mani$ of the supermanifold
\bear
\mathcal{O}_{\mani_{\mathpzc{red}}} \subset \mathcal{O}_{\mani} / \mathcal{J}_\mani^2 \subset \ldots \subset \mathcal{O}_\mani / {\mathcal{J}_\mani^n} \subset \mathcal{O}_\mani,
\eear
and thus considering the restriction of $\cat{T}_\mani$ to the canonical superscheme contained in $\mani$ defined by the pair $\mani^{(\ell)} \defeq (|\mani_{\mathpzc{red}}|, \mathcal{O}_\mani / \mathcal{J}^\ell_\mani)$ for $\ell > 1$, \emph{i.e.} 
\bear
\cat{T}_\mani |_{\mani^{(\ell)}} = \cat{T}_\mani \otimes_{\mathcal{O}_\mani} \mathcal{O}_\mani / \mathcal{J}^\ell_{\mani} \cong \cat{T}_\mani /\mathcal{J}^\ell_\mani \cat{T}_\mani.
\eear

\end{remark}
%\noindent Remarkably, all of these instances occurs in actual examples. 

\section{Examples and Further Results} 

\noindent In this section we discuss and comment some examples. First, it is obvious that the complex supermanifold $\mathbb{C}^{n|m}$ admits a splitting for the $\Omega^1_{\tiny{\cat{M}}}$-extension for any values of $n$ and $m$. %Namely, we have the following immediate example.    
\begin{example}[$\mathbb{C}^{n|m}$] Let $\mathbb{C}^{n|m} \defeq (|\mathbb{C}^n|, \mathcal{O}_{\mathbb{C}^{n|m}})$ be the complex supermanifold with structure sheaf given by $\mathcal{O}_{\mathbb{C}^{n|m}} \defeq \mathcal{O}_{\mathbb{C}^n} \otimes \wedge^\bullet [ \theta] $ and let $\cat{M}$ be the supermanifold constructed from $\mathbb{C}^{n|m}$ as in Definition \ref{defM}. Then, for any $n$ and $m$ the $\Omega^1_{\tiny{\cat{M}}}$-extension is split.
\vspace{5pt}\\
This follows from the fact that the tangent bundle $\cat{T}_{\mathbb{C}^{n|m}}$ of $\mathbb{C}^{n|m}$ is trivial, and as such it admits a connection. Then, by Theorem \ref{PseudoAt} its super Atiyah class $\mathfrak{At} (\cat{T}_{\mathbb{C}^{n|m}})$ is trivial, and in turn, by Theorem \ref{SplitAt} the $\Omega^1_{\tiny{\cat{M}}}$-extension related to $\mathbb{C}^{n|m}$ splits. %The same results holds true also in the algebraic category, using Zariski topology.
\end{example} 
%\noindent The result follows immediately from the fact that any higher-cohomology group vanishes on $\mathbb{C}^n$. Notice that the same result applies to the algebraic category substituting $\mathbb{C}^{n|m}$ with the algebraic supermanifold $\mathbb{A}_{\mathbb{C}}^{n|m} = (|\mathbb{A}^n_\mathbb{C}|, \mathcal{O}_{\mathbb{A}_{\mathbb{C}}^{n|m}})$ for $\mathbb{A}^n_{\mathbb{C}} = \mathfrak{spec} ( \mathbb{C}[x_1, \ldots, x_n])$ and $\mathcal{O}_{\mathbb{A}^{n|m}_\mathbb{C}} \defeq \mathbb{C}[x_1, \ldots, x_n] \otimes \wedge^\bullet [ \theta]$, since any sheaf on the affine space $\mathbb{A}^n_{\mathbb{C}}$ has vanishing higher-cohomology. \\
\noindent A way more interesting example is provided by complex Lie supergroups - the reader is suggested to refer to \cite{Vish} for the relevant definitions. 

\begin{example}[Complex Lie Supergroups $\mathpzc{G}$] Let $\mathpzc{G}$ be a complex Lie supergroup. Then $\mathpzc{G}$ admits a holomorphic connection and hence it is split. In particular, if $\cat{M}$ is the supermanifold constructed from $\mathpzc{G}$ as in Definition \ref{defM}, the $\Omega^1_{\tiny{\cat{M}}}$-extension splits.
\vspace{5pt}\\
The result follows from the fact that, as in the ordinary theory, a complex Lie supergroup is parallelizable, \emph{i.e.}\ its tangent bundle $\cat{T}_{\mathpzc{G}}$ is trivial. Just like in the ordinary theory this depends on the existence of a group structure on the supermanifold $\mathpzc{G}$. In turn, since the tangent bundle $\cat{T}_{\mathpzc{G}}$ of $\mathpzc{G}$ is trivial, then $\mathpzc{G}$ admits a connection. It follows from Theorem \ref{koszul} that $\mathpzc{G}$ is split and from Theorem \ref{PseudoAt} and Theorem \ref{SplitAt} that the related $\Omega^1_{\tiny{\cat{M}}}$-extension splits.
\end{example}
\noindent It is to be stressed that the case of homogeneous supermanifolds, \emph{i.e.}\ quotients of Lie supergroups by some closed Lie sub-(super)group, is more delicate: indeed, complex homogeneous supermanifolds can indeed be non-split, thus not admitting holomorphic connection, see \cite{Vish}.\\
We now move to (complex) projective superspaces $\mathbb{CP}^{n|m}$, which are defined as the complex supermanifolds given by the pair 
$
\mathbb{CP}^{n|m} \defeq (|\mathbb{CP}^n|, \mathcal{O}_{\mathbb{CP}^{n|m}} )
$
where $\mathcal{O}_{\mathbb{CP}^{n|m}} \defeq \wedge^\bullet_{\mathcal{O}_{\mathbb{CP}^n}}(\Pi \mathcal{O}_{\mathbb{CP}^{n}} (-1)^{\oplus m})$. Notice that according to the discussion after Theorem \ref{SplitAt} above, complex projective superspaces $\mathbb{CP}^{n|m}$ are by definition split supermanifolds, in particular their fermionic sheaf reads $\mathcal{F}_{\mathbb{CP}^{n|m}} \defeq \Pi \mathcal{O}_{\mathbb{C}^{n|m}} (-1)^{\oplus m}.$ 
See \cite{CN} for a dedicated paper. %It is immediate to see that no projective superspace $\mathbb{CP}^{n|m}$ admits a splitting for the $\Omega^1_{\tiny{\cat{M}}}$-extension, as a consequence of the well-known fact that no projective space $\mathbb{CP}^n$ for $n\geq 1$ admits a holomorphic affine connection for $n\geq 1$. We state this fact as a lemma.
%first proving that the $\Omega^{1}_{\tiny{\cat{{M}}}}$-extension does not split whenever $\cat{M}$ is constructed from any (complex) \emph{projective superspace} $\mathbb{CP}^{n|m}$. The interested reader can refer to \cite{CN} for the supergeometry of projective superspace $\mathbb{CP}^{n|m}$.
\begin{example}[$\mathbb{CP}^{n|m}$] \label{CPAt} Let $\mathbb{CP}^{n|m}$ any complex projective superspace and let $\cat{{M}}$ be the supermanifold constructed from $\mathbb{CP}^{n|m}$ as above. Then, for any value of $n\geq 1$ and $m\geq0$ the $\Omega^1_{\tiny{\cat{{M}}}}$-extension is not split.
\vspace{5pt}\\
The result can be seen to follows from point (1) of Theorem \ref{obsthm}. For this, one needs to prove that projective spaces $\mathbb{CP}^{n}$ do not admit affine holomorphic connections. 
For $n \geq 1$ the (dual of the) Euler exact sequence tensored by $\mathcal{E}nd_{\mathcal{O}_{\mathbb{CP}^n}}(\cat{T}_{\mathbb{CP}^n})$ reads
\bear \label{eulertwisted}
\xymatrix{
0 \ar[r] & (\cat{T}^\ast_{\mathbb{CP}^n})^{\otimes 2} \otimes \cat{T}_{\mathbb{CP}^n } \ar[r] & (\cat{T}_{\mathbb{CP}^n}^{\ast}  \otimes \cat{T}_{\mathbb{CP}^n} (-1))\otimes{\mathbb{C}}^{\oplus n+1} \ar[r] & \cat{T}^\ast_{\mathbb{CP}^n} \otimes \cat{T}_{\mathbb{CP}^n} \ar[r] & 0.
}
\eear
Here all tensor products are over $\mathcal{O}_{\mathbb{CP}^{n}}.$ In the case $n > 1$ one computes
%This is nothing but the (dual) Euler exact sequence twisted by $\cat{T}^\ast_{\mathbb{CP}^n} \otimes \cat{T}_{\mathbb{CP}^n}$. Appropriate use of (twisted) Euler exact sequence yields that 
\bear
H^0 (\mathbb{CP}^n , (\cat{T}_{\mathbb{CP}^n}^{\ast}  \otimes \cat{T}_{\mathbb{CP}^n} (-1))) =  0, \qquad H^1 (\mathbb{CP}^n, \cat{T}_{\mathbb{CP}^n}^{\ast}  \otimes \cat{T}_{\mathbb{CP}^n} (-1)) = \mathbb{C}^{n+1}, 
\eear
\bear
H^0 (\mathbb{CP}^n , (\cat{T}_{\mathbb{CP}^n}^{\ast}  \otimes \cat{T}_{\mathbb{CP}^n} ) =  \mathbb{C}, \qquad H^1 (\mathbb{CP}^n, \cat{T}_{\mathbb{CP}^n}^{\ast}  \otimes \cat{T}_{\mathbb{CP}^n} (-1)) = 0.
\eear
It follows that long cohomology sequence associated to \eqref{eulertwisted} reads
\bear
\xymatrix{
0 \ar[r] & H^0 (\mathbb{CP}^n , \cat{T}_{\mathbb{CP}^n}^{\ast}  \otimes \cat{T}_{\mathbb{CP}^n} ) \cong \mathbb{C} \ar[r]^{ \delta \qquad } & H^1 (\mathbb{CP}, (\cat{T}^\ast)_{\mathbb{CP}^n}^{\otimes 2} \otimes \cat{T}_{\mathbb{CP}^n }) \cong \mathbb{C}^{(n+1)^2 + 1} \ar[r] & \ldots,
}
\eear
and injectivity of the map implies that $\mathfrak{At} (\cat{T}_{\mathbb{CP}^n}) \neq 0.$ In the case of $\mathbb{CP}^1$, one has that $\cat{T}^\ast_{\mathbb{CP}^1} \cong \mathcal{O}_{\mathbb{CP}^1} (-2)$ and $\cat{T}_{\mathbb{CP}^1} \cong \mathcal{O}_{\mathbb{CP}^1} (+2).$ The long cohomology exact sequence reads
\bear
\xymatrix{
0 \ar[r] & H^0 (\mathbb{CP}^1 , \cat{T}_{\mathbb{CP}^1}^{\ast}  \otimes \cat{T}_{\mathbb{CP}^1} ) \cong \mathbb{C} \ar[r]^{\; \delta} & H^1 (\mathbb{CP}^1, (\cat{T}^\ast_{\mathbb{CP}^1})^{\otimes 2} \otimes \cat{T}_{\mathbb{CP}^1 }) \ar[r] & 0.
}
\eear
Alternatively, one can just observe that for a line bundle the Atiyah class equals the Chern class, \emph{i.e.}\ $\mathfrak{At} (\cat{T}_{\mathbb{CP}^1}) = c (\cat{T}_{\mathbb{CP}^1})$ and $c (\cat{T}_{\mathbb{CP}^1}) = c_1 (\cat{T}_{\mathbb{CP}^1}) = 2,$ see \cite{Huy}. %and the above isomorphism of $\mathbb{C}$-vector spaces reads explicitly $
%\mathbb{C} \cdot \big \langle dz \otimes \partial_{z} \big \rangle \mapsto \mathbb{C} \cdot \big \langle \frac{{dz}^{\otimes 2} \otimes \partial_{z}}{z} \big \rangle. $ 
It then follows from Theorem \ref{obsthm} that the $\Omega^1_{\tiny{\cat{M}}}$-extension related to $\mathbb{CP}^{n|m}$ does not split. 
\end{example}
\begin{remark} Note that in the ordinary complex geometric setting, the vanishing of the Atiyah class $\mathfrak{At}({E}) $ of a certain (Hermitian holomorphic) vector bundle ${E}$ on a compact complex manifold $X$ is equivalent for the vector bundle $E$ to be flat. More precisely, the class of the curvature $[\mathcal{F}_{\nabla^\mathpzc{C}_{E}}] \in H^1 (X, \cat{T}^\ast_X \otimes_{\mathcal{O}_X} \mathcal{E}nd (E))$ of the Chern connection $\nabla_{E}^\mathpzc{C}$ of the vector bundle $E$ corresponds to its Atiyah class $\mathfrak{At} (E)$. This result is by no mean true in a complex supergeometric setting. Indeed all of the Calabi-Yau projective superspaces $\mathbb{CP}^{n|n+1}$ for $n\geq1$ admit a flat Chern connection via a generalization of the Fubini-Study metric \cite{NojaCY}, but none of them have vanishing super Atiyah class as shown above. %This is a typical example of how things are subtler than expected in complex supergeometry. 
\end{remark}
\noindent Another interesting example is provided by the possibly easiest - yet non-trivial - complete intersection in $\mathbb{CP}^{2|2}$, corresponding to the super conic $\mathcal{C} \subset \mathbb{CP}^2$ cut out by the equation 
\bear \label{conicS}
X_0^2 + X_1^2 + X_2^2 + \Theta_1 \Theta_2 = 0  \quad \subset \quad \mathbb{CP}^{2|2}.
\eear   
The supergeometry related to the above equation in projective superspace is non-trivial and not that well-known outside an audience of experts in supergeometry. To help the reader make sense and appreciate this example, we have opted to discuss some of the more peculiar aspects of the geometry of the super conic in Appendix \ref{app2}. We thus refer to it for more informations. Here we content ourself to say that the resulting $1|2$-dimensional supermanifold is notably isomorphic to the non-projected supermanifold constructed out of the following three data $(\mathcal{C}_{\mathpzc{red}} = \mathbb{CP}^1, \mathcal{F}_\mathcal{C} = \mathcal{O}_{\mathbb{CP}^1} (-2)^{\oplus 2}, \omega_\mathcal{C} = 1)$, where the non-zero obstruction class $\omega_\mathcal{C} \in H^1 (\mathbb{CP}^1, \cat{T}_{\mathbb{CP}^1} \otimes \wedge^2 \mathcal{F}_\mani)$ is seen via the isomorphism $ H^1 (\mathbb{CP}^1, \cat{T}_{\mathbb{CP}^1} \otimes \wedge^2 \mathcal{F}_\mathcal{C}) \cong H^1 (\mathbb{CP}^1, \mathcal{O}_{\mathbb{CP}^1} (-2)) \cong \mathbb{C}$. %It is to be noted that the (even) Picard group $\mbox{Pic}_0 (\mathcal{C})$ has a continuous part - it is isomorphic to $\mathbb{Z} \oplus \mathbb{C}$ - due to the non-projected nature of $\mathcal{C}$ and indeed the above embedding in $\mathbb{CP}^{2|2}$ is realized via the global sections of a line bundle on $\mathcal{C} $ which has a non-trivial continuos component in $\mbox{Pic}_0 (\mathcal{C}).$ 
We have already seen that the Atiyah class of $\mathbb{CP}^1$ is non-zero in Example \ref{CPAt}, moreover the rank 2 vector bundle $\mathcal{F}_\mathcal{C} = \mathcal{O}_{\mathbb{CP}^1} (-2) \oplus \mathcal{O}_{\mathbb{CP}^1} (-2)$ is obviously not flat, and hence it has a non-trivial Atiyah class. We can thus conclude that $\Omega^1_{\tiny{\cat{M}}}$-extension related to the super conic $\mathcal{C}$ is totally obstructed in the sense of Theorem \ref{obsthm}, \emph{i.e.}\ all of the obstructions in the points (1)-(3) in the statement of Theorem \ref{obsthm} are non-zero. We summarize this in the following. 
 \begin{example}[Super Conic $\mathcal{C}$] \label{SuperConic} Let $\mathcal{C}$ be the complete intersection given by the equation $X_0^2 + X_1^2 + X_2^2 + \Theta_1 \Theta_2 = 0$ in $\mathbb{CP}^{2|2}$ and let $\cat{{M}}$ be the supermanifold constructed from $\mathcal{C}$ as above. Then the related $\Omega^1_{\tiny{\cat{{M}}}}$-extension is not split.
\end{example}
\noindent Finally, let us consider a slightly trickier example, namely that of a super elliptic curve $\mathpzc{E}$ of dimension $1|3$ modeled on an ordinary elliptic curve $\mathpzc{E}_{\mathpzc{red}} = E$ and whose rank 3 fermionic sheaf is given by the direct sum $\mathcal{F}_{\mathpzc{E}} = \mathcal{O}_E^{\oplus 3}$. %Note that in this case Theorem \ref{dim2} does not in general provide 
We further assume that the fundamental obstruction class $\omega_\mathpzc{E} \in H^1 (E, \cat{T}_E \otimes \wedge^2 \mathcal{O}_E^{\oplus 3}) $ is vanishing, \emph{i.e.}\ $\omega_\mathpzc{E} = (0, 0, 0)$ in the isomorphism $H^1 (E, \cat{T}_E \otimes \wedge^2 \mathcal{O}_E^{\oplus 3}) = H^1 (E, \mathcal{O}_E^{\oplus 3}) \cong \mathbb{C}^3$. Under these hypotheses one has that all of the points (1)-(3) in Theorem \ref{obsthm} are indeed satisfied since also $\mathfrak{At} (\mathpzc{E}_\mathpzc{red}) = 0 = \mathfrak{At} (\mathcal{F}_\mathpzc{E})$. % indeed the first Chern class (which equals the Atiyah class) of the tangent bundle of $\mathpzc{E}_{\mathpzc{red}} = E$ is trivial and the bundle $\mathcal{O}_{E}^{\oplus 3}$ admits a holomorphic connection. 
 Nonetheless, the $\Omega^1_{\tiny{\cat{M}}}$-extension related to $\mathpzc{E}$ might still be non-split, since $\mathpzc{E}$ is projected but not necessarily split as a complex supermanifold. Indeed the higher obstruction to split $\mathpzc{E}$ takes values in the cohomology group $H^1 (E, \mathcal{F}^\ast_\mathpzc{E} \otimes \wedge^3 \mathcal{O}_E^{\oplus 3}),$ which is computed to be isomorphic to $\mathbb{C}^{\oplus 3}$. A non-vanishing class obstruction class $\omega^{(3)}_{\mathpzc{E}} = (c_1, c_2, c_3) \neq 0$ would correspond to odd transition functions of the kind $\theta_i \mapsto \theta_i + c_i \theta_1 \theta_2 \theta_3$ for some complex number $c_i$. In this case, if $\mathpzc{E}$ is a non-split supermanifold, the $\Omega^1_{\tiny{\cat{M}}}$-extension related to $\mathpzc{E}$ is also non-split. The upshot of the example is that in the case of supermanifolds of odd dimension greater than $2$, the presence of higher obstruction classes to split a supermanifold $\mani$ is quite a delicate issues.  This is summarized in the following. 
 \begin{example}[Super Elliptic Curve of dimension $1|3$] Let $\mathpzc{E}$ be a supermanifold of dimension $1|3$ such that $\mathpzc{E}_{\mathpzc{red}} = E$ for $E$ an elliptic curve (over $\mathbb{C}$) and $\mathcal{F}_{\mathpzc{E}} = \mathcal{O}^{\oplus 3}_{{E}}$ and let its fundamental obstruction class $\omega_\mathpzc{E} \in H^1 (E, \cat{T}_E \otimes  \wedge^2 \mathcal{O}^{\oplus 3}_{E} ) $ be zero. Then the higher obstruction to split $\mathpzc{E}$
\bear
\omega^{(3)}_{\mathpzc{E}} \in H^1 (E, \mathcal{F}^\ast_\mathpzc{E} \otimes \wedge^3 \mathcal{O}_E^{\oplus 3}) \cong H^1 (E, \mathcal{O}^{\oplus 3}_E) \cong \mathbb{C}^{\oplus 3},
\eear
is defined and fully determines the geometry of $\mathpzc{E}$. In particular, $\omega^{(3)}_\mathpzc{E}$ is also an obstruction to split the the $\Omega^1_{\tiny{\cat{M}}}$-extension related to $\mathpzc{E}$, \emph{i.e.} if $\mathpzc{E}$ is non-split, then the $\Omega^1_{\tiny{\cat{M}}}$-extension does not split as well. 
\end{example}

%{\bf Examples on curves of any genus...? At least elliptic curve? It would be nice to show some higher-obstruction due to non-splitness in the case of a $1|2$, but also $1|3$ elliptic curve... In particular: in the case $1|3$ it would be nice to see how the full Atiyah class look...}

\section{Local Theory: Forms, Natural Operators and Cohomology}

\noindent In this section we study forms and natural, \emph{i.e.}\ globally defined and invariant, operators acting on $\Omega^\bullet_{\tiny{\cat{M}}}$ in the smooth and holomorphic category. We start by making contact between our framework and the setting developed by the author and collaborators in \cite{CNR}, where {differential} and {integral} forms on a real or complex supermanifold $\mani$ are recovered in a unified fashion starting from the {triple} tensor product of natural sheaves on $\mani$ given by $\Omega^\bullet_{\mani} \otimes_{\stsheaf} \mathcal{D}_\mani \otimes_{\stsheaf} (\Omega^\bullet_{\mani})^\ast,$ where $\mathcal{D}_\mani$ is the \emph{sheaf of differential operators} on $\mani$. Notice that due to the pivotal presence of $\mathcal{D}_\mani$ this is a {non-commutative} construction, better than just super-commutative. \\
This triple tensor product is acted upon by two globally defined {mutually commuting} operators $\hat d$ and $\hat \delta$, see \cite{CNR}
\bear
\begin{tikzcd}
\arrow[loop left, "\hat d"]\qquad \qquad \qquad \Omega^\bullet_{\mani} \otimes_{\stsheaf} \mathcal{D}_\mani \otimes_{\stsheaf} (\Omega^\bullet_{\mani})^\ast \qquad \qquad \qquad \arrow[loop right, "\hat \delta"] 
\end{tikzcd}
\eear
and as such it yields a double complex of sheaves, $(\Omega^\bullet_{\mani} \otimes_{\stsheaf} \mathcal{D}_\mani \otimes_{\stsheaf} (\Omega^\bullet_{\mani})^\ast , \hat \delta, \hat d)$. Differential and integral forms, together with their differentials and {Poincaré lemmas} are recovered via the two natural spectral sequences - we call them $E^{\hat d}_i$ and $E^{\hat \delta}_i$, depending on whether we are starting computing the cohomology with respect of $\hat d $ or $\hat \delta$ - related to this double complex $(\Omega^\bullet_{\mani} \otimes_{\stsheaf} \mathcal{D}_\mani \otimes_{\stsheaf} (\Omega^\bullet_{\mani})^\ast , \hat \delta, \hat d)$. In particular, the following holds true.
\begin{theorem}[\cite{CNR}] Let $\mani$ be a real or complex supermanifold. Then the spectral sequences $E_i^{\hat \delta}$ and $E_i^{\hat d}$ related to the double complex $(\Omega^\bullet_{\mani} \otimes_{\stsheaf} \mathcal{D}_\mani \otimes_{\stsheaf} (\Omega^\bullet_{\mani})^\ast, \hat \delta, \hat d)$ 
\begin{enumerate}[leftmargin=*]
\item yield the \emph{differential forms} and \emph{integral forms} on $\mani$ at page 1, \emph{i.e.}\
\bear
E^{\hat \delta}_{1} \cong \Omega^\bullet_{\mani}, \qquad E_1^{\hat d} \cong \mathcal{B}er (\mani) \otimes_{\stsheaf} (\Omega^\bullet_{\mani})^\ast;
\eear
\item both converge to the locally constant sheaf $\mathbb{K}_{\mani}$ for $\mathbb{K}$ the real or complex numbers at page 2, \emph{i.e.}\ 
\bear
E^{\hat \delta}_2  = E^{\hat \delta}_\infty = \mathbb{K}_{\mani}, \qquad E^{\hat d}_2 = E^{\hat d}_\infty \cong \mathbb{K}_\mani.
\eear
\end{enumerate}
\end{theorem}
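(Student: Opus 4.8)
The plan is to run the two filtrations of the double complex $(\Omega^\bullet_{\mani} \otimes_{\stsheaf} \mathcal{D}_\mani \otimes_{\stsheaf} (\Omega^\bullet_\mani)^\ast, \hat\delta, \hat d)$ independently, computing each first page as a ``vertical'' cohomology and then identifying the induced differential on that page with a super de Rham-type operator controlled by a Poincar\'e lemma. Since both spectral sequences abut to the cohomology of the total complex, showing that each degenerates to the locally constant sheaf $\mathbb{K}_\mani$ establishes part (2) and simultaneously pins down the total cohomology.

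To get $E_1^{\hat\delta}$ I would take cohomology with respect to $\hat\delta$ first. As $\hat\delta$ acts only on the two rightmost factors, it turns $\mathcal{D}_\mani \otimes_{\stsheaf} (\Omega^\bullet_\mani)^\ast = \mathcal{D}_\mani \otimes_{\stsheaf} \cat{S}^\bullet \Pi\cat{T}_\mani$ into the Spencer complex, a locally free resolution of $\stsheaf$ as a left $\mathcal{D}_\mani$-module; its $\hat\delta$-cohomology is $\stsheaf$ concentrated in a single degree, and the surviving factor yields $E_1^{\hat\delta} \cong \Omega^\bullet_\mani$. Dually, to get $E_1^{\hat d}$ I would take $\hat d$-cohomology first. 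Now $\hat d$ acts on the two leftmost factors, and $\Omega^\bullet_\mani \otimes_{\stsheaf} \mathcal{D}_\mani$ is the de Rham complex of $\mathcal{D}_\mani$, which resolves the super dualizing sheaf $\mathcal{B}er(\mani)$ (the supergeometric replacement of the canonical sheaf) with cohomology concentrated in top degree; tensoring by the surviving $(\Omega^\bullet_\mani)^\ast$ gives $E_1^{\hat d} \cong \mathcal{B}er(\mani) \otimes_{\stsheaf} (\Omega^\bullet_\mani)^\ast$. This proves part (1) and displays the characteristic asymmetry of supergeometry: differential forms and integral forms arise from the two distinct filtrations of one and the same object.

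On these first pages the residual differentials are induced by the complementary operator: on $E_1^{\hat\delta} \cong \Omega^\bullet_\mani$ it is the exterior derivative $d$, and on $E_1^{\hat d} \cong \mathcal{B}er(\mani)\otimes_{\stsheaf}(\Omega^\bullet_\mani)^\ast$ it is the differential for integral forms recalled in Section 2. The super Poincar\'e lemma for differential forms gives $H^\bullet(\Omega^\bullet_\mani, d) \cong \mathbb{K}_\mani$ locally, and the corresponding Poincar\'e lemma for integral forms gives $H^\bullet(\mathcal{B}er(\mani)\otimes_{\stsheaf}(\Omega^\bullet_\mani)^\ast) \cong \mathbb{K}_\mani$. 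In both cases the second page is concentrated in a single bidegree, so there is no room for higher differentials; hence $E_2^{\hat\delta} = E_\infty^{\hat\delta} \cong \mathbb{K}_\mani$ and $E_2^{\hat d} = E_\infty^{\hat d} \cong \mathbb{K}_\mani$, both equal to the common total cohomology.

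The main obstacle is the integral-forms side. Identifying the $\hat\delta$-cohomology with $\stsheaf$ and the induced $d$-cohomology with $\mathbb{K}_\mani$ runs parallel to the classical Spencer resolution and de Rham theorem, but the two genuinely supergeometric inputs -- the identification of the $\hat d$-cohomology of $\Omega^\bullet_\mani \otimes_{\stsheaf} \mathcal{D}_\mani$ with the Berezinian $\mathcal{B}er(\mani)$, and the Poincar\'e lemma for the resulting complex of integral forms -- are exactly where the failure of naive Poincar\'e duality on supermanifolds makes itself felt. Proving these carefully, for example by a local computation in a split chart that reduces the even directions to the ordinary statements and handles the odd directions through the transformation rule of the Berezinian, is the technical core of the argument.
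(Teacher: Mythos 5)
Your proof idea is correct and coincides with the approach of \cite{CNR}, which is the proof the paper points to (the theorem is stated here as a recollection from that reference, with no independent argument given). The ingredients you identify --- the super Spencer resolution of $\stsheaf$ by $\mathcal{D}_\mani \otimes_{\stsheaf} (\Omega^\bullet_\mani)^\ast$ computing $E_1^{\hat\delta} \cong \Omega^\bullet_\mani$, the Koszul/de Rham-type resolution of $\mathcal{B}er(\mani)$ by $\Omega^\bullet_\mani \otimes_{\stsheaf} \mathcal{D}_\mani$ computing $E_1^{\hat d}$, the identification of the induced page-one differentials with the de Rham and integral-form differentials, and the two super Poincar\'e lemmas forcing degeneration at page two --- are exactly those used there.
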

\noindent A consequence of the above is that the hypercohomologies of differential and integral forms are isomorphic, and both coincides with the Rham cohomology of the reduced manifold, see also the recent \cite{Noja}.
\begin{corollary}[\cite{CNR}] \label{quasi} Let $\mani$ be a real supermanifold. Then the hypercohomologies of (the sheaf of) differential forms $H^\bullet_{\mathpzc{dif}} (\mani)$ and integral forms $ H^\bullet_{\mathpzc{int}} (\mani)$ are isomorphic. In particular, one finds 
\bear
H^\bullet_{\mathpzc{dif}} (\mani) \cong \check{H}^\bullet (|{\mani_{\mathpzc{red}}}|, \mathbb{R}_{\mani}) \cong H^\bullet_{\mathpzc{int}} (\mani).
\eear
  \end{corollary}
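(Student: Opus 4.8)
The plan is to extract the statement directly from the cited Theorem by reading its two items as Poincar\'e lemmas \emph{at the level of sheaves}, and then to transport the result to (hyper)cohomology using that the latter is an invariant of the quasi-isomorphism class of a complex of sheaves. Throughout I work in the smooth real category, so all coefficients are $\mathbb{R}$ and the constant sheaf is $\mathbb{R}_\mani$.

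First I would observe that the content of the Theorem, interpreted sheaf-theoretically, is exactly that both the de Rham complex $(\Omega^\bullet_\mani, d)$ and the complex of integral forms $(\mathcal{B}er(\mani) \otimes_{\mathcal{O}_\mani} (\Omega^\bullet_\mani)^\ast, d)$ are resolutions of $\mathbb{R}_\mani$. Indeed, item (1) identifies these two complexes, equipped with their induced differentials, as the first pages $E^{\hat\delta}_1$ and $E^{\hat d}_1$ of the two spectral sequences of the double complex $(\Omega^\bullet_{\mani} \otimes_{\mathcal{O}_\mani} \mathcal{D}_\mani \otimes_{\mathcal{O}_\mani} (\Omega^\bullet_{\mani})^\ast, \hat\delta, \hat d)$; item (2), read as a statement about cohomology \emph{sheaves}, says that each of these complexes is exact in positive degree and has $\mathbb{R}_\mani$ as its zeroth cohomology sheaf. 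In other words, the degenerations $E^{\hat\delta}_2 = E^{\hat\delta}_\infty = \mathbb{R}_\mani$ and $E^{\hat d}_2 = E^{\hat d}_\infty = \mathbb{R}_\mani$ are precisely the super-Poincar\'e lemmas for differential and for integral forms respectively.

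Next I would invoke the fact that the hypercohomology $\mathbb{H}^\bullet$ of a complex of sheaves is an invariant of its quasi-isomorphism class. By definition one has $H^\bullet_{\mathpzc{dif}} (\mani) = \mathbb{H}^\bullet(|\mani_{\mathpzc{red}}|, \Omega^\bullet_\mani)$ and $H^\bullet_{\mathpzc{int}} (\mani) = \mathbb{H}^\bullet(|\mani_{\mathpzc{red}}|, \mathcal{B}er(\mani)\otimes_{\mathcal{O}_\mani}(\Omega^\bullet_\mani)^\ast)$. Since by the previous step both complexes are quasi-isomorphic to $\mathbb{R}_\mani$ placed in degree zero, I would conclude
\[
H^\bullet_{\mathpzc{dif}} (\mani) \cong \mathbb{H}^\bullet(|\mani_{\mathpzc{red}}|, \mathbb{R}_\mani) \cong H^\bullet_{\mathpzc{int}} (\mani).
\]
Finally, the underlying topological space of $\mani$ is the ordinary manifold $|\mani_{\mathpzc{red}}|$ and $\mathbb{R}_\mani$ is the constant sheaf on it, so its hypercohomology collapses to the ordinary sheaf cohomology of the constant sheaf, which on a paracompact, locally contractible manifold is computed by \v{C}ech cohomology, giving $\mathbb{H}^\bullet(|\mani_{\mathpzc{red}}|, \mathbb{R}_\mani) \cong \check{H}^\bullet(|\mani_{\mathpzc{red}}|, \mathbb{R}_\mani)$ and hence the stated chain of isomorphisms.

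The delicate point to handle carefully is the passage from the spectral sequence of the double complex of \emph{sheaves} to the assertion that each $E_1$-complex is a genuine resolution of $\mathbb{R}_\mani$: one must be sure that the $E_1$ page really is the corresponding complex with its induced differential, and that \virgolette$E_2 = E_\infty = \mathbb{R}_\mani$'' is read as identifying the cohomology sheaves, concentrated in degree zero. Granting the cited Theorem this is immediate. As a sanity check I would also note the alternative, more hands-on route available in the smooth category: each $\Omega^k_\mani$, and likewise each term of $\mathcal{B}er(\mani)\otimes_{\mathcal{O}_\mani}(\Omega^\bullet_\mani)^\ast$, is a fine sheaf (it admits partitions of unity), hence soft and acyclic, so that each resolution computes $\check{H}^\bullet(|\mani_{\mathpzc{red}}|, \mathbb{R}_\mani)$ directly through its complex of global sections; this bypasses hypercohomology entirely, but the derived-category argument is cleaner and makes the independence from the chosen resolution manifest.
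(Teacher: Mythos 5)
Your proposal is correct and follows essentially the same route as the paper, which deduces the corollary from the preceding theorem of \cite{CNR}: the convergence statements $E_2^{\hat\delta} = E_\infty^{\hat\delta} = \mathbb{K}_\mani$ and $E_2^{\hat d} = E_\infty^{\hat d} = \mathbb{K}_\mani$ are read as sheaf-level Poincar\'e lemmas, so both complexes are resolutions of the constant sheaf and their hypercohomologies coincide with $\check{H}^\bullet(|\mani_{\mathpzc{red}}|, \mathbb{R}_\mani)$. Your identification of the delicate point (reading page two as cohomology sheaves concentrated in degree zero) and the fine-sheaf sanity check are both consistent with the smooth-category setting the paper works in.
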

\noindent One the main ingredient of the above construction is the non-commutative sheaf of differential operator $\mathcal{D}_\mani$. To $\mathcal{D}_\mani$ is canonically associated a sheaf of {super-commutative} $\mathcal{O}_\mani$-algebras, by considering the filtration $\mathcal{D}_\mani^{(\leq i)} \subseteq \mathcal{D}_\mani^{(\leq i+1)}$ by the {degree} of the differential operators for any $i \geq 0$. This is given by the quotient
\bear
\mbox{gr}^\bullet (\mathcal{D}_\mani) \defeq \bigoplus_{i = 0}^\infty \slantone{\mathcal{D}^{(\leq i)}_{\mani }}{\mathcal{D}^{(\leq i-1)}_{\mani}}.
\eear 
It is not hard to see that $\mbox{gr}^\bullet (\mathcal{D}_\mani ) \cong \cat{S}^\bullet \cat{T}_\mani$. In this way, a \emph{\virgolette de-quantization''} of the above triple tensor product reads
\bear \label{dequant} \xymatrix{
\Omega^\bullet_{\mani} \otimes_{\stsheaf} \mathcal{D}_\mani \otimes_{\stsheaf} (\Omega^\bullet_{\mani})^\ast \ar@{~>}[rr]^{\; \mbox{\emph{\tiny{de-quantization}}} \quad} && \Omega^\bullet_{\mani} \otimes_{\stsheaf} \cat{S}^\bullet \cat{T}_\mani  \otimes_{\stsheaf} (\Omega^\bullet_{\mani})^\ast.
}
\eear
This sheaf of super-commutative $\mathcal{O}_\mani$-algebras can be put in relation with the sheaf $\Omega^\bullet_{\cat{\tiny{M}}}$, seen as a sheaf of $\mathcal{O}_\mani$-modules. Indeed one can observe that the $\Omega^1_{\cat{\tiny{M}}}$-extension \eqref{extomega} is always locally split, so that over an open set $\pi^{-1} (U)$ of $\cat{M}$ for $U$ an open set in $\mani$, one has 
\bear
\Omega^1_{\pi^{-1} (U)} \cong \pi^\ast \Omega^1_{U} \oplus \pi^\ast \cat{T}_U.
\eear
This holds true globally for a real supermanifold, as proved above. It follows that %one finds the decomposition  
\begin{align} \label{deco}
\Omega^\bullet_{\pi^{-1}(U)} & % \cat{S}^\bullet_{\mathcal{O}_{\pi^{-1} (U)}} \left ( \pi^\ast \Omega^1_{U} \oplus \pi^\ast \cat{T}_U \right ) 
\cong \cat{S}^\bullet_{\mathcal{O}_{\mani}} \left ( \pi^\ast \Omega^1_{U} \oplus \pi^\ast \cat{T}_U \right ) \otimes \mathcal{O}_{\pi^{-1} (U)}
%& \cong \cat{S}^\bullet \Omega^1_{U} \otimes \cat{S}^\bullet \cat{T}_{U} \otimes (\Omega^\bullet_{U})^\ast \\
 \cong \Omega^\bullet_{U} \otimes_{\mathcal{O}_U} \cat{S}^\bullet \cat{T}_U \otimes_{\mathcal{O}_U} (\Omega^\bullet_{U})^\ast,
\end{align}
to be compared to \eqref{dequant} above. Using the decomposition \eqref{deco}, the action of the \emph{de Rham differential} $ \cat{d} : \Omega^{\bullet}_{\tiny{\cat{M}}} \rightarrow \Omega^{\bullet}_{\tiny{\cat{M}}}$ can be given in an open set $\pi^{-1} (U)$ with local coordinate $ x_a$ and $p_a$ as follows:
\begin{align} \label{deRhamdiff}
%\cat{d} (\omega \otimes F \otimes \tau ) &
 \cat{d} (\eta \otimes F \otimes f(x, p)) & =  (-1)^{|\eta| + |F|} \left ( \eta \otimes F \otimes dx_a \frac{\partial f}{\partial x_a} + \eta \otimes F \otimes dp_a \frac{\partial f}{\partial p_a} \right )  \\
& = (-1)^{|F||x_a| + |\eta|} \eta dx_a \otimes F \otimes \frac{\partial f}{\partial x_a} + (-1)^{|\eta| + |F|} \eta \otimes F dp_a \otimes \frac{\partial f}{\partial p_a}, \nonumber
\end{align}
where $\eta \in \Omega^\bullet_\mani$, $F \in \cat{S}^\bullet \cat{T}_\mani$ and $f \in (\Omega^\bullet_\mani )^\ast = \mathcal{O}_{\cat{\tiny{M}}}.$ The sum over $a$ is left understood. %The previous local expression \eqref{deRhamdiff} for $\cat{d}$ yields immediately the following result. 
\begin{theorem}[Homology of \cat{d} / Poincaré Lemma] Let $\cat{\emph{M}}$ be defined as above and let $\cat{\emph{d}} : \Omega^{\bullet}_{\tiny{\cat{\emph{M}}}} \rightarrow \Omega^{\bullet}_{\tiny{\cat{\emph{M}}}}$ the de Rham differential. Then 
\bear
H_{\tiny{\cat{\emph{d}}}} (\Omega^\bullet_{\tiny{\cat{\emph{M}}}}) \cong \mathbb{K}_{\tiny{\cat{\emph{M}}}},
\eear
where $\mathbb{K}_{\tiny{\cat{\emph{M}}}}$ is the sheaf of locally-constant functions on $\cat{\emph{M}}$ for $\mathbb{K}$ the real or complex numbers.
\end{theorem}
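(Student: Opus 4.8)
The plan is to reduce the computation of $H_{\cat{d}}(\Omega^\bullet_{\tiny{\cat{M}}})$ to a local Poincaré-type statement, leveraging the explicit local form of the differential $\cat{d}$ given in \eqref{deRhamdiff}. Since $\mathbb{K}_{\tiny{\cat{M}}}$ is a sheaf, the statement is local on $\cat{M}$, so I would first pass to an open set of the form $\pi^{-1}(U)$ with $U$ a coordinate chart for $\mani$ carrying coordinates $x_a$, and corresponding fiber coordinates $p_a$. Over such a set, the decomposition \eqref{deco} identifies $\Omega^\bullet_{\pi^{-1}(U)}$ with the triple tensor product $\Omega^\bullet_U \otimes_{\mathcal{O}_U} \cat{S}^\bullet \cat{T}_U \otimes_{\mathcal{O}_U} (\Omega^\bullet_U)^\ast$, and on this model $\cat{d}$ acts by differentiating the last tensor factor $f(x,p)$ with respect to both the even/odd base coordinates $x_a$ (producing a $dx_a$) and the fiber coordinates $p_a$ (producing a $dp_a$), as displayed in \eqref{deRhamdiff}. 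The point is that $\cat{d}$ is precisely the total de Rham differential of the local supermanifold $\pi^{-1}(U) \cong U \times \mathbb{K}^{p+q|p+q}$ in all $2(p+q)$ coordinates $(x_a, p_a)$.

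\emph{Second}, I would invoke the ordinary super Poincaré lemma. Locally $\pi^{-1}(U)$ is diffeomorphic (or biholomorphic, in the complex case) to a superdomain, and its full de Rham complex $(\Omega^\bullet, \cat{d})$ is exact in positive degree with $0$-th cohomology the locally constant sheaf $\mathbb{K}$. Concretely, one builds a contracting homotopy $h$ with $\cat{d}h + h\cat{d} = \mathrm{id} - (\text{projection onto constants})$ by the standard integration/fiber-contraction operator along the linear scaling of all coordinates, adapted to the $\mathbb{Z}_2$-graded setting so that the odd coordinates and the even coordinates are treated uniformly. The key algebraic input is that the three tensor factors assemble into a \emph{Koszul-type} complex: differentiating by $\partial/\partial p_a$ against multiplication by $dp_a$ (and likewise $\partial/\partial x_a$ against $dx_a$) is exactly the Koszul differential on a polynomial/exterior algebra, whose homology is concentrated in the trivial piece. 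I would spell out the homotopy on monomials and check $\cat{d}^2 = 0$ together with the homotopy identity, which reduces to the classical one-variable computation in each coordinate direction.

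\emph{Third}, having established that the homology sheaf of $(\Omega^\bullet_{\tiny{\cat{M}}}, \cat{d})$ is the constant sheaf $\mathbb{K}$ on each $\pi^{-1}(U)$, I would note that these local identifications are compatible with the gluing, since $\cat{d}$ is a globally defined operator (it is the genuine de Rham differential of $\cat{M}$, independent of the local splitting \eqref{deco}); hence the locally constant sheaves patch to $\mathbb{K}_{\tiny{\cat{M}}}$ globally. The identification of degree-zero cohomology with locally constant functions follows because $\cat{d}f = 0$ forces all partial derivatives $\partial f/\partial x_a$ and $\partial f/\partial p_a$ to vanish, which for a polynomial-in-the-fibers function with smooth/holomorphic coefficients means $f$ is locally constant.

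\emph{The main obstacle} I anticipate is the careful bookkeeping of signs and parities in the $\mathbb{Z}_2$-graded homotopy. In \eqref{deRhamdiff} the generators $dx_a$ carry parity opposite to $x_a$ and $dp_a$ opposite to $p_a$, so the contracting homotopy must interleave the even-Koszul directions (genuine integration, where one must produce a primitive) with the odd-Koszul directions (purely algebraic, where the homotopy is a finite contraction). Ensuring that the partial homotopies in the several coordinate directions \emph{commute up to the right signs} and assemble into a single global homotopy — rather than accumulating anomalous signs that spoil $\cat{d}h + h\cat{d} = \mathrm{id} - \pi_{\mathrm{const}}$ — is the delicate technical heart of the argument. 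An alternative that sidesteps some of this is to observe that $\Omega^\bullet_{\tiny{\cat{M}}}$ with $\cat{d}$ is, locally, the tensor product of the de Rham complexes in each coordinate, and then to apply a graded Künneth formula, so that exactness reduces to the rank-one cases already controlled classically.
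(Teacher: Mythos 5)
Your proposal is correct and takes essentially the same route as the paper: the paper's own proof consists precisely of observing that, by \eqref{deRhamdiff}, $\cat{d}$ is locally the full de Rham differential of the superdomain $\pi^{-1}(U)$ of dimension $p+q|p+q$, and then invoking the ordinary Poincar\'e lemma for supermanifolds (citing \cite{Noja}). The extra material you supply --- the explicit scaling homotopy, the sign bookkeeping, and the gluing of the local identifications via the global nature of $\cat{d}$ --- simply fills in the standard input that the paper delegates to that reference.
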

\begin{proof} Given the action of the de Rham differential $\cat{d}$ in \eqref{deRhamdiff}, the result follows from the ordinary Poincaré lemma for supermanifolds, see for example \cite{Noja}. \end{proof}

\subsection{Odd Symplectic Form and its Cohomology} Let us keep working in the smooth or holomorphic category and let us now consider the (non-degenerate) odd 2-form $\omega = \sum_a dx_a dp_a \in (\Omega^2_{\tiny{\cat{M}}})_1$ where the index $a$ runs over both even and odd coordinates. We first observe the following.
\begin{lemma}[Global Definition on $\omega$] The odd 2-form $\omega = \sum_a dx_a dp_a \in (\Omega^2_{\tiny{\cat{\emph{M}}}})_1$ is invariant, \emph{i.e.}\ coordinate independent. 
\end{lemma}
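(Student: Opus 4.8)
The plan is to verify invariance by a direct substitution of the transition functions of Lemma~\ref{transM} and Lemma~\ref{transCotM} into $\omega = \sum_a dx_a\,dp_a$, and to check that one recovers $\sum_b dz_b\,dq_b$ in the new chart. First I would replace $dp_a$ by its expression \eqref{tfcot2}, which splits $\omega$ into two contributions: a \emph{diagonal} piece built from $\frac{\partial z_b}{\partial x_a}\,dq_b$, and an \emph{affine} piece built from $(-1)^{|x_a|+|z_b|}\,d\!\left(\frac{\partial z_b}{\partial x_a}\right) q_b$.

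For the diagonal piece I would additionally substitute $dx_a = dz_c\,\frac{\partial x_a}{\partial z_c}$ from \eqref{tfcot1}, obtaining $\sum_{a,c,b} dz_c\,\frac{\partial x_a}{\partial z_c}\,\frac{\partial z_b}{\partial x_a}\,dq_b$. The inner contraction $\sum_a \frac{\partial x_a}{\partial z_c}\,\frac{\partial z_b}{\partial x_a}$ is precisely the product of the super-Jacobian of the coordinate change with its inverse, which by the graded chain rule equals $\delta_{cb}$; this collapses the diagonal piece to $\sum_c dz_c\,dq_c$, the desired expression in the new chart.

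The crux is to show that the affine piece vanishes identically. Expanding $d\!\left(\frac{\partial z_b}{\partial x_a}\right) = dx_f\,\frac{\partial^2 z_b}{\partial x_f\,\partial x_a}$, this piece becomes $\sum_{a,f,b}(-1)^{|x_a|+|z_b|}\,dx_a\,dx_f\,\frac{\partial^2 z_b}{\partial x_f\,\partial x_a}\,q_b$. I would then compare the summand $S_{af}$ with the one obtained by exchanging $a \leftrightarrow f$, using two graded symmetries: the mixed partials obey $\frac{\partial^2 z_b}{\partial x_a\,\partial x_f} = (-1)^{|x_a||x_f|}\frac{\partial^2 z_b}{\partial x_f\,\partial x_a}$, while in the supersymmetric algebra of forms $dx_f\,dx_a = (-1)^{(|x_a|+1)(|x_f|+1)}dx_a\,dx_f$, since $dx_a$ has parity $|x_a|+1$ (the de Rham differential being odd). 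Keeping the coefficient function on the right throughout, these two facts combine with the explicit prefactor to give $S_{fa} = -S_{af}$ for all $a,f$, so the sum cancels in pairs while the diagonal terms $a=f$ are forced to zero.

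The main obstacle I expect is the sign bookkeeping, and in particular the observation that the vanishing of the affine piece is \emph{not} a generic consequence of the two graded symmetries alone: the form symmetry and the second-derivative symmetry only combine to a factor $(-1)^{|x_a|+|x_f|+1}$, which equals $+1$ precisely when $x_a$ and $x_f$ have opposite parity. It is exactly the supplementary prefactor $(-1)^{|x_a|+|z_b|}$ coming from \eqref{tfcot2} that restores full graded antisymmetry $S_{fa} = -S_{af}$ in every parity sector. Pinning down this interplay, together with the routine but delicate signs incurred in the diagonal term when applying the graded chain rule in the right-module conventions of the excerpt, is where the care is needed; once the affine piece is seen to vanish, the identification of the diagonal piece with $\sum_c dz_c\,dq_c$ is comparatively immediate.
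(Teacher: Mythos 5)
Your proposal is correct and follows essentially the same route as the paper's proof: substitute the transition functions of Lemma \ref{transCotM} into $\omega = \sum_a dx_a\,dp_a$, identify the Jacobian-contracted piece with $\sum_b dz_b\,dq_b$ via the graded chain rule, and show that the second-derivative piece vanishes. The paper dismisses that last cancellation as ``easy to verify''; your sign analysis --- in particular the observation that the prefactor $(-1)^{|x_a|+|z_b|}$ is what restores graded antisymmetry in the mixed-parity sector, where the form symmetry and the symmetry of mixed partials alone would give a $+1$ --- is a correct and complete account of exactly that step.
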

\begin{proof} We use the coordinate transformations of Lemma \ref{transCotM}. %passing from a system of local coordinates $(x_a, p_a)$ to another system of local coordinates given by $(z_a, q_a)$. 
One finds that 
\begin{align}
dx_a \left ( \frac{\partial z_b}{\partial x_a} dq_b  + (-1)^{|x_a| + |z_b|} d\left ( \frac{\partial z_b}{\partial x_a} \right ) q_b \right )  = dz_a dq_a + (-1)^{|x_a| + |z_b|} dx_a dx_c \left ( \frac{\partial^2 z_b}{\partial x_c \partial x_a} \right ) q_b,
\end{align}
and it is easy to verify that the contribution of the second term is zero.  
\end{proof}
\noindent Another way to see that $\omega$ is actually invariant is to introduce its \emph{primitive form}
$
\eta \defeq (-1)^{|x_a| + 1} dx_a p_a.
$
\begin{lemma}[Primitive Form of $\omega$] The primitive form of $\omega$ is invariant, moreover one has $\cat{\emph{d}} \eta = \omega$. In particular, $\omega$ is invariant.
\end{lemma}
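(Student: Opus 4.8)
The plan is to establish the two displayed assertions separately — the identity $\cat{d}\eta = \omega$ and the invariance of $\eta$ — and then to deduce invariance of $\omega$ as a formal consequence. I would begin with the identity $\cat{d}\eta = \omega$, which is the cleaner of the two. Writing $\eta = \sum_a (-1)^{|x_a|+1} dx_a\, p_a$ and applying the exterior derivative as an odd derivation, the Leibniz rule gives $\cat{d}(dx_a\, p_a) = (\cat{d}\,dx_a)\,p_a + (-1)^{|dx_a|} dx_a\,(\cat{d}\,p_a)$. Since $\cat{d}^2 = 0$ kills the first summand and $|dx_a| = |x_a| + 1$, this collapses to $(-1)^{|x_a|+1} dx_a\,dp_a$. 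Multiplying by the prefactor $(-1)^{|x_a|+1}$ produces $(-1)^{2(|x_a|+1)} dx_a\,dp_a = dx_a\,dp_a$, and summing over $a$ yields exactly $\omega = \sum_a dx_a\,dp_a$. No delicate sign analysis is needed here, as the two parity factors square away.

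For the invariance of $\eta$, I would substitute the transition functions directly. From equation \eqref{tfcot1} one has $dx_a = \sum_c dz_c\,(\partial x_a/\partial z_c)$, and from Lemma \ref{transM} one has $p_a = \sum_b (-1)^{|x_a|+|z_b|}(\partial z_b/\partial x_a)\,q_b$. Inserting both into $\eta$ leaves the two Jacobian factors adjacent in the order $(\partial x_a/\partial z_c)(\partial z_b/\partial x_a)$, sandwiched between the common form $dz_c$ on the left and the common function $q_b$ on the right. Collecting the scalar signs, the prefactor of the generic term becomes $(-1)^{|x_a|+1}(-1)^{|x_a|+|z_b|} = (-1)^{|z_b|+1}$, which is independent of $a$ precisely because the $(-1)^{2|x_a|}$ contribution is trivial; hence it factors out of the sum over $a$.

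The key observation — and the only point requiring care — is that the remaining contraction $\sum_a (\partial x_a/\partial z_c)(\partial z_b/\partial x_a)$ equals the Kronecker delta $\delta_{cb}$. This is not an extra input: it is already forced by the consistency of the $1$-form transition functions, since substituting $dx_a = \sum_c dz_c(\partial x_a/\partial z_c)$ into $dz_b = \sum_a dx_a(\partial z_b/\partial x_a)$ and demanding $dz_b = dz_b$ gives precisely $\sum_a (\partial x_a/\partial z_c)(\partial z_b/\partial x_a) = \delta_{cb}$, in exactly the ordering in which it arises. Thus no graded commutation of the two Jacobian factors is required, which is what would otherwise generate the awkward $a$-dependent signs that are the real hazard of this computation. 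With the identity in hand the middle sum collapses and one is left with $\sum_b (-1)^{|z_b|+1} dz_b\,q_b$, i.e.\ $\eta$ expressed in the $z$-chart, proving invariance.

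Finally, invariance of $\omega$ follows for free: since $\cat{d}$ is a globally defined, coordinate-independent operator on $\Omega^\bullet_{\tiny{\cat{M}}}$ and $\eta$ has just been shown to be invariant, $\omega = \cat{d}\eta$ is invariant as well. This recovers the previous lemma by an intrinsic argument, identifying $\eta$ as the tautological (Liouville) $1$-form on the parity-shifted cotangent bundle and $\omega$ as its canonical differential. The main obstacle is thus purely the sign bookkeeping of the middle step, which I expect to dissolve entirely once the chain-rule identity is invoked in the correct order.
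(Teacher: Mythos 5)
Your proposal is correct and follows essentially the same route as the paper: invariance of $\eta$ is checked by substituting the transition functions of Lemma \ref{transM} and \eqref{tfcot1} and collapsing the Jacobians via the chain-rule identity $\sum_a (\partial x_a/\partial z_c)(\partial z_b/\partial x_a) = \delta_{cb}$, after which $\omega = \cat{d}\eta$ inherits invariance from the coordinate-independence of $\cat{d}$. The only difference is that you spell out the Leibniz-rule computation behind $\cat{d}\eta = \omega$ (which the paper dismisses as \virgolette clear'') and make the implicit chain-rule step explicit, both of which are sound.
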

\begin{proof} Using again the the transformations of Lemma \ref{transCotM}, it is enough to compute 
\bear
 (-1)^{|x_a| + 1} dx_a p_a = (-1)^{|x_a| + 1} dz_c \left ( \frac{\partial x_a}{\partial z_c} \right ) \left ( (-1)^{|x_a| + |z_b|}  \frac{\partial z_b}{\partial x_a} q_b \right ) = (-1)^{|z_b| + 1} dz_b q_b.
\eear
Clearly $\cat{d} \eta = \omega$. Since both $\cat{d}$ and $\eta$ are invariant so is $\omega.$
\end{proof}
\noindent The previous results allow to give the following definition, see for example \cite{Khuda1}, \cite{Severa} or the dedicated chapter in the recent book \cite{Mnev}.
\begin{definition}[Odd Symplectic Form / Odd Symplectic Supermanifold] \label{oddsympl} We call $\omega \defeq \sum_a dx_a dp_a$ the odd symplectic form associated to $\cat{{M}}$. In particular, we say that the pair $(\cat{M}, \omega)$ defines a odd symplectic supermanifold. 
\end{definition}
\begin{remark} Notice that with respect to the definition of odd symplectic supermanifolds available in the literature \cite{Severa}, the supermanifold $\cat{M}$ is constructed by starting from a supermanifold $\mani$ and a vector bundle on it, better than from an ordinary manifold $X$ and a vector bundle on it: in this sense it is a \virgolette generalized'' odd symplectic supermanifold.  
\end{remark}
\noindent Left multiplication by the odd symplectic form $\omega = \sum_a dx_a dp_a$ induces a well-defined invariant operator $\cat{s}: \Omega^{\bullet}_{\tiny{\cat{M}}} \rightarrow \Omega^{\bullet}_{\tiny{\cat{M}}}$ whose action with respect to the above decomposition is given by
\bear
\cat{s}\, (\eta \otimes F \otimes f ) = (-1)^{|x_a| |\eta|} dx_a \eta \otimes dp_a F \otimes f. 
\eear
Such as the de Rham differential $\cat{d}$, also the multiplication by the odd symplectic form $\cat{s} :  \Omega^{\bullet}_{\tiny{\cat{M}}} \rightarrow \Omega^{\bullet}_{\tiny{\cat{M}}}$ is nilpotent. We compute its homology in the next theorem. 
\begin{theorem}[Homology of $\cat{s}$] \label{shomology} Let $\cat{\emph{M}}$ be defined as above and let $\cat{\emph{s}} : \Omega^{\bullet}_{\tiny{\cat{\emph{M}}}} \rightarrow \Omega^{\bullet}_{\tiny{\cat{\emph{M}}}}$ the left multiplication by the odd symplectic form. Then 
\bear
H_{\tiny{\cat{\emph{s}}}} (\Omega^\bullet_{\tiny{\cat{\emph{M}}}}) \cong [dz_1 \ldots dz_n \otimes dp_{n+1} \ldots dp_{n+m}], %\mathcal{B}er (\mani) \otimes_{\mathcal{O}_{\mani}} (\Omega^\bullet_{\mani})^\ast,
\eear
as a sheaf of $\mathcal{O}_{\tiny{\cat{\emph{M}}}} = (\Omega^\bullet_\mani)^\ast$-modules. %In other words, the homology of $\cat{s}$ yields the integral forms on the base supermanifold $\mani.$ 
%In particular, the homology of $\cat{\emph{s}}$ is naturally isomorphic to the sheaf\
%\bear \label{homs}
%H_{\tiny{\cat{\emph{s}}}} (\Omega^\bullet_{\tiny{\cat{\emph{M}}}}) \cong \pi^\ast \mathcal{B}er (\mani) \otimes (\Omega^\bullet_{\mani})^\ast.
%\eear
\end{theorem}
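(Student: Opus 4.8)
The plan is to exploit that $\cat{s}$, being left multiplication by $\omega = \sum_a dx_a dp_a$, acts only on the first two tensor factors of the local decomposition \eqref{deco} and leaves the third factor $f \in (\Omega^\bullet_\mani)^\ast = \mathcal{O}_{\tiny{\cat{M}}}$ untouched. Hence $\cat{s}$ is $\mathcal{O}_{\tiny{\cat{M}}}$-linear, its homology is naturally a sheaf of $\mathcal{O}_{\tiny{\cat{M}}}$-modules, and the problem reduces to computing the homology of the operator \virgolette multiply by $\omega$'' on the free super-commutative $\mathcal{O}_{\tiny{\cat{M}}}$-algebra $\Omega^\bullet_{\tiny{\cat{M}}} = \cat{S}^\bullet_{\mathcal{O}_{\tiny{\cat{M}}}} \Omega^1_{\tiny{\cat{M}}}$, where (locally) $\Omega^1_{\tiny{\cat{M}}}$ is free on the basis $\{dx_a, dp_a\}$. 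I would recognize this as a Koszul-type differential and compute the homology sheaf locally, gluing afterwards by invariance.

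The key structural observation is that this complex factorizes as a (super) tensor product over the coordinate index $a$. Writing $\omega = \sum_a \omega_a$ with $\omega_a \defeq dx_a dp_a$, each summand $\cat{s}_a \defeq$ left multiplication by $\omega_a$ involves only the two generators $dx_a$ and $dp_a$; since $|dx_a| = |x_a| + 1$ and $|dp_a| = |x_a|$, for every $a$ exactly one of these is odd, so $\omega_a^2 = 0$ and $\cat{s}_a$ is a differential. The $\cat{s}_a$ super-commute, so $(\Omega^\bullet_{\tiny{\cat{M}}}, \cat{s})$ is the tensor product over $\mathcal{O}_{\tiny{\cat{M}}}$ of the one-index complexes $(R_a, \cat{s}_a)$ with $R_a \defeq \mathcal{O}_{\tiny{\cat{M}}}[dx_a, dp_a]$. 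As all modules in sight are free over $\mathcal{O}_{\tiny{\cat{M}}}$, the Künneth formula carries no higher Tor contributions and gives $H_{\cat{s}}(\Omega^\bullet_{\tiny{\cat{M}}}) \cong \bigotimes_a H_{\cat{s}_a}(R_a)$.

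The remaining step is the one-index computation, which splits into two cases. For an even base coordinate ($a = 1, \ldots, n$), $dx_a = dz_a$ is odd and $dp_a$ is even, so $R_a = \mathcal{O}_{\tiny{\cat{M}}}[dp_a] \oplus \mathcal{O}_{\tiny{\cat{M}}}[dp_a]\, dz_a$ and $\cat{s}_a$ sends $dp_a^{\,k} \mapsto \pm\, dz_a\, dp_a^{\,k+1}$ while annihilating every $dz_a\, dp_a^{\,k}$; a direct check shows $\ker \cat{s}_a / \mathrm{im}\,\cat{s}_a$ is the free rank-one module generated by the class of $dz_a$. Dually, for an odd base coordinate ($a = n+1, \ldots, n+m$), $dx_a$ is even and $dp_a$ is odd, and the symmetric computation yields the generator $dp_a$. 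Tensoring over all $a$ produces the single generator $dz_1 \cdots dz_n \otimes dp_{n+1} \cdots dp_{n+m}$ over $\mathcal{O}_{\tiny{\cat{M}}}$, as claimed, and the invariance of $\cat{s}$ (equivalently of $\omega$) proved above ensures the local computations glue to the asserted global sheaf. I expect the main obstacle to be purely bookkeeping: fixing the super-signs so that the factorization into a tensor product of complexes, and hence the Künneth step, is rigorous in the $\mathbb{Z}_2$-graded setting.
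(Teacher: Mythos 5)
Your proposal is correct, but it follows a genuinely different route from the paper. The paper does not factorize the complex: it writes down a single explicit contraction-type homotopy operator $\cat{h} (\eta \otimes F \otimes f) \defeq \sum_a (-1)^{|x_a|( |dx^I| + |\partial^J| + 1)}{\partial_{dx_a}} dx^I \otimes [\partial^J , x_a] \otimes f$ and computes the anticommutator $\cat{h}\cat{s} + \cat{s}\cat{h}$, which turns out to be the Euler-type diagonal operator $\left ( (n+m) + (\mbox{deg}_0 - \mbox{deg}_1)(F) + (\mbox{deg}_0 - \mbox{deg}_1)(\eta) \right )\cdot \mathrm{id}$; the homology is then read off as the span of the unique monomial type on which this scalar vanishes, namely $dz_1 \ldots dz_n \otimes dp_{n+1} \ldots dp_{n+m} \otimes f$. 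Your argument instead exploits the Koszul-complex structure directly: you split $\omega = \sum_a \omega_a$, observe that $(\Omega^\bullet_{\tiny{\cat{M}}}, \cat{s})$ is the super tensor product over $\mathcal{O}_{\tiny{\cat{M}}}$ of the elementary two-generator complexes $(R_a, \cat{s}_a)$, compute each rank-one homology ($[dz_a]$ for even base coordinates, $[dp_a]$ for odd ones), and multiply via K\"unneth. Your one-index computations are right, and the K\"unneth step is legitimate here because in each factor the cycles, boundaries and homology are free $\mathcal{O}_{\tiny{\cat{M}}}$-modules (direct summands spanned by subsets of a monomial basis), so all Tor terms vanish and induction over the finitely many factors applies; one should also note, as part of the sign bookkeeping you flag, that $\cat{s}$ raises form degree by two, so regrading by form degree (or working $2$-periodically) turns these $\mathbb{Z}_2$-graded complexes into honest graded complexes to which the K\"unneth theorem applies. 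The trade-off: your factorization is more modular and makes the $\mathcal{O}_{\tiny{\cat{M}}}$-linearity and the shape of the answer conceptually transparent, while the paper's explicit homotopy, though heavier on signs, produces formulas that are reused later --- the proof of Theorem \ref{BVop} leans on the structure of this homotopy argument to recognize zero-homotopic classes and to extract the super BV Laplacian as the third differential of the spectral sequence.
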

\begin{proof} We need to construct a homotopy for the operator $\cat{s}.$ Using the above local decomposition, $\Omega^\bullet_{\tiny{\cat{M}}}$ can be represented by the sheaf of vector spaces generated by the elements $\eta \otimes F \otimes f$ where $f \in \mathcal{O}_{\tiny{\cat{M}}} \lfloor_{\pi^{-1} (U)}$ and for monomials $\eta = dx^I $ and $F = dp^J$ for multi-indices $I$ and $J $.\\
On the other hand, one can observe that the decomposition \eqref{deco} coincides with $\Omega^\bullet_{\mani} \otimes \mbox{gr} (\mathcal{D}_\mani) \otimes (\Omega^\bullet_{\mani})^\ast$, and in view of this, the action of the operator $\cat{s}$ reads
\bear
\cat{s} (\eta \otimes F \otimes f) = (-1)^{|x_a| |\omega|} dx_a (dx^J) \otimes \partial_{x_a} (\partial^J) \otimes f,
\eear
as $dp^J$ corresponds to $\partial^J$ and $dp_a$ corresponds to $\partial_{x_a}$, having used the local splitting of the $\Omega^1_{\tiny{\cat{M}}}$-extension. Notice that $f$ is not touched by $\cat{s}.$ We thus introduce the following local operator:
\bear
\cat{h} (\eta \otimes F \otimes f) \defeq \sum_a (-1)^{|x_a|( |dx^I| + |\partial^J| + 1)}{\partial_{dx_a}} dx^I \otimes [\partial^J , x_a] \otimes f.
\eear
We prove that this is a homotopy for the operator $\cat{s}.$ In particular, one finds that 
\begin{align} \label{sumhom}
(\cat{h} \cat{s} + \cat{s} \cat{h} &)(\eta \otimes F \otimes f)  = \sum_{a,b} (-1)^{(|x_a| + |x_b|)|\eta| }\delta_{a b} \eta \otimes \partial^J \otimes f + \nonumber \\
& \quad +  \sum_a (-1)^{|\partial^J||x_a|} \eta \otimes \partial_a [\partial^J, x_a] \otimes f + \sum_a (-1)^{|x_a|+ 1} dx_a (\partial_{dx_a} \eta) \otimes \partial^J \otimes f .
\end{align}
The summands in the previous expression read: 
\begin{align}
& \sum_{a,b} (-1)^{(|x_a| + |x_b|) |\eta|} \delta_{ab} \eta \otimes \partial^J \otimes f = (n+m) (\eta \otimes F \otimes f),  \nonumber \\
& \sum_{a} (-1) ^{|x_a||\partial^J|} \eta \otimes \partial_a [\partial^J, x_a] \otimes f = \sum_a (-1)^{|x_a|} \eta \otimes \partial^J \otimes f = ( \mbox{deg}_0 (F) -  \mbox{deg}_1 (F) )(\eta \otimes F \otimes f), \nonumber \\
& \sum_{a} (-1)^{|x_a|+ 1} dx_a (\partial_{dx_a} \eta) \otimes \partial^J \otimes f = ( \mbox{deg}_0 (\eta) -  \mbox{deg}_1 (\eta) )(\eta \otimes F \otimes f).
\end{align}
where $n$ is the even and $m$ is the odd dimension of $\mani$ and $\mbox{deg}_0$ and $\mbox{deg}_1$ is the even and odd degree of $\eta = dx^I$ and $F = \partial^J$. It follows that the above sum \eqref{sumhom} gives 
\bear
(\cat{h} \cat{s} + \cat{s} \cat{h} )(\eta \otimes F \otimes f)  = \left ( (n + m) + (\mbox{deg}_0 (\partial^J) - \mbox{deg}_1 (\partial^J) ) + (\mbox{deg}_0 (\eta) - \mbox{deg}_1 (\eta))  \right ) (\eta \otimes F \otimes f). \nonumber
\eear
The homotopy $\cat{h}$ fails if and only if one has $\mbox{deg}_0 (\eta) = \mbox{deg}_0 (\partial^J) = 0$, $\mbox{deg}_1 (\eta) = n$ and $\mbox{deg}_1 (\partial^J) = m$, so that the non-zero element in homology takes the form $dz_1 \ldots dz_n \otimes \partial_{\theta_1} \ldots \partial_{\theta_m} = dz_1 \ldots dz_n \otimes dp_{n+1} \ldots dp_{n+m} \otimes f$, where $f$ is any section of the structure sheaf $\mathcal{O}_{\tiny{\cat{M}}}$.
\end{proof}
\noindent The above theorem has the following corollary.
\begin{corollary} \label{IntForm} Let $\cat{\emph{M}}$ be defined as above and let $\cat{\emph{s}} : \Omega^{\bullet}_{\tiny{\cat{\emph{M}}}} \rightarrow \Omega^{\bullet}_{\tiny{\cat{\emph{M}}}}$ be the left multiplication by the odd symplectic form. Then the homology of $\cat{\emph{s}}$ is naturally isomorphic to the pull-back of the Berezinian sheaf on $\mani$, \emph{i.e.}\
\bear \label{homs}
H_{\tiny{\cat{\emph{s}}}} (\Omega^\bullet_{\tiny{\cat{\emph{M}}}}) \cong \pi^\ast \mathcal{B}er (\mani).
\eear
\end{corollary}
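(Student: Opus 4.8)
The plan is to recognize that the explicit local generator of $H_{\cat{s}}(\Omega^\bullet_{\tiny{\cat{M}}})$ produced in Theorem \ref{shomology} is exactly the canonical local generator of the Berezinian sheaf of $\mani$, now carrying coefficients in $\mathcal{O}_{\tiny{\cat{M}}}$, and to verify that its transition functions are those of $\pi^\ast \mathcal{B}er(\mani)$.

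First I would recall from the proof of Theorem \ref{shomology} that, over a chart $\pi^{-1}(U)$, the homology is generated as an $\mathcal{O}_{\tiny{\cat{M}}}$-module by the class of
\[
g \defeq dz_1 \cdots dz_n \otimes dp_{n+1} \cdots dp_{n+m},
\]
where, under the local splitting of the $\Omega^1_{\tiny{\cat{M}}}$-extension used there, the odd forms $dp_{n+\alpha}$ are identified with the odd derivations $\partial_{\theta_\alpha}$ for $\alpha = 1, \ldots, m$. Hence $g$ is the symbol $dz_1 \cdots dz_n \otimes \partial_{\theta_1} \cdots \partial_{\theta_m}$. I would then invoke the standard local description of the Berezinian sheaf (see \cite{Manin}, and \cite{Noja}): on a chart with coordinates $z_i \mid \theta_\alpha$ the rank-one sheaf $\mathcal{B}er(\mani) = \mathcal{B}er(\Omega^1_\mani)^\ast$ is freely generated by precisely this symbol --- the product of the (odd) differentials of the even coordinates with the (odd) derivations along the odd coordinates --- which transforms under a change of coordinates by the Berezinian of the associated Jacobian supermatrix. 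Thus $g$ \emph{is} this generator, with its coefficient $f \in \mathcal{O}_{\tiny{\cat{M}}}$ playing the role of the module structure.

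The heart of the argument is then to check that the transition functions of $g$ coincide with those of $\mathcal{B}er(\mani)$ pulled back along $\pi$. For this I would feed the coordinate transformations of Lemma \ref{transCotM} into $g$: the factors $dz_1 \cdots dz_n$ transform by \eqref{tfcot1}, i.e.\ by the even block of the cotangent transition, while the factors $dp_{n+1} \cdots dp_{n+m}$ transform by the leading term of \eqref{tfcot2}, i.e.\ by the tangent-like matrix $(\partial z_b / \partial x_a)$. Because these two blocks are related by inverse-transposition, assembling the two contributions and tracking the Koszul signs reproduces exactly the superdeterminant $\mbox{Ber}(J)$ of the Jacobian $J$ of the base coordinate change on $\mani$, which is the transition function of $\mathcal{B}er(\mani)$. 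Since every such factor is pulled back from $\mani$ and the coefficient $f$ is untouched by $\cat{s}$, the resulting $\mathcal{O}_{\tiny{\cat{M}}}$-module is $\mathcal{O}_{\tiny{\cat{M}}} \otimes_{\pi^{-1}\mathcal{O}_\mani} \pi^{-1}\mathcal{B}er(\mani) = \pi^\ast \mathcal{B}er(\mani)$, giving \eqref{homs}.

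The main obstacle will be the bookkeeping in this last step. One must check that the second, inhomogeneous summand of \eqref{tfcot2} --- the term linear in the fiber coordinates $q_b$ --- contributes nothing to the transition of the homology class, as it raises the filtration degree and is annihilated upon passing to $H_{\cat{s}}$; and one must reconcile the Koszul signs and the inverse-transpose relation so that the two block determinants reassemble into $\mbox{Ber}(J)$ rather than its inverse, consistently with the convention $\mathcal{B}er(\mani) = \mathcal{B}er(\Omega^1_\mani)^\ast$. Once the correction term is seen to drop and the signs are matched, the stated isomorphism follows.
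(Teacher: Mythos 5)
Your proposal is correct and follows essentially the same route as the paper: both identify the generator $[dz_1 \cdots dz_n \otimes dp_{n+1} \cdots dp_{n+m}]$ of $H_{\cat{s}}(\Omega^\bullet_{\tiny{\cat{M}}})$ produced in Theorem \ref{shomology} with the canonical generator $[dz_1 \cdots dz_n \otimes \partial_{\theta_1} \cdots \partial_{\theta_m}]$ of $\mathcal{B}er(\mani)$, pulled back along $\pi$. The only difference is one of packaging: the paper delegates the fact that this symbol transforms by the Berezinian of the Jacobian (and hence generates $\mathcal{B}er(\mani)$) to the construction in \cite{NojaRe}, whereas you verify the transition behavior directly from Lemma \ref{transCotM} --- including the observation that the $q_b$-linear correction term in \eqref{tfcot2} lands in bidegrees where, by the homotopy of Theorem \ref{shomology}, the homology vanishes, which is exactly the content hidden in the paper's citation.
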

\begin{proof} Allowing for the above identifications and the usual slight abuse of notation concerning the pull-backs, it is enough to observe that $[dz_1 \ldots dz_n \otimes \partial_{\theta_1} \ldots \partial_{\theta_m}] = [dz_1 \ldots dz_n \otimes dp_{n+1} \ldots dp_{n+m}]$ generates the Berezinian sheaf of the supermanifold $\mani$, see \cite{NojaRe} for details on this construction of the Berezinian sheaf.
\end{proof}
\begin{remark} The above result can be related to the notion of (super) \emph{semidensities}, see \cite{Khuda1}, \cite{Mnev}, \cite{Severa}. Indeed the $\Omega^1_{\tiny{\cat{M}}}$-extension exact sequence \eqref{extomega} allows to easily compute the Berezinian sheaf $\mathcal{B}er (\cat{M}) \defeq \mathcal{B}er (\Omega^1_{\tiny{\cat{M}}})^\ast$ of the supermanifold $\cat{M}$. Taking the Berezinians, the short exact sequence \ref{extomega} yields
\bear
\mathcal{B}er (\Omega^1_{\tiny{\cat{M}}})^\ast \cong \mathcal{B}er (\pi^\ast \Omega^1_{\mani} )^\ast \otimes_{\mathcal{O}_{\cat{\tiny{M}}}} \mathcal{B}er (\pi^\ast \cat{T}_\mani)^\ast \cong \pi^\ast (\mathcal{B}er (\mani) \otimes_{\mathcal{O}_\mani} \mathcal{B}er(\cat{T}_\mani)^\ast).
\eear    
Observing that for any sheaf $\mathcal{E}$ on $\mani$ one has $\mathcal{B}er (\Pi \mathcal{E}) \cong \mathcal{B}er(\mathcal{E})^\ast \cong \mathcal{B}er (\mathcal{E}^\ast)$, one sees that $\mathcal{B}er(\cat{T}_\mani)^\ast \cong \mathcal{B}er (\Pi \cat{T}_\mani^\ast)^\ast = \mathcal{B}er^\ast (\Omega^1_\mani)^\ast = \mathcal{B}er(\mani)$ hence 
\bear \label{semid}
\mathcal{B}er ({\cat{M}}) \cong \pi^\ast \mathcal{B}er (\mani)^{\otimes 2}.
\eear
Defining the sheaf of semidensities $\mathcal{D}\mathpzc{ens} (\cat{M})^{1/2} $ of the supermanifold $\cat{M}$ to be the locally-free sheaf of $\mathcal{O}_{\tiny{\cat{M}}}$-modules whose sections are \virgolette square roots'' of the sections of the Berezinian sheaf, \emph{i.e.} $\mathcal{D}\mathpzc{ens}^{1/2} (\cat{M}) \defeq \mathcal{B}er (\cat{M})^{\otimes 1/2}$, it follows from the \eqref{semid} that $\mathcal{D}\mathpzc{ens}(\cat{M})^{1/2} \cong \pi^\ast \mathcal{B}er (\mani). $
In turn, the above \eqref{homs} can be re-written as 
\bear
H_{\tiny{\cat{{s}}}} (\Omega^\bullet_{\tiny{\cat{{M}}}}) \cong \mathcal{D}\mathpzc{ens} (\cat{M})^{1/2},
\eear 
where the sheaf of semidensities is seens as a sheaf of $\mathcal{O}_{\tiny{\cat{M}}}$-modules. %therefore $\mathcal{D}\mathpzc{ens} (\cat{M})^{1/2} = \mathcal{D}\mathpzc{ens} (\cat{M})^{1/2} \otimes_{(\Omega^\bullet_\mani)^\ast} (\Omega^\bullet_\mani)^\ast.$ 
%This shows that in this context semidensities on $\cat{M}$ are indeed isomorphic to integral forms on $\mani.$ 
Notice that reducing to the underlying ordinary manifold $\mani_{\mathpzc{red}}$ one would find $\mathcal{D}\mathpzc{ens} (\cat{M}_{\mathpzc{red}})^{1/2} \cong \pi^\ast \mathcal{K}_{\mani_{\mathpzc{red}}}$, which is the ordinary notion for semidensities of odd symplectic supermanifolds constructed out of an ordinary manifold $\mani_{\mathpzc{red}}$, see for example \cite{Mnev} and \cite{Severa}.  
\end{remark}
\subsection{Deformed de Rham Complex and BV Laplacian} Now, the crucial observation, originally due to \v{S}evera in \cite{Severa}, is that the nilpotent operators $\cat{d}$ and $\cat{s}$ commutes with each other. This holds true also in the present setting, as the following shows.
\begin{lemma}[$ \cat{d} $ commutes with $\cat{s}$] \label{commut} Let $\cat{\emph{d}}$ and $\cat{\emph{s}}$ be the de Rham differential and the multiplication by the odd symplectic form, then $[\, \cat{\emph{d}}, \cat{\emph{s}} \, ] = 0.$  In particular the triple $(\Omega^\bullet_{\tiny{\cat{\emph{M}}}}, \cat{\emph{s}}, \cat{\emph{d}})$ defines a double complex.
\end{lemma}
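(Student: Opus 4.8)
The plan is to exploit the fact, just established, that the odd symplectic form is exact, namely $\omega = \cat{d}\eta$ with $\eta$ the primitive form, so that the commutation of $\cat{s}$ with $\cat{d}$ should reduce to the nilpotency $\cat{d}^2 = 0$ together with the graded Leibniz rule. First I would observe that both operators are \emph{odd}: $\cat{d}$ is the odd de Rham differential and $\cat{s}$ is left multiplication by the odd $2$-form $\omega \in (\Omega^2_{\tiny{\cat{M}}})_1$. Consequently the relevant graded commutator is the anticommutator $[\,\cat{d}, \cat{s}\,] = \cat{d}\cat{s} + \cat{s}\cat{d}$, and proving the Lemma amounts to showing $\cat{d}\cat{s} = -\cat{s}\cat{d}$.

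The core computation is then the following. For an arbitrary form $\alpha \in \Omega^\bullet_{\tiny{\cat{M}}}$, I would expand $\cat{d}(\cat{s}\,\alpha) = \cat{d}(\omega\,\alpha)$ by the $\mathbb{Z}_2$-graded Leibniz rule:
\[
\cat{d}(\omega\,\alpha) = (\cat{d}\,\omega)\,\alpha + (-1)^{|\omega|}\,\omega\,(\cat{d}\,\alpha).
\]
Here the first summand vanishes because $\cat{d}\,\omega = \cat{d}(\cat{d}\,\eta) = \cat{d}^2\eta = 0$, while $|\omega| = 1$ forces the sign $(-1)^{|\omega|} = -1$ in the second summand. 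This yields $\cat{d}\cat{s}\,\alpha = -\,\omega\,(\cat{d}\,\alpha) = -\,\cat{s}(\cat{d}\,\alpha)$, that is $\cat{d}\cat{s} + \cat{s}\cat{d} = 0$, which is precisely $[\,\cat{d}, \cat{s}\,] = 0$. The nilpotency of each of $\cat{d}$ and $\cat{s}$ is already available, so this identity completes the verification that $(\Omega^\bullet_{\tiny{\cat{M}}}, \cat{s}, \cat{d})$ is a double complex.

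The only genuine difficulty is the sign bookkeeping: one must keep the parity assignment $|\omega| = 1$, the graded Leibniz convention for $\cat{d}$, and the definition of the graded commutator mutually consistent. As a safeguard I would double-check the statement directly on the local generators $\eta \otimes F \otimes f$, using the explicit actions of $\cat{d}$ and $\cat{s}$ with respect to the decomposition $\Omega^\bullet_{\pi^{-1}(U)} \cong \Omega^\bullet_{U} \otimes_{\mathcal{O}_U} \cat{S}^\bullet \cat{T}_U \otimes_{\mathcal{O}_U} (\Omega^\bullet_{U})^\ast$ recorded above; the two routes should agree, confirming the conceptual argument.
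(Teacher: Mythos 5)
Your proof is correct, and it takes a genuinely different route from the paper's. The paper proves the lemma by brute force: working in the local decomposition $\Omega^\bullet_{\pi^{-1}(U)} \cong \Omega^\bullet_{U} \otimes_{\mathcal{O}_U} \cat{S}^\bullet \cat{T}_U \otimes_{\mathcal{O}_U} (\Omega^\bullet_{U})^\ast$, it expands $\cat{d}\circ\cat{s}$ and $\cat{s}\circ\cat{d}$ on generators $\eta\otimes F\otimes f$ and compares Koszul signs term by term. Your argument is instead global and conceptual: closedness of $\omega$ (immediate from $\omega=\cat{d}\eta$, established in the preceding lemma, together with $\cat{d}^2=0$) plus the graded Leibniz rule gives $\cat{d}(\omega\alpha)=(\cat{d}\omega)\alpha+(-1)^{|\omega|}\omega\,\cat{d}\alpha=-\omega\,\cat{d}\alpha$ in one line, and nilpotency of $\cat{d}$ and $\cat{s}$ (the latter because $\omega$ is odd, so $\omega^2=0$) then gives the double complex. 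What your route buys is exactly the sign bookkeeping you flag at the end: since $|\omega|=1$, the Leibniz rule forces the \emph{anticommutation} $\cat{d}\cat{s}=-\cat{s}\cat{d}$, i.e.\ the vanishing of the graded commutator of two odd operators, which is the natural reading of $[\,\cat{d},\cat{s}\,]=0$. Note that the paper's displayed computation asserts the ungraded identity $\cat{d}\circ\cat{s}=\cat{s}\circ\cat{d}$; a careful re-check of the Koszul signs confirms your sign, not the paper's — e.g.\ already for $\mani=\mathbb{R}^{1|0}$ with even coordinate $x$ and odd fiber coordinate $p$ one has $\cat{d}\cat{s}(p)=\cat{d}(dx\,dp\,p)=-dx\,(dp)^2$ while $\cat{s}\cat{d}(p)=dx\,(dp)^2$. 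So the paper's proof contains a sign slip which your safeguard check on local generators would have exposed; nothing downstream is affected, since either commutation or anticommutation implies that $\cat{d}$ preserves $\ker\cat{s}$ and $\mathrm{im}\,\cat{s}$ and hence descends to the homology of $\cat{s}$, which is all that the spectral sequence arguments use. The one thing the paper's computational approach buys in exchange is that it exhibits the explicit local formulas for $\cat{d}$ and $\cat{s}$ on the decomposition, which are then reused in the proofs of Theorem \ref{shomology} and Theorem \ref{BVop}.
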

\begin{proof} This is a local check. Using the above decomposition, one computes
\begin{align}
{ \cat{d} \circ \cat{s}} ( \eta \otimes F \otimes f ) = (-1)^{|\eta| |x_a| + |x_b| + |x_b||\eta| + |x_b||F|} dx_b dx_a \eta \otimes dp_a F \otimes \partial_{x_b}f + \nonumber \\
+ (-1)^{|\eta| |x_a| + |x_b| + |x_b| |F| + |\eta| + |F| + |x_b||x_a| + 1} dx_a \eta \otimes dp_b dp_a F \otimes \partial_{p_b} F = {\cat{s} \circ \cat{d}} (\eta \otimes F \otimes f),
\end{align}
%On the other hand instead
%\begin{align}
%{\cat{s} \circ \cat{d}} (\omega \otimes F \otimes f) = (-1)^{|\omega| |x_a| + |x_b| + |x_b||\omega| + |x_b||F| +1} dx_b dx_a \omega \otimes dp_a F \otimes \partial_{x_b}f + \nonumber \\
%+ (-1)^{|\omega| |x_a| + |x_b| + |x_b| |F| + |\omega| + |F| + |x_b||x_a|} dx_a \omega \otimes dp_b dp_a F \otimes \partial_{p_b} F,
%\end{align}
which concludes the proof.
 \end{proof}
\noindent It follows from the previous Lemma \ref{commut} that, in particular, $\cat{d}$ acts on the homology of $\cat{s}$. This leads to the following definition.
\begin{definition}[Deformed de Rham Complex / Spectral sequence $E^{\scriptsize{\cat{s}}}_{i}$] We call the double complex $(\Omega^\bullet_{\tiny{\cat{M}}}, \cat{s}, \cat{d})$ the \emph{deformed} de Rham (double) complex of $\cat{M}$. We denote with $E^{\scriptsize{\cat{s}}}_{i}$ the related spectral sequence $(E_i, \delta_i) $ that starts with the differential $\delta_1 = \cat{s}$ and we call it \emph{deformed de Rham spectral sequence}.  
\end{definition}
\noindent Let us now study the deformed de Rham spectral sequence $E^{\scriptsize{\cat{s}}}_i$. 
\begin{theorem}[Semidensities  \& Super {BV} Operator] \label{BVop} Let ${E}^{\scriptsize{\cat{\emph{s}}}}_{i}$ be defined as above. Then 
\begin{enumerate}[leftmargin=*]
\item[(1)] the first page of the spectral sequence ${E}^{\scriptsize{\cat{\emph{s}}}}_{i}$ is isomorphic to semidensities on $\mani$, \emph{i.e.}
\bear
E^{\scriptsize{\cat{\emph{s}}}}_1 \cong \pi^\ast \mathcal{B}er (\mani);
\eear
\item[(2)] the second differential $\delta_2 $ of the spectral sequence ${E}^{\scriptsize{\cat{\emph{s}}}}_{i}$ is zero. In particular the second page of the spectral sequence ${E}^{\scriptsize{\cat{s}}}_{i}$ is given again by
\bear
E^{\scriptsize{\cat{\emph{s}}}}_2 \cong \pi^\ast \mathcal{B}er (\mani);
\eear 
\item[(3)] the third differential $\delta_3$ of the spectral sequence ${E}^{\scriptsize{\cat{\emph{s}}}}_{i}$ is - up to exact terms - the \emph{super BV Laplacian} 
\bear
\xymatrix@R=1.5pt{
\Delta^{\mathpzc{BV}}_2 : \pi^\ast \mathcal{B}er (\mani)  \ar[r] &  \pi^\ast \mathcal{B}er (\mani)  \nonumber \\
\mathcal{D}  f \ar@{|->}[r] &\Delta^{\mathpzc{BV}}_2 ( \mathcal{D}  f) \defeq \mathcal{D} \left ( \sum_a \frac{\partial^2}{\partial {x_a}\partial p_a} f \right )
}
\eear
where $\mathcal{D} = [dx_1 \ldots dx_{n} \otimes dp_{n+1} \ldots dp_{n+m}]$ is a section of $\pi^\ast \mathcal{B}er(\mani)$ and $f  = f (x, p)$  is a section of $\mathcal{O}_{\tiny{\cat{\emph{M}}}} =( \Omega^\bullet_{\mani})^\ast$ In particular, the spectral sequence converges at page three, which is isomorphic to the locally constant sheaf on $\mani$, \emph{i.e.}
\bear
E^{\scriptsize{\cat{\emph{s}}}}_3 \cong \mathbb{K}_{\mani} \cong E^{\scriptsize{\cat{\emph{s}}}}_\infty.
\eear
A representative of this homology class if given by 
\bear \label{repK}
[dx_1 \ldots dx_{n} \otimes dp_{n+1} \ldots dp_{n+m}]  \, x_{n+1} \ldots x_{n+m} p_{1} \ldots p_{n} \in \pi^\ast \mathcal{B}er (\mani).
\eear 
\end{enumerate}
\end{theorem}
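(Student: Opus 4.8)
The plan is to apply the standard spectral sequence of the double complex $(\Omega^\bullet_{\tiny{\cat{M}}}, \cat{s}, \cat{d})$ — legitimate by Lemma \ref{commut} — and to evaluate the successive induced differentials on explicit representatives of the $\cat{s}$-homology, using the contracting homotopy $\cat{h}$ built in the proof of Theorem \ref{shomology}. Point (1) is immediate: by Corollary \ref{IntForm} one has $E^{\scriptsize{\cat{s}}}_1 \cong H_{\tiny{\cat{s}}}(\Omega^\bullet_{\tiny{\cat{M}}}) \cong \pi^\ast\mathcal{B}er(\mani)$, and the subsequent Remark identifies $\pi^\ast\mathcal{B}er(\mani)$ with the sheaf of semidensities $\mathcal{D}\mathpzc{ens}(\cat{M})^{1/2}$. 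So the content lies in the two induced differentials.

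For point (2) I would take the distinguished homology generator $\mathcal{D}\otimes f$, with $\mathcal{D} = dx_1\ldots dx_n\otimes dp_{n+1}\ldots dp_{n+m}$ and $f$ a section of $\mathcal{O}_{\tiny{\cat{M}}}$, and apply the de Rham differential in the form \eqref{deRhamdiff}. Since the form factor $dx_1\ldots dx_n$ already saturates the odd generators of $\Omega^\bullet_\mani$ and the polyfield factor $dp_{n+1}\ldots dp_{n+m}$ saturates the odd generators of $\cat{S}^\bullet\cat{T}_\mani$, only two families of terms survive: those in which $\cat{d}$ appends an even form generator $d\theta_\alpha$ (carrying $\partial f/\partial\theta_\alpha$), and those in which it appends an even polyfield generator $\partial_{z_i}$ (carrying $\partial f/\partial p_{z_i}$). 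For each such term the degree eigenvalue $(n+m) + (\deg_0 - \deg_1)(\eta) + (\deg_0 - \deg_1)(F)$ of the homotopy identity $\cat{h}\cat{s} + \cat{s}\cat{h} = \lambda\,\mathrm{id}$ from Theorem \ref{shomology} equals $\lambda = 1 \neq 0$; hence every surviving term is $\cat{s}$-exact and $\cat{d}(\mathcal{D}\otimes f)$ dies in $H_{\tiny{\cat{s}}}$. Thus $\delta_2 = 0$ and $E^{\scriptsize{\cat{s}}}_2 \cong \pi^\ast\mathcal{B}er(\mani)$.

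For point (3) I would run the one-step zig-zag that defines the next differential. As $\delta_2 = 0$, the cocycle $\cat{d}(\mathcal{D}\otimes f)$ is $\cat{s}$-exact, and since $\lambda = 1$ on its constituents a canonical primitive is $y = \cat{h}\,\cat{d}(\mathcal{D}\otimes f)$, i.e. $\cat{s}y = \cat{d}(\mathcal{D}\otimes f)$; then $\delta_3(\mathcal{D}\otimes f)$ is the $\cat{s}$-homology class of $\cat{d}y$, which is $\cat{s}$-closed by Lemma \ref{commut}. Isolating the monomials of $\cat{d}\cat{h}\cat{d}(\mathcal{D}\otimes f)$ that return to the single surviving homology degree (where $\lambda$ vanishes again), one sees that restoring each generator removed by $\cat{h}$ forces $\cat{d}$ to pair $\partial_{x_a}$ with $\partial_{p_a}$: the $d\theta_\alpha$-branch yields $\partial^2 f/\partial\theta_\alpha\partial p_{\theta_\alpha}$ and the $\partial_{z_i}$-branch yields $\partial^2 f/\partial z_i\partial p_{z_i}$. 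Summing, $\delta_3(\mathcal{D}\otimes f) = \mathcal{D}\otimes\big(\sum_a \partial^2 f/\partial x_a\partial p_a\big)$ modulo $\cat{s}$-exact terms, which is exactly the super BV Laplacian $\Delta^{\mathpzc{BV}}_2$.

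It remains to compute the cohomology of $\Delta^{\mathpzc{BV}}_2$ on $\pi^\ast\mathcal{B}er(\mani) \cong \mathcal{B}er(\mani)\otimes_{\mathcal{O}_\mani}\mathcal{O}_{\tiny{\cat{M}}}$. As $\Delta^{\mathpzc{BV}}_2 = \sum_a \partial_{x_a}\partial_{p_a}$ only differentiates the $\mathcal{O}_{\tiny{\cat{M}}} = \mathbb{K}[x,p]$ factor and splits over the conjugate pairs $(x_a, p_a)$ — in each of which exactly one coordinate is odd, since $p_a$ carries the parity opposite to $x_a$ — a factorization (Künneth) argument reduces to a single pair, where the two-term complex $(\mathbb{K}[u]\otimes\wedge[v], \partial_u\partial_v)$ has one-dimensional cohomology generated by the odd variable. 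Taking the product over all pairs leaves the class of $x_{n+1}\ldots x_{n+m}\,p_1\ldots p_n$, i.e. the representative \eqref{repK}, so that $E^{\scriptsize{\cat{s}}}_3 \cong \mathbb{K}_{\mani} \cong E^{\scriptsize{\cat{s}}}_\infty$ and the sequence degenerates, there being no further nonzero groups to support higher differentials. The hardest part will be point (3): controlling the Koszul signs through the zig-zag and confirming that the homology-degree component of $\cat{d}\cat{h}\cat{d}(\mathcal{D}\otimes f)$ is precisely $\sum_a\partial^2 f/\partial x_a\partial p_a$ with unit coefficient, rather than a rescaled or otherwise modified second-order operator; the eigenvalue identity of Theorem \ref{shomology} is what keeps this bookkeeping under control.
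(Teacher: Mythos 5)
Your treatment of points (1), (2) and of the identification $\delta_3=\Delta_2^{\mathpzc{BV}}$ follows the same route as the paper: point (1) is quoted from Corollary \ref{IntForm}, and for (2)--(3) the paper likewise exhibits an explicit $\cat{s}$-primitive of $\cat{d}(\mathcal{D}\otimes f)$, namely $\mathpzc{T}=\sum_a(\partial_{dp_a}\partial_{x_a}+\partial_{dx_a}\partial_{p_a})(\mathcal{D}\otimes f)$, and then computes $\delta_3(\mathcal{D}\otimes f)=[\cat{d}\mathpzc{T}]$; your primitive $\cat{h}\,\cat{d}(\mathcal{D}\otimes f)$ plays exactly this role. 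Your use of the eigenvalue identity of Theorem \ref{shomology} is correct and makes the paper's assertion mechanical: both surviving families have eigenvalue $1$, since $(n+m)+(1-n)+(0-m)=1$ for the $d\theta_\alpha$-branch and $(n+m)+(0-n)+(1-m)=1$ for the $dp_{z_i}$-branch, and since $\cat{d}(\mathcal{D}\otimes f)$ is $\cat{s}$-closed (as $\cat{s}(\mathcal{D}\otimes f)=0$ and $[\cat{d},\cat{s}]=0$) the identity yields $\cat{d}(\mathcal{D}\otimes f)=\cat{s}\cat{h}\cat{d}(\mathcal{D}\otimes f)$. Your closing degeneration argument (the surviving page sits in a single bidegree) is also fine; the paper argues instead that the representative is $\cat{d}$-closed, and both work.

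Where you genuinely diverge from the paper is the computation of the cohomology of $\Delta_2^{\mathpzc{BV}}$, and there your argument has a gap. The paper builds an explicit radial homotopy $\cat{K}$, with kernel $\int_0^1 dt\,t^{\ell_f}\,x_a\,\cat{P}^\ast_t g_I$, which works uniformly in the smooth and holomorphic categories and directly isolates the failure locus, hence the representative \eqref{repK}. Your K\"unneth factorization into conjugate pairs $(x_a,p_a)$ is clean and gives the correct answer for the odd base coordinates and for the fiber coordinates, because $\wedge[\theta_\alpha]\otimes\mathbb{K}[p_{n+\alpha}]$ and $\wedge[p_1,\ldots,p_n]$ really are algebraic tensor factors of $\mathcal{O}_{\tiny{\cat{M}}}$. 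But $\mathcal{O}_{\tiny{\cat{M}}}$ is \emph{not} $\mathbb{K}[x,p]$: the even base coordinates enter through $\mathcal{O}_\mani(U)$, i.e.\ through smooth (or holomorphic) functions of $z_1,\ldots,z_n$ \emph{jointly}, and $C^\infty(U)$ (resp.\ $\mathcal{O}(U)$) is not the algebraic tensor product of single-variable function spaces; your model complex $(\mathbb{K}[u]\otimes\wedge[v],\partial_u\partial_v)$ only sees a dense subcomplex, so the algebraic K\"unneth theorem does not literally apply to the pairs $(z_i,p_i)$. The gap is fixable --- either by a completed-tensor-product K\"unneth theorem for nuclear Fr\'echet spaces, or by induction over the pairs using a one-variable integration homotopy depending smoothly/holomorphically on the remaining variables, in the style of the Bott--Tu proof of the Poincar\'e lemma --- but as written the step fails to cover the actual coefficient ring; the paper's homotopy $\cat{K}$, which rescales all variables at once, is precisely the device that circumvents this.
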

\begin{proof} The first point of the Theorem is just Corollary \ref{IntForm}. As for the second point, notice that the corresponding differential is given by the induced action of the de Rham differential $\cat{d}$ on $E^{\scriptsize{\cat{\emph{s}}}}_1$. Referring to Theorem \ref{shomology}, one can observe that the induced action of $\cat{d}$ maps to a zero-homotopic cohomology. More in particular, for immediate use, one can observe that for any $\mathpzc{S} \in E^{\scriptsize{\cat{\emph{s}}}}_1$, one has 
\bear
\cat{d} ( \mathpzc{S}) = \cat{s} ( \mathpzc{T}), \quad \mbox{ with } \quad
\mathpzc{T} \defeq \sum_a \left ( \partial_{dp_a } \partial_{x_a} + \partial_{dx_a} \partial_{p_a} \right ) (\mathpzc{S}),
\eear 
where $\mathcal{S}$ can be taken to be of the form $[dx_1 \ldots dx_n \otimes dp_{n+1} \ldots dp_{n+m}] \otimes f$. \\
The third differential can be easily inferred by noticing that, formally, $\delta_3 = \cat{d} \circ \cat{s}^{-1} \circ \cat{d}$, so that in particular, when acting on an element of $E^{\scriptsize{\cat{\emph{s}}}}_2 = E^{\scriptsize{\cat{\emph{s}}}}_1$ one finds, upon the previous observation 
\bear
\delta_3 (\mathpzc{S}) = \cat{d} (\mathpzc{T}).
\eear
Taking $\mathpzc{S} = [dx_1 \ldots d_n \otimes dp_{n+1} \ldots dp_{n+m}] \, f$ as above it is easy to compute that 
\begin{align}
\cat{d} (\mathpzc{T}) %= \sum_{a,b} (-1)^{(|x_b| + |x_a| + 1)(n+m)} ( dx_a \partial_{dx_b} [dx_1 \ldots dx_n \otimes dp_{n+1} \ldots dp_{n+m}] \partial_{x_a} \partial_{p_b} f ) \nonumber \\
%& \quad + \sum_{a, b} (-1)^{(|x_b| + |x_a| + 1)(n+m)} (dp_a \partial_{dp_b} [dx_1 \ldots dx_n \otimes dp_{n+1} \ldots dp_{n+m}] \partial_{p_a} \partial_{x_b} f ) \nonumber \\
= [dx_1 \ldots dx_n \otimes dp_{n+1} \ldots dp_{n+m}] \left (  \sum_{a} \frac{\partial^2 }{\partial_{x_a} \partial_{p_a}} f (x,p) \right ) + \mbox{exact terms}%in cohomology}.
\end{align}
%which is the (super) BV Laplacian. \\
We now look for a homotopy for this operator. To this end, without loss of generality, we let $f \in \mathcal{O}_{\tiny{\cat{M}}} = (\Omega^\bullet_{\mani})^\ast$ be of the form $f (x,p)\defeq g_I(x) p^I $ for $x_a = x_1 \ldots x_{n} | x_{n+1} \ldots x_{n+m}$ even and odd coordinates of $\mani$ and $I$ a multi-index. We claim that the homotopy for $\Delta_2^{\mathpzc{BV}}$ is given by
\bear \label{homot}
\cat{K} (f) \defeq \sum_a (-1)^{|g|( |x_a|+1) }\left ( \int_0^1 dt \, t^{\ell_f} x_a \cat{P}^\ast_t g_I  \right ) p_{a} p^I, 
\eear
where $t \in [0,1]$, $\cat{P}^\ast_t g (x) = g(tx) $ and $\ell_f$ is a constant, which depends on $f$, that will be fixed later. An attentive computation yields the following 
\begin{align}
(\Delta_2^{\mathpzc{BV}} \circ \cat{K} + \cat{K} \circ \Delta_2^{\mathpzc{BV}} ) (f (x, p)) &  =  f(x,p) - \delta_{\ell_f +1 + \deg_{1} (g_I), 0} \, g_I(0) p^I + \nonumber \\
& + \left (n+m + \deg_{0} (p^I) - \deg_{1} (p^I) - 2\deg_{1} (g_I) - \ell_g - 1 \right ) \int_0^1 dt\, t^{\ell_g} (\cat{P}^{\ast}_t g_I)  p^I. \nonumber
\end{align}
This gives the following condition on $\ell_f$ as to have a homotopy: 
\bear
\ell_f = n+m + \deg_{0} (p^I) - \deg_{1} (p^I) - 2 \deg_{1} (g_I) -1,
\eear
which yields
\bear
(\Delta_2^{\mathpzc{BV}} \circ \cat{K} + \cat{K} \circ \Delta_2^{\mathpzc{BV}})(f(x, p))  = f(x, p) - \delta_{(n+m + \deg_{0} (p^I) - \deg_{1} (p^I) - \deg_{1} (g_I)) , 0} \, g_I(0|\theta) p^I.
\eear
Observing that $\deg_{0} (p^I) \geq 0$, $0\leq \deg_{1 } (p^I) \leq n$ and $0\leq \deg_{1} (g_I) \leq m$, one sees that the homotopy fails only for $\deg_{0} (p^I) = 0$, $\deg_{1} (p^1) = n $ and $\deg_{1} (g_I) = m.$ One thus finds that $f(x, p) = x_{n+1} \ldots x_{n+m} p_{1} \ldots p_{n}$ so that $
k \cdot  [dx_1 \ldots dx_{n} \otimes dp_{n+1} \ldots dp_{n+m}] \otimes x_{n+1} \ldots x_{n+m} p_{1} \ldots p_{n} $ with $k \in \mathbb{R}$ or $k \in \mathbb{C}$ is a representative for $E_3^{\scriptsize{\cat{s}}}$. Finally, it is easy to see that the representative is $\cat{d}$-closed, so that it yields zero when acted by all the higher differentials, concluding the proof.     
\end{proof}
\subsection{Remarks and Outlooks} The above Theorem \ref{BVop} extends to a \virgolette fully'' supergeometric context the beautiful \v{S}evera's result \cite{Severa} for odd symplectic supermanifolds, with possibly the bonus of showing explicitly the homotopy \eqref{homot} of the super BV Laplacian - which is seen here as a morphism of sheaves -, together with the related representative in sheaf cohomology \eqref{repK}. It is to be noted that the form of the homotopy shown above is somewhat general, as the structure of odd nilpotent operators in supergeometry often consists into a \virgolette multiplication'' of an even and an odd part, such as the BV Laplacian above or the de Rham differential - notice indeed that the related complexes of integral forms and of differential forms are quasi-isomorphic, see Theorem \ref{quasi}. Similar structures for homotopies of differentials can be found also in \cite{CNR}  \cite{Ruiperez} \cite{NojaRe} \cite{Noja}.\\
Finally, a remark - or better a warning - about the holomorphic category is in order. Let us consider a generic smooth supermanifold $\mani$ admitting a closed non-degenerate odd 2-form $\omega$, \emph{i.e.} an odd symplectic supermanifold $(\mani, \omega)$. Then, by a well-known result due to Schwarz the supermanifold $(\mani, \omega)$ is globally symplectomorphic to the \virgolette{standard}'' odd symplectic supermanifold constructed as $\cat{M}$ above, starting from the reduced space $\mani_{\mathpzc{red}}$ of $\mani$, and endowed with its standard odd symplectic form $\sum_i dx_i dp_i$, see \cite{Schwarz}. The proof of this fact heavily relies on that every smooth supermanifold is in fact \emph{split}, hence it is itself the total space of a certain vector bundle whose fibers have odd parity. It is then natural to ask what happen in the holomorphic category, where complex supermanifolds can in fact be non-split. In particular, one can ask the following question: does Schwarz's result hold true in the holomorphic category as well? In other words, is it possible to find an example of complex supermanifold admitting a closed non-degenerate odd 2-form which is not globally isomorphic to a supermanifold of the kind of $\cat{M}$ for some manifold $\mani_{\mathpzc{red}}$? Clearly, a non-split complex supermanifold admitting a globally defined odd non-degenerate closed 2-form would provide such a counterexample to Schwarz result. This suggests, in turn, the following question: do the obstruction classes to splitting a complex supermanifold also obstruct the existence of a globally-defined odd non-degenerate closed 2-form? We leave these questions to future works. 

%non-split supermanifolds admitting a non-degenerate odd 2-form exist or do all odd symplectic supermanifold are 

%Getting back to global geometric aspects, while a generalization of Schwarz's result in \cite{Schwarz} proves that any 

%Getting back to global geometric aspects, Schwarz showed in \cite{Schwarz} that any real supermanifold admitting a non-degenerate globally defined odd 2-form, \emph{i.e.}\ a odd symplectic supermanifold \cite{Mnev}, is symplectomorphic to one realized as a total space as $\cat{M}$, but starting from an ordinary manifold $\mani_{\mathpzc{red}}.$ This result 

%by a well-expected generalization of Schwarz's result in \cite{Schwarz} (see also \cite{Mnev}) to our setting, \emph{any} generalized odd symplectic supermanifold is globally isomorphic (or better symplectomorphic) to one of the kind $(\cat{M}, \omega)$

%It is to be noted, as proved, Theorem \ref{BVop} holds true both in the smooth and the holomorphic category. Nonetheless, if it is to be expected that 

\appendix

\section{Extensions of Sheaves} \label{app1}

\noindent For the sake of readability of the paper we recall that an \emph{extension} of sheaves on a manifold $X$ is a short exact sequence of sheaves
\bear \label{ses}
\xymatrix{
0 \ar[r] & \ar[r] \mathcal{A} \ar[r]^i & \mathcal{C} \ar[r]^j & \mathcal{B} \ar[r] & 0.
}
\eear
In particular, we say that $\mathcal{C}$ is an extension of $\mathcal{B}$ by $\mathcal{A}$. It is well-known from homological algebra that extensions are classified up to equivalence %\emph{i.e.}\ up to commutative diagrams  
%\bear
%\xymatrix{
%0 \ar[r] & \ar[r] \ar@{=}[d] \mathcal{A} \ar[r] & \mathcal{C} \ar[r] \ar[d]^{\phi} & \mathcal{B} \ar@{=}[d]\ar[r] & 0\\
%0 \ar[r] & \ar[r] \mathcal{A} \ar[r] & \mathcal{C}^\prime \ar[r] & \mathcal{B} \ar[r] & 0
%}
%\eear
%where $\phi : \mathcal{C} \rightarrow \mathcal{C}^\prime$ is an isomorphism, 
via their cohomology classes $[\xi ] \in Ext^1 (\mathcal{B}, \mathcal{A})$. In particular, we say that an extension is \emph{split} if is equivalent to the trivial extension, \emph{i.e.}\ if $\mathcal{C} \cong \mathcal{A} \oplus \mathcal{B}$ in 
\eqref{ses} 
%\bear
%\xymatrix{
%0 \ar[r] & \ar[r] \mathcal{A} \ar[r] & \mathcal{A} \oplus \mathcal{B} \ar[r] & \mathcal{B} \ar[r] & 0.
%}
%\eear
This is the same as saying that there exists a \emph{retraction} morphism $\pi : \mathcal{B} \rightarrow \mathcal{C} $ splitting the above exact sequence,
%\bear
%\xymatrix{
%0 \ar[r] & \ar[r] \mathcal{A} \ar[r] & \mathcal{C} \ar[r]_{j} & \mathcal{B} \ar@{->}@/_1.2pc/[l]^{\pi} \ar[r] & 0.
%}
%\eear 
\emph{i.e.} $\pi$ has the property that $ j \circ \pi = id_{\mathcal{B}}$. Notice that if $\mathcal{A}$ and $\mathcal{B}$ are locally-free sheaves on $X$, then one has $\mathcal{H}om (\mathcal{B}, \mathcal{A}) \cong \mathcal{A} \otimes \mathcal{B}^\ast$, so that in particular $Ext^1 (\mathcal{B}, \mathcal{A}) \cong H^1 (X, \mathcal{A} \otimes \mathcal{B}^\ast)$. \vspace{.3cm}

%It follows that in order to study the geometry of $\Omega^1_{\tiny{\cat{M}}}$, with reference to equations \eqref{extomega} in Theorem \ref{extteo}, we thus need to consider the cohomology group 
%\bear
%Ext^1 (\pi^\ast \cat{T}_{\mani}, \pi^\ast \Omega^1_{\mani}) \cong H^1 (|\cat{M}_{\mathpzc{red}}|, \mathcal{H}om (\pi^\ast \cat{T}_{\mani}, \pi^\ast \Omega^1_{\mani} )).
%\eear
%We will deal with it in different geometric frameworks in the next subsection.

%\noindent Theorem \ref{splitting} shows that in the smooth category $\Omega^1_{\tiny{\cat{M}}}$ is indeed a split extension of $\pi^\ast \Omega^1_{\mani}$ by $\pi^\ast \cat{T}_\mani$, but the splitting is \emph{non-canonical}.

\noindent We will now compute explicitly the above $Ext$-functor. We will work in a general setting - over a smooth, analytic or algebraic manifold $X$ -, following an ordinary diagram chasing argument. \\
%Working in any geometric category, we start considering an exact sequence of locally-free sheaves on a smooth, analytic of algebraic manifold $X$
%\bear \label{ses}
%\xymatrix{
%0 \ar[r] & \mathcal{A} \ar[r]^{i} &\mathcal{C} \ar[r]^{j} & \mathcal{B}  \ar[r] & 0.
%}
%\eear
Since \eqref{ses} is always locally split, \emph{i.e.}\ $\mathcal{C} \lfloor_{U} \cong \mathcal{A}\lfloor_{U} \oplus \mathcal{B} \lfloor_{U}$ on an open set $U$ in $X$, then there exists a basis $\underline c^U = \{ c^U_1, \ldots, c^U_{n+m}\}$ of $\mathcal{C}$ such that $\underline{a}^U = \{c^U_1, \ldots, c_n^U \}$ is a basis of $\mathcal{A}$ and $\underline{b}^U = \{ j(c^U_{n+1}), \ldots j(c^U_{n+m})\}$ is a basis of $\mathcal{B}$. \\
If now $U$ and $V$ are two open sets in $X$ such that $U \cap V \neq 0$, and $\underline c^U$ and $\underline c^V$ are the related local bases on $U$ and $V$ respectively, then %with reference to the form of the transition functions in Theorem \ref{transCotM}, 
we consider a coordinate transformation of the following form (see for example the transition functions of Theorem \ref{transCotM} in the main text):
\bear \label{transext}
\underline{c}^U = \underline{c}^V \left ( \begin{array}{c|c}
A & C \\
\hline
0 & B
\end{array}
\right ),
\eear
where $A \in \cat{Mat}_{n\times n} (\mathcal{O}_{U \cap V})$, $B \in \cat{Mat}_{m\times m} (\mathcal{O}_{U \cap V})$ and $C \in \cat{Mat}_{n\times m} (\mathcal{O}_{U \cap V}).$ The class $[\xi] \in Ext^1 (\mathcal{B}, \mathcal{A})$ is defined applying the contravariant functor ${H}om ( \cdot , \mathcal{A})$ to the short exact sequence \eqref{ses}, obtaining 
\bear
\xymatrix{
0 \ar[r] & {H}om(\mathcal{B}, \mathcal{A}) \ar[r] & {H}om (\mathcal{C}, \mathcal{A}) \ar[r] & {H}om (\mathcal{A}, \mathcal{A}) \ar[r]^\delta & Ext^1 (\mathcal{B}, \mathcal{A}) \ar[r] & \ldots.
}
\eear
We have $[\xi] = \delta (id_\mathcal{A})$. In order to explicitly compute this, we use a covering having open sets $U$ and $V$ with $U\cap V \neq \emptyset$. In particular, we describe the element $id_\mathcal{A} \in Hom (\mathcal{A}, \mathcal{A}) = \Gamma_X (\mathcal{A} \otimes \mathcal{A}^\ast)$ on the open set $U$ as 
$
\underline a^U \cdot \partial^U_{\underline a} \defeq a^U_i \otimes \partial^U_{a_j},
$
where we have introduced $\{ \partial^U_{a_i}\}_{i=1}^n$, the basis dual to $\{ a_i\}_{i=1}^n$ on $U$. %The above tensor product is usually called Kronecker product. 
Notice that here $\underline a^U$ is looked at as a \emph{row} vector and $\partial^U_{\underline a}$ is looked at as a \emph{column} vector, so that in particular, their transformation in an intersection reads $\underline a^U = \underline a^V A$ and $\partial^U_{\underline a} = A^{-1} \partial^V_{\underline a}$, or analogously $\partial^{U t}_{\underline a} = \partial^{V t}_{\underline a} (A^{-1})^{t}$. It follows that indeed $id_{\mathcal{A}} = \underline{a}^U \cdot \partial^U_{\underline a} = \underline{a}^V A \cdot A^{-1} \partial^V_{\underline a} = \underline{a}^V \cdot \partial^V_{\underline{a}}$. \\
We now look at the transformation of the dual basis $\partial^U_{\underline c}$ of $ \underline{c}^U$, which we decompose - with a slight abuse of notation - as $\partial^U_{\underline c} = (\partial^U_{\underline a}, \partial^U_{\underline b}).$ We have that 
\bear
(\partial^U_{\underline a}, \partial^U_{\underline b}) = (\partial^U_{\underline a}, \partial^U_{\underline b}) \left  ( \begin{array}{c|c}
A & C \\
\hline
0 & B
\end{array}
\right )^{ -1 \,t} = \left  ( \begin{array}{c|c}
(A^{-1})^{t} & 0 \\
\hline
- (B^{-1})^t C^{t} (A^{-1})^t & (B^{-1})^t
\end{array}
\right ).
\eear
In particular, it follows that 
\bear
\partial^{Ut}_{\underline a} = \partial^{Vt}_{\underline a} (A^{-1})^t - \partial^{Vt}_{\underline b} (B^{-1})^t C^t (A^{-1})^t
\eear
Let us now consider the liftings of $id_{\mathcal{A}}$ to $Hom (\mathcal{C}, \mathcal{A})$ and their difference $\varphi_{UV} \defeq \underline a^U \cdot \partial^U_{\underline a} - \underline{a}^V \cdot \partial^V_{\underline a }$ written with respect to the bases on $V$. We find
\bear
\varphi_{UV} = \underline a^U \cdot \partial^U_{\underline a} - \underline{a}^V \cdot \partial^V_{\underline a } = \underline a^V (A \cdot A^{-1}) \partial^U_{\underline a} - \underline{a}^V (A \cdot A^{-1} CB^{-1} ) \partial^V_{\underline b } =  - \underline{a}^V (CB^{-1}) \partial_{\underline b}^V.
\eear
We observe that $\varphi_{UV} $ can be naturally interpreted as a section $\varphi_{UV} \in \Gamma_{U \cap V} ( Hom(\mathcal{B}, \mathcal{A}) )$, whose associated matrix with respect to the bases $\underline a^V$ and $\underline b^V$ is given by $-CB^{-1} \in \cat{Mat}_{n\times m} (\mathcal{O}_{U \cap V}).$ More in general, given a open covering $\mathcal{U} = \{U_i \}_{i\in I}$ of $X$, the cohomology class $[\xi] \in Ext^1 (\mathcal{B}, \mathcal{A}) $ is represented by the cocycle $\{ \varphi_{ij} \}_{i, j \in I}$, such that
\bear
Ext^{1} (\mathcal{B}, \mathcal{A}) \owns [\xi]   \leftrightsquigarrow \{ \varphi_{ij} : U_i\cap U_j \rightarrow \cat{Mat}_{n\times m} (\mathcal{O}_X \lfloor_{U_i \cap U_j}): i <j \},
\eear
which is represented by the matrix $-CB^{-1}$ on in intersection $U_i \cap U_j$, with respect to the bases of $\mathcal{A}$ and $\mathcal{B}$ chosen on $U_j.$\\

\noindent Now, if $[\xi] \equiv 0 $ in cohomology, the related \v{C}ech cocycle is actually a coboundary, \emph{i.e.}\
\bear \label{cob}
\varphi_{UV}  = (\varphi_V - \varphi_{U} ) \lfloor_{U \cap V},
\eear
for two open sets $U\cap V \neq \emptyset$ and where $\varphi_V \in \Gamma_V (Hom (\mathcal{B}, \mathcal{A}))$ and $\varphi_U \in \Gamma_U (Hom (\mathcal{B}, \mathcal{A}))$. In particular, choosing bases on $U$ and $V$, in terms of matrix representatives, one pose $\varphi_U \defeq \underline{a}^U [M_U] \partial^U_{\underline b}$ and $\varphi_V = \underline{a}^V [M_V ] \partial^V_{\underline b}$ and $\varphi_{UV} = \partial_{a}^V [-CB^{-1}] \partial^V_{\underline b}$. Changing coordinates from $U$ to $V$ in $\varphi_U$ one has $\varphi_U = \underline{a}^V [A M_U B^{-1}] \partial^V_{\underline b}$. Substituting these in \eqref{cob} one gets the matrix identity
\bear
-CB^{-1} = M_V - A M_U B^{-1},
\eear
which in turn can be rewritten as 
\bear \label{vanish}
0 = C + M_V B - AM_U.
\eear
Recalling that in (non-abelian) \v{C}ech cohomology, by definition, two 1-cocycles $\{ g_{ij} \}_{i<j}$ and $\{ g^{\prime}_{ij}\}_{i<j}$ are cohomologous if $g_{ij}^\prime = h_i g_{ij} h_j^{-1}$ for some $0$-cochains $\{ h_i \}_{i \in I}$, then in the present case for the sheaf $\mathcal{C}$, it is enough to consider, say on $U_i = V$ 
\bear
h_V = \left (\begin{array}{c|c}
1_n & M_V \\
\hline 
0 & 1_m
\end{array}
\right ),  
\eear
so that, in turn $h_V g_{VU} g_{U}^{-1}$ reads
\bear
\left (
\begin{array}{c|c}
1_n & M_V \\
\hline 
0 & 1_m 
\end{array}
\right ) 
\left (
\begin{array}{c|c}
A & C \\
\hline 
0 & B 
\end{array}
\right )
\left (
\begin{array}{c|c}
1_n & -M_U \\
\hline 
0 & 1_m 
\end{array}
\right )  = 
\left (
\begin{array}{c|c}
A & {C+ M_VB - AM_U} \\
\hline 
0 & B 
\end{array}
\right ) = \left (
\begin{array}{c|c}
A & 0 \\
\hline 
0 & B 
\end{array}
\right ), 
\eear
upon using equation \eqref{vanish} in the last equality. We summarize the above discussion in the following Theorem.
\begin{theorem} \label{reductionlemma} Let $\mathcal{C}$ be an extension of $\mathcal{B}$ by $\mathcal{A}$ as in \eqref{ses} with transition functions of the form 
%\eqref{transext}
\bear \label{transomega}
G = \left ( \begin{array}{c|c}
A & C \\
\hline
0 & B
\end{array}
\right )
\eear
for $A$ and $B$ transition functions of $\mathcal{A}$ and $\mathcal{B}$ respectively, upon choosing local bases. Then $[\xi] \in Ext^1 (\mathcal{B}, \mathcal{A})$ is represented by a 1-cocycle valued in $\cat{\emph{Mat}}_{n\times m} (\mathcal{O}_{X})$ given by $[-CB^{-1}]$ with respect to the chosen bases of $\mathcal{A}$ and $\mathcal{B}$. In particular, if $[\xi] \equiv 0$, then the structure group of the $\mathcal{C}$ reduces to $GL(n) \times GL(m)$.
\end{theorem}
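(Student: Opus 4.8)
The plan is to identify the extension class $[\xi]\in Ext^1(\mathcal{B},\mathcal{A})$ with the image $\delta(\mathrm{id}_{\mathcal{A}})$ of the identity under the connecting homomorphism obtained by applying the contravariant functor $\mathcal{H}om(-,\mathcal{A})$ to \eqref{ses}, and then to compute this class by an explicit \v{C}ech representative adapted to the block form \eqref{transomega}. First I would fix an open covering $\{U_i\}_{i\in I}$ of $X$ on which \eqref{ses} splits and choose, for each $i$, an adapted local basis $\underline c^{U_i}$ of $\mathcal{C}$ whose first $n$ vectors give a basis $\underline a^{U_i}$ of $\mathcal{A}$ and whose last $m$ vectors project under $j$ to a basis $\underline b^{U_i}$ of $\mathcal{B}$; on overlaps the transition matrices then take the block upper-triangular shape \eqref{transext}, with diagonal blocks $A,B$ the transition functions of $\mathcal{A},\mathcal{B}$ and off-diagonal block $C$.

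The heart of the argument is the explicit evaluation of $\delta(\mathrm{id}_{\mathcal{A}})$. Writing $\mathrm{id}_{\mathcal{A}}=\underline a\cdot\partial_{\underline a}$ with $\underline a$ a row vector and $\partial_{\underline a}$ the dual column vector, I would first record how the dual basis $\partial_{\underline c}=(\partial_{\underline a},\partial_{\underline b})$ of $\underline c$ transforms: since $\underline c$ transforms by $G$ on the right, its dual transforms by $(G^{-1})^{t}$, and reading off the relevant block gives
\[
\partial^{Ut}_{\underline a}=\partial^{Vt}_{\underline a}(A^{-1})^{t}-\partial^{Vt}_{\underline b}(B^{-1})^{t}C^{t}(A^{-1})^{t}.
\]
Lifting $\mathrm{id}_{\mathcal{A}}$ to a local section of $\mathcal{H}om(\mathcal{C},\mathcal{A})$ on each chart and forming the difference $\varphi_{UV}$ of the two lifts expressed in the $V$-basis, the $\mathcal{A}$-diagonal contributions cancel after using $\underline a^{U}=\underline a^{V}A$, and what survives is a genuine section of $\mathcal{H}om(\mathcal{B},\mathcal{A})$ whose matrix with respect to $\underline a^{V},\underline b^{V}$ is exactly $-CB^{-1}$. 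Checking the \v{C}ech cocycle condition (which follows from the cocycle identity for $G$) then shows $[\xi]\leftrightsquigarrow\{-C_{ij}B_{ij}^{-1}\}$, as claimed.

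For the reduction of the structure group, I would assume $[\xi]\equiv 0$, so that the cocycle $\{-CB^{-1}\}$ is a coboundary: there exist $0$-cochains $M_i\in\Gamma(U_i,\mathcal{H}om(\mathcal{B},\mathcal{A}))$ with $-CB^{-1}=M_V-AM_UB^{-1}$ on overlaps, which rearranges to the matrix identity \eqref{vanish}, that is $C+M_VB-AM_U=0$. The point is then purely non-abelian \v{C}ech cohomology: gauging the cocycle $G$ by the unipotent $0$-cochain $h_i=\left(\begin{smallmatrix}1_n&M_i\\0&1_m\end{smallmatrix}\right)$, i.e. passing to $h_iG_{ij}h_j^{-1}$, transforms the off-diagonal block $C$ into $C+M_VB-AM_U$, which vanishes by \eqref{vanish}. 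Hence in the new trivialization the transition functions are block-diagonal $\mathrm{diag}(A,B)$, exhibiting the reduction of the structure group to $GL(n)\times GL(m)$.

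The step I expect to be the most delicate is the bookkeeping in the second paragraph: keeping the row/column conventions for $\underline a$ and $\partial_{\underline a}$ consistent, correctly extracting the off-diagonal block of $(G^{-1})^{t}$, and verifying that the difference of lifts $\varphi_{UV}$ really descends to $\mathcal{H}om(\mathcal{B},\mathcal{A})$ independently of the chosen lift (up to coboundary). Everything else is formal: the identification $Ext^1(\mathcal{B},\mathcal{A})\cong H^1(X,\mathcal{A}\otimes\mathcal{B}^{\ast})$ for locally-free $\mathcal{A},\mathcal{B}$, and the translation of the coboundary condition into conjugation by a unipotent cochain. In the $\mathbb{Z}_2$-graded application (Theorem \ref{splittingsmooth}) the same computation goes through verbatim once the appropriate Koszul signs are inserted, which is exactly why the lemma is phrased in this matrix form.
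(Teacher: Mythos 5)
Your proposal is correct and follows essentially the same route as the paper's own argument: computing $\delta(\mathrm{id}_{\mathcal{A}})$ via adapted local bases, the block transformation of the dual basis, and the difference of lifts yielding the representative $-CB^{-1}$, followed by the non-abelian \v{C}ech gauging with the unipotent cochain $\left(\begin{smallmatrix}1_n & M_i\\ 0 & 1_m\end{smallmatrix}\right)$ to obtain the block-diagonal reduction. The only (harmless) addition is your explicit remark that the cocycle condition for $\{-C_{ij}B_{ij}^{-1}\}$ follows from that of $G$, which the paper leaves implicit.
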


\section{Geometry of the Super Conic} \label{app2}

\noindent For a better understanding of Example \ref{SuperConic}, we briefly spell out in this appendix some details of the geometry of the complex supermanifold $\mathcal{C} \subset \mathbb{CP}^{2|2}$ cut out by the equation  
\bear \label{superc}
X_0^2 + X_1^2 + X_2^2 + \Theta_1 \Theta_2 = 0  \quad \subset \quad \mathbb{CP}^{2|2}.
\eear   
where the $X$'s and the $\Theta$'s denote homogeneous even and odd coordinates of $\mathbb{CP}^{2|2}$ respectively. The above equation \eqref{superc} defines a $1|2$-dimensional (complex) supermanifold. Further it is easy to observe that, setting the odd homogeneous coordinates to zero, one is left with the equation 
\bear
X_0^2 + X_1^2 + X_2^2 = 0  \quad \subset \quad \mathbb{CP}^{2}.
\eear
which defines a conic in $\mathbb{CP}^2$, implying that the reduced manifold of $\mathcal{C}$ is isomorphic to $\mathbb{CP}^1$, that is $\mathcal{C}_{\mathpzc{red}} \cong \mathbb{CP}^1.$ We will show that the super conic $\mathcal{C}$ is indeed isomorphic to a non-projected supermanifold of dimension $1|2$ having $\mathbb{CP}^1$ as reduced manifold. \\
It is not hard to classify all of the non-projected $1|2$-dimensional complex supermanifolds over $\mathbb{CP}^1.$ Indeed the following fundamental Theorem holds true, see Proposition 9 in \cite{Manin}, Chapter 4, Section 2.
\begin{theorem}[Supermanifolds of dimension $n|2$] \label{dim2} Let $\mani \defeq (|\mani_{\mathpzc{red}}|, \mathcal{O}_\mani)$ be a complex supermanifold of dimension $n|2$. Then $\mani$ is defined up to isomorphism by the triple 
$(\mani_{\mathpzc{red}}, \mathcal{F}_\mani, \omega_\mani)$ where $\mathcal{X}_{\mathpzc{red}}$ is the reduced manifold of $\mani$, 
$\mathcal{F}_\mani$ is a locally-free sheaf of $\mathcal{O}_{\mani_{\mathpzc{red}}}$-modules of rank 2 - the fermionic sheaf of $\mani$ -, and $\omega_\mani$ is the fundamental obstruction $\omega_{\mani} \in H^1 (|\mani_{\mathpzc{red}}|, \cat{\emph{T}}_{\mani_{\mathpzc{red}}} \otimes_{\mathcal{O}_{\mani_{\mathpzc{red}}}} \wedge^2 \mathcal{F}_\mani).$
\end{theorem}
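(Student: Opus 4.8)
The plan is to prove Manin's classification by realising $\mani$ as a twisted form of the split supermanifold determined by the pair $(\mani_{\mathpzc{red}}, \mathcal{F}_\mani)$, and then to compute the relevant automorphism sheaf, which in odd dimension $2$ collapses to an \emph{abelian} sheaf whose first cohomology is exactly the group in which $\omega_\mani$ lives. First I would set up the $\mathcal{J}_\mani$-adic filtration. Since $\mani$ has odd dimension $2$, the ideal sheaf $\mathcal{J}_\mani$ of nilpotents satisfies $\mathcal{J}_\mani^3 = 0$, so there is the finite filtration
\[
\mathcal{O}_\mani \supset \mathcal{J}_\mani \supset \mathcal{J}_\mani^2 \supset \mathcal{J}_\mani^3 = 0,
\]
with $\mathcal{J}_\mani / \mathcal{J}_\mani^2 = \mathcal{F}_\mani$ odd of rank $2$ and $\mathcal{J}_\mani^2 = \wedge^2 \mathcal{F}_\mani$ even of rank $1$. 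Hence the associated graded sheaf of algebras is the split model $\wedge^\bullet \mathcal{F}_\mani$, and because every supermanifold is locally split, $\mani$ is a twisted form of the split supermanifold attached to $(\mani_{\mathpzc{red}}, \mathcal{F}_\mani)$, whose structure sheaf is $\wedge^\bullet \mathcal{F}_\mani$.

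Next I would invoke the standard descent principle for such twisted forms. Choosing an open cover $\{U_i\}$ of $|\mani_{\mathpzc{red}}|$ on which $\mathcal{O}_\mani$ is split, local trivialisations give isomorphisms $\phi_i : \mathcal{O}_\mani|_{U_i} \to \wedge^\bullet \mathcal{F}_\mani|_{U_i}$, and the transition cocycle $g_{ij} = \phi_i \circ \phi_j^{-1}$ consists of automorphisms of $\wedge^\bullet \mathcal{F}_\mani$ inducing the identity on the associated graded $\bigoplus_k \wedge^k \mathcal{F}_\mani$ (since $\mani_{\mathpzc{red}}$ and $\mathcal{F}_\mani$ are intrinsic to $\mani$). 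This identifies isomorphism classes of supermanifolds with prescribed $(\mani_{\mathpzc{red}}, \mathcal{F}_\mani)$ with the non-abelian cohomology set $H^1(|\mani_{\mathpzc{red}}|, \mathcal{A}ut_{\mathpzc{gr}}(\wedge^\bullet \mathcal{F}_\mani))$, where $\mathcal{A}ut_{\mathpzc{gr}}$ is the sheaf of such filtration-preserving automorphisms; this is precisely the gluing statement recorded in \cite{Manin}.

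The crucial step is to identify $\mathcal{A}ut_{\mathpzc{gr}}(\wedge^\bullet \mathcal{F}_\mani)$ in odd dimension $2$. Writing an automorphism as $\mathrm{id} + N$ with $N$ strictly raising the $\mathcal{J}_\mani$-filtration, I would analyse its action on local generators. An even generator $x_i$ can only go to $x_i + a_i\,\theta_1\theta_2$ with $a_i$ an even local function, the only filtration-raising targets lying in the rank-one even piece $\wedge^2 \mathcal{F}_\mani$; an odd generator $\theta_\alpha$ must be fixed, because a correction would be an odd element of $\mathcal{J}_\mani^2 = \wedge^2 \mathcal{F}_\mani$, and this sheaf is purely even. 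Because $\mathcal{J}_\mani^3 = 0$, composition of two such automorphisms simply adds the coefficients $a_i$, so $\mathcal{A}ut_{\mathpzc{gr}}$ is abelian and canonically isomorphic, as a sheaf of abelian groups, to $\cat{T}_{\mani_{\mathpzc{red}}} \otimes_{\mathcal{O}_{\mani_{\mathpzc{red}}}} \wedge^2 \mathcal{F}_\mani$, the datum $\{a_i \partial_{x_i}\}$ being a vector field valued in $\wedge^2 \mathcal{F}_\mani$. Consequently the non-abelian $H^1$ collapses to the ordinary cohomology $H^1(|\mani_{\mathpzc{red}}|, \cat{T}_{\mani_{\mathpzc{red}}} \otimes_{\mathcal{O}_{\mani_{\mathpzc{red}}}} \wedge^2 \mathcal{F}_\mani)$, whose distinguished zero element is the split supermanifold and whose class attached to $\mani$ is exactly the fundamental obstruction $\omega_\mani$ defined before Theorem \ref{obsthm}. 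This yields the desired bijection between isomorphism classes and triples $(\mani_{\mathpzc{red}}, \mathcal{F}_\mani, \omega_\mani)$.

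The main obstacle is the automorphism computation together with the fact that for $q = 2$ the genuinely non-abelian higher part of the automorphism tower is absent. The delicate point I would have to justify carefully is that the automorphisms fix the odd generators on the nose — this rests on $\mathcal{J}_\mani^2$ being purely even, a feature special to odd dimension $2$ — and that as a result $\mathcal{A}ut_{\mathpzc{gr}}$ is abelian, so the classifying non-abelian $H^1$ genuinely coincides with an honest sheaf-cohomology group. The descent equivalence and the realisability of an arbitrary triple then follow formally, and the remaining identifications ($\mathcal{F}_\mani = \mathcal{J}_\mani/\mathcal{J}_\mani^2$ and that the class is $\omega_\mani$) are the definitions recalled earlier in the text.
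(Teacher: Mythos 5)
Your argument is correct, but there is nothing in the paper to compare it against: the paper does not prove Theorem \ref{dim2} at all, it quotes it from Manin (Proposition 9, Chapter 4, Section 2 of \cite{Manin}). What you have written is, in substance, the standard argument behind that citation, namely Green's classification of supermanifolds with fixed retract $(\mani_{\mathpzc{red}}, \mathcal{F}_\mani)$ by the cohomology $H^1$ of the sheaf of automorphisms of $\wedge^\bullet \mathcal{F}_\mani$ congruent to the identity modulo $\mathcal{J}^2_\mani$, followed by the observation, special to odd dimension $2$, that this sheaf is abelian and canonically isomorphic to $\cat{T}_{\mani_{\mathpzc{red}}} \otimes_{\mathcal{O}_{\mani_{\mathpzc{red}}}} \wedge^2 \mathcal{F}_\mani$: the odd generators cannot move because the odd part of $\mathcal{J}^2_\mani$ vanishes, and $\mathcal{J}^3_\mani = 0$ makes composition additive, so the non-abelian $H^1$ collapses to an honest cohomology group. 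This collapse is also exactly why the paper can remark, right after the statement, that no higher obstruction classes appear for $q=2$. One point deserves more care than your write-up gives it: the bijection you construct is between $H^1(|\mani_{\mathpzc{red}}|, \cat{T}_{\mani_{\mathpzc{red}}} \otimes_{\mathcal{O}_{\mani_{\mathpzc{red}}}} \wedge^2 \mathcal{F}_\mani)$ and isomorphism classes of supermanifolds \emph{equipped with} identifications of the reduced space and of the fermionic sheaf; forgetting the markings, distinct classes related by the action of $H^0$ of the automorphisms of $\mathcal{F}_\mani$ (and of $\mani_{\mathpzc{red}}$) give isomorphic supermanifolds. This is precisely why $\omega$ and $\lambda\,\omega$, $\lambda \in \mathbb{C}^\ast$, define isomorphic supermanifolds, a subtlety the paper itself flags in Appendix \ref{app2}; it does not affect the theorem as stated, which only claims that the triple determines $\mani$ up to isomorphism, but your sentence identifying isomorphism classes with the cohomology set should be qualified accordingly.
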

\noindent Notice that since the odd dimension is $2$, no higher obstruction classes can appear. 
\begin{remark}
Concretely, in presence of a non-zero obstruction class, the (even) transition functions coming from the underlying manifold $\mani_{\mathpzc{red}}$ get 
a correction coming from $\omega_\mani$ as they are lifted to $\mani$. More precisely, if $\{{U}_i \}_{i \in I}$ is an open covering of $|\mani_{\mathpzc{red}}|$ such that in a certain intersection ${U}_i \cap {U}_j$ the transition functions of $\mani_{\mathpzc{red}}$ 
are given by certain (holomorphic) functions $z_{ij}^{\ell} = z_{ij}^{\ell} (\underline z_j)$ for $\ell = 1, \ldots, n,$ then the \emph{even} transition functions of a non-projected $n|2$-dimensional supermanifold will be given explicitly by 
\begin{equation}\label{eq:ztransf}
z_{ ij}^{\ell} (\underline z_j , \underline \theta_j) = z^{\ell}_{i j} (\underline z_j) + (\omega_{\mani})_{ij} (\underline z_j , \underline \theta_j)(z^\ell_{ ij}) \qquad \ell = 1, \ldots, n,
\end{equation}
where the $z$'s and the $\theta$'s are respectively even and odd local coordinates for $\mani$, and where $ \omega_{ij}$ is a derivation acting on $z_{ij}^\ell$ and taking values in $\wedge^2 \mathcal{F}_\mani$ - hence the $\theta$'s can only appear in $ (\omega_\mani)_{ij}$ through their product, thus respecting parity.
\end{remark}
\noindent Now, keeping fixed $\mani_{\mathpzc{red}} = \mathbb{CP}^1$, Theorem \ref{dim2} yields the following result for non-projected of dimension $1|2$.
\begin{theorem} [Non-Projected $1|2$-dimensional Supermanifolds over $\mathbb{CP}^1$] \label{ManTH} Every non-projected $1|2$-dimensional supermanifold $\mani$ over $\mathbb{CP}^1$ is characterised up to isomorphism by a triple $(\mathbb{CP}^1, \mathcal{F}_\mani, \omega_{\mani})$ where $\mathcal{F}_\mani$ is a locally-free sheaf of $\mathcal{O}_{\mathbb{CP}^1}$-modules of rank 2 such that 
\bear
\mathcal{F}_\mani \cong  \mathcal{O}_{\proj 1} (m) \oplus  \mathcal{O}_{\proj 1} (n),
\eear 
with $m+n =\ell$, 
$\ell \leq 4$ and $\omega_\mani$ is a non-zero cohomology class $\omega_\mani \in H^1 (|\mathbb{CP}^1| , \mathcal{O}_{\mathbb{CP}^1} (2-\ell)).$
\end{theorem}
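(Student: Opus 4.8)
The plan is to read the statement off from Manin's classification, Theorem~\ref{dim2}, specialised to $\mani_{\mathpzc{red}} = \mathbb{CP}^1$, and then to pin down its two pieces of data by the elementary geometry of bundles and line-bundle cohomology on $\mathbb{CP}^1$. Since the odd dimension is exactly $2$, Theorem~\ref{dim2} already guarantees that no higher obstruction classes occur, so that the triple $(\mathbb{CP}^1, \mathcal{F}_\mani, \omega_\mani)$ is a \emph{complete} isomorphism invariant. The only remaining task is therefore to describe the admissible fermionic sheaves $\mathcal{F}_\mani$ and to identify the cohomology group in which $\omega_\mani$ lives once the base is $\mathbb{CP}^1$.

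First I would fix the fermionic sheaf. By hypothesis $\mathcal{F}_\mani$ is a locally-free sheaf of $\mathcal{O}_{\mathbb{CP}^1}$-modules of rank $2$, and by the Birkhoff--Grothendieck splitting theorem every such sheaf on $\mathbb{CP}^1$ (holomorphic or algebraic) is a direct sum of line bundles, whence $\mathcal{F}_\mani \cong \mathcal{O}_{\mathbb{CP}^1}(m) \oplus \mathcal{O}_{\mathbb{CP}^1}(n)$ for integers $m,n$. Setting $\ell \defeq m+n$ gives $\wedge^2 \mathcal{F}_\mani \cong \mathcal{O}_{\mathbb{CP}^1}(\ell)$, which is the single numerical invariant of $\mathcal{F}_\mani$ that the obstruction can detect.

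Next I would compute the obstruction group explicitly. Using $\cat{T}_{\mathbb{CP}^1} \cong \mathcal{O}_{\mathbb{CP}^1}(2)$ together with the previous identification, the coefficient sheaf of the fundamental obstruction collapses to a single line bundle on $\mathbb{CP}^1$, namely $\cat{T}_{\mathbb{CP}^1} \otimes (\wedge^2 \mathcal{F}_\mani)^\ast \cong \mathcal{O}_{\mathbb{CP}^1}(2-\ell)$, so that $\omega_\mani \in H^1(|\mathbb{CP}^1|, \mathcal{O}_{\mathbb{CP}^1}(2-\ell))$. I would then invoke the cohomology of line bundles on $\mathbb{CP}^1$, namely $H^1(\mathbb{CP}^1, \mathcal{O}_{\mathbb{CP}^1}(d)) \neq 0$ if and only if $d \le -2$ (with $\dim = -d-1$ by Serre duality). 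Since being non-projected is precisely the condition $\omega_\mani \neq 0$, the coefficient group must be non-trivial, i.e.\ $2-\ell \le -2$, which supplies the numerical constraint on $\ell$ quoted in the statement and completes the classification. As a calibration, Example~\ref{SuperConic} realises the extremal case: there $\mathcal{F} \cong \mathcal{O}_{\mathbb{CP}^1}(2)^{\oplus 2}$, so $\ell = 4$ and $\omega \in H^1(\mathbb{CP}^1, \mathcal{O}_{\mathbb{CP}^1}(-2)) \cong \mathbb{C}$ is one-dimensional and non-zero.

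The main obstacle I anticipate is bookkeeping rather than conceptual: getting the dualisations right so that the coefficient line bundle comes out as $\mathcal{O}_{\mathbb{CP}^1}(2-\ell)$, since this depends on whether $\mathcal{F}_\mani$ is read as the conormal sheaf $\mathcal{J}_\mani/\mathcal{J}_\mani^2$ or as its dual (the two choices flip the sign of $\ell$ and hence interchange $\mathcal{O}(2-\ell)$ with $\mathcal{O}(2+\ell)$, and correspondingly the direction of the inequality $2-\ell\le -2$). I would resolve this once and for all by matching against the super conic, whose obstruction is known to sit in $H^1(\mathbb{CP}^1, \mathcal{O}_{\mathbb{CP}^1}(-2))$, and then propagate the chosen convention through the computation. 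Everything else — the splitting of $\mathcal{F}_\mani$, the identification $\cat{T}_{\mathbb{CP}^1} \cong \mathcal{O}_{\mathbb{CP}^1}(2)$, and the vanishing criterion for $H^1$ on $\mathbb{CP}^1$ — is standard and requires no further supergeometric input beyond Theorem~\ref{dim2}.
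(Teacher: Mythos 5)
Your route is the same as the paper's: Theorem \ref{dim2} reduces the classification to the pair $(\mathcal{F}_\mani,\omega_\mani)$, Birkhoff--Grothendieck splits $\mathcal{F}_\mani \cong \mathcal{O}_{\mathbb{CP}^1}(m)\oplus\mathcal{O}_{\mathbb{CP}^1}(n)$, and non-projectedness becomes the non-vanishing of $H^1$ of a single line bundle on $\mathbb{CP}^1$. Those ingredients are all correct and are exactly what the paper uses.

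The gap is in the sign bookkeeping that you yourself flagged as the main danger, and your proposed fix does not work. (i) With the paper's conventions --- $\mathcal{F}_\mani \defeq \mathcal{J}_\mani/\mathcal{J}_\mani^2$ and the fundamental obstruction living in $H^1(|\mani_{\mathpzc{red}}|, \cat{T}_{\mani_{\mathpzc{red}}}\otimes\wedge^2\mathcal{F}_\mani)$, as stated both in Section 4 and in Theorem \ref{dim2} itself --- the coefficient sheaf is $\cat{T}_{\mathbb{CP}^1}\otimes\wedge^2\mathcal{F}_\mani \cong \mathcal{O}_{\mathbb{CP}^1}(2+\ell)$, not your $\cat{T}_{\mathbb{CP}^1}\otimes(\wedge^2\mathcal{F}_\mani)^\ast \cong \mathcal{O}_{\mathbb{CP}^1}(2-\ell)$; non-vanishing of $H^1$ then forces $\ell\le-4$. (ii) Even granting your dualized convention, the inequality you actually derive, $2-\ell\le-2$, reads $\ell\ge4$, which is the \emph{opposite} of the ``$\ell\le4$'' in the statement; your claim that it ``supplies the numerical constraint on $\ell$ quoted in the statement'' is false, and indeed no convention produces $\ell\le4$. (iii) Your calibration anchor is misremembered: the paper's super conic has $\mathcal{F}_\mathcal{C} = \mathcal{O}_{\mathbb{CP}^1}(-2)^{\oplus 2}$ (Example \ref{SuperConic} and Appendix \ref{app2}), not $\mathcal{O}_{\mathbb{CP}^1}(2)^{\oplus 2}$, so the one data point you invoke to fix the convention contradicts it rather than confirming it; the sign-fixing procedure as you set it up cannot succeed. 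In fairness, the confusion partly originates in the paper: its statement ($H^1(\mathcal{O}_{\mathbb{CP}^1}(2-\ell))$ with $\ell\le4$) and its proof ($H^1(\mathcal{O}_{\mathbb{CP}^1}(2+\ell))$ with $m+n\le4$) disagree with each other, and neither inequality is correct as printed. The internally consistent version, which the conic realizes at $\ell=-4$ via $H^1(\mathbb{CP}^1,\mathcal{O}_{\mathbb{CP}^1}(-2))\cong\mathbb{C}$, is: $\omega_\mani$ a non-zero class in $H^1(|\mathbb{CP}^1|,\mathcal{O}_{\mathbb{CP}^1}(2+\ell))$, hence $\ell=m+n\le-4$ (equivalently $\ell\ge4$ if one replaces $\mathcal{F}_\mani$ by its dual throughout). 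A correct write-up must commit to one convention, derive the matching inequality, and only then check it against the conic; your proposal mixes the two and asserts an inequality it cannot derive.
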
 
\begin{proof} First by Birkhoff-Grothendieck splitting theorem, see \cite{Har}, every locally-free sheaf of $\mathcal{O}_{\mathbb{CP}^1}$-modules of any rank is isomorphic to a direct sum of invertible sheaves, that in turn are all of the form $\mathcal{O}_{\mathbb{CP}^1} (k)$ for some $k \in \mathbb{Z}$ (recall that $ \mbox{Pic} (\mathbb{CP}^1) \cong \mathbb{Z}$), \emph{i.e.}\ if we let $\mathcal{E}$ be a locally-free sheaf of $\mathcal{O}_{\mathbb{CP}^1}$-modules of rank $n$, then we have
$
\mathcal{E} \cong \bigoplus_{i = 1}^n \mathcal{O}_{\proj 1} (k_i)
$
uniquely up to permutation of the terms in direct sum at the right hand side of the isomorphism. It follows that in our case we have $\mathcal{F}_\mani \cong \mathcal{O}_{\mathbb{CP}^1} (m) \oplus \mathcal{O}_{\mathbb{CP}^1} (n)$, for $n, m \in \mathbb{Z}.$ \\
Finally, observing that $\wedge^2 \mathcal{F}_\mani \cong \mathcal{O}_{\mathbb{CP}^1} (n+m)$ and that $\cat{T}_{\mathbb{CP}^1} \cong \mathcal{O}_{\mathbb{CP}^1}(+2)$, one finds that 
\bear \label{cohoCP}
H^1 (|\mathbb{CP}^1| , \cat{T}_{\mathbb{CP}^1} \otimes \wedge^2 \mathcal{F}_\mani) \cong H^1 (|\mathbb{CP}^1| , \mathcal{O}_{\mathbb{CP}^1} (2+m+n))).
\eear
For the supermanifold $\mani$ to be non-projected the cohomology in \eqref{cohoCP} should be non-zero, which amounts to require that $ m+n \leq 4.$ Posing $\ell \defeq m+n \leq 4$ and $\omega_\mani$ a non-zero class in $H^1 (|\mathbb{CP}^1| , \mathcal{O}_{\mathbb{CP}^1} (2+\ell)),$ one concludes using Theorem \ref{ManTH}.
\end{proof}
\noindent We now focus on a particular choice of supermanifold in the \virgolette family'' singled out above, namely we choose $n= m = -2$. Notice that in this case one has a one-dimensional obstruction space, \emph{i.e.}\ $H^1(|\mathbb{CP}^1|, \mathcal{O}_{\mathbb{CP}^1} (-2)) \cong \mathbb{C}$. 
\begin{definition}[The Supermanifold $\mathbb{CP}^{1|2}_{\omega} $] We denote $\mathbb{CP}^{1|2}_\omega $, the supermanifold arising from the triple $(\mathbb{CP}^1, \mathcal{F}, \omega ) $, with $\mathcal{F} =  \mathcal{O}_{\mathbb{CP}^1} (-2)^{\oplus 2}$ and $\omega$ a non-zero class in $H^1(|\mathbb{CP}^1|, \mathcal{O}_{\mathbb{CP}^1} (-2)).$   
\end{definition}
\noindent In order to prove that the supermanifold $\mathbb{CP}^{1|2}_\omega$ is actually isomorphic to the super conic $\mathcal{C}$, we need to find the transition functions of $\mathbb{CP}^{1|2}_\omega$. To this end, let us work with the standard open covering of $\mathbb{CP}^1$ given by $U = (U_0, U_1)$ where $U_i \defeq \{ X_i \neq 0\}$, if $[X_0 : X_1]$ are the homogeneous coordinates of $\mathbb{CP}^1.$ Then one has 
\begin{align} \label{1}
& {U}_0 \defeq \{ X_0 \neq 0 \} \; \rightsquigarrow \; {z} \,\mbox{mod}\,{\mathcal{J}^2} \defeq \frac{X_1}{X_0}, 
%\quad \theta_1 \defeq \frac{\Theta_1}{X_0^{-m}}, \quad \theta_2 \defeq \frac{\Theta_2}{X_0^{-n}} \\
& {U}_1 \defeq \{ X_1 \neq 0 \} \; \rightsquigarrow \; {w}\, \mbox{mod}\,{\mathcal{J}^2} \defeq \frac{X_0}{X_1}, 
%\quad \psi_1 \defeq \frac{\Theta_1}{X_1^{-m}}, \quad \psi_2 \defeq \frac{\Theta_2}{X_1^{-n}}. \label{2}
\end{align}
for $\mathcal{J}$ the nilpotent sheaf of $\mathbb{CP}^{1|2}_\omega$ (note that since we are working in odd dimension $2$ one has $(\mathcal{J})_0 = \mathcal{J}^2 $). Accordingly, on unique intersection $U_0 \cap U_1$ one finds 
\bear \label{reduced}
z\, \mbox{mod}\, \mathcal{J}^2 = \frac{1}{w} \,\mbox{mod}\,{\mathcal{J}^2}. 
%\qquad \theta_1 = \frac{\psi_1}{w^{-m}}, \qquad \theta_2 = \frac{\psi_1}{w^{-n}}.
\eear 
Passing to the fermionic sheaf $\mathcal{F} = \mathcal{O}_{\mathbb{CP}^1} (-2)^{\oplus 2}$, we denote $(\theta_i)_{i = 1, 2}$ a local basis of $\mathcal{F}$ on ${U}_0$ and $(\psi_i)_{i= 1, 2} $ a local basis of $\mathcal{F}$ on ${U}_1$ respectively, so that one can write
\begin{align} 
& {U}_0 \defeq \{ X_0 \neq 0 \} \; \rightsquigarrow \; \theta_i \defeq \pi \left ( \frac{1}{ X_0^{2}}\right ), 
& U_1 \defeq \{ X_1 \neq 0 \} \; \rightsquigarrow \; \psi_1 \defeq \pi \left ( \frac{1}{ X_1^{2}} \right ), \label{2}
\end{align}
where the $\pi$'s are there to remember the odd parity, since $\mathcal{F} = (\mathcal{J})_1$ for $\mathcal{J}$ the sheaf of nilpotent sections in $\mathcal{O}_\mani$. The transition functions in the intersection $U_0 \cap U_1$ are therefore given by
\bear \label{oddt}
\theta_1 = \frac{\psi_1}{w^2}, \qquad \theta_2 = \frac{\psi_1}{w^2},
\eear 
These identify also the product $\theta_1 \theta_2$ (and $\psi_1 \psi_2$) with a section of $\wedge^2 \mathcal{F} \cong \mathcal{O}_{\mathbb{CP}^1} (-4).$ This is enough to give the correction to the even part of the transition functions of the non-projected supermanifold $\mathbb{CP}^{1|2}_\omega$ given by the presence of a non-zero obstruction class $\omega \in H^1(|\mathbb{CP}^1|, \mathcal{O}_{\mathbb{CP}^1} (-2)).$ Indeed, working with the previous conventions, one can explicitly identify 
\bear \label{omegacong}
H^1 (|\mathbb{CP}^1|, \cat{T}_{\mathbb{CP}^1} \otimes \wedge^2 \mathcal{F} )  \cong H^1(|\mathbb{CP}^1|, \mathcal{O}_{\mathbb{CP}^1} (-2)) \owns \omega = {\lambda} \cdot \left [\frac{1}{X_0X_1} \right ], 
\eear
for $\lambda \in \mathbb{C}$ - in our case $\lambda \neq 0.$ Notice that $X_0  X_1 \neq 0$ in $U_1 \cap U_1$, also - with abuse of notation - we will take the liberty of suppressing the index of the intersection as there is a single one of them. The previous \eqref{omegacong} can be rewritten as follows
\bear \label{omegaP}
\omega = \lambda \cdot \left [ \left ( \frac{X_1}{X_0} \right )^3 \frac{1}{X_1^4} X_0^2\right ] = \lambda \cdot \left [ \frac{\psi_1 \psi_2}{w^3} \partial_z \right ],
\eear
where we have identified the sections in the intersection $U_0 \cap U_1$ via $w^3 = (X_0 /X_1)^3$, $\psi_1 \psi_2 = 1 / X^4_1$ and $\partial_z = X^2_0$. Notice indeed that the section of the tangent sheaf $\partial_z \in \cat{T}_{\mathbb{CP}^1}$ satisfies the transformation law $\partial_{z} = -w^2 \partial_w$ and hence it has a double zero at $[0:1] \in \mathbb{CP}^1$, so that it can be identified with the section $X^2_0$ of $\mathcal{O}_{\mathbb{CP}^1} (+2)$. It follows that plugging equation \eqref{reduced} and \eqref{omegaP} into the general expression \eqref{eq:ztransf}, one finds that the even transition functions of $\mathbb{CP}^{1|2}_\omega$ reads
\bear
z (w, \psi_1, \psi_2)   = \frac{1}{w} + \lambda \frac{\psi_1 \psi_2}{w^3},
\eear
for some non-zero complex number $\lambda$, that we will simply set to $1$ in what follows. Indeed, it is true in general that choosing $\omega^\prime_\mani = \lambda \, \omega_\mani$ for some $\lambda \in \mathbb{C}^\ast$ defines an isomorphic extension of $\mathcal{O}_{\mani_{\mathpzc{red}}}$ by $\wedge^2 \mathcal{F}_\mani$ - however the isomorphism is not the identity on $\mathcal{O}_{\mani_{\mathpzc{red}}}$ and $\wedge^2 \mathcal{F}_\mani$. We summarize the previous discussion in the following Lemma. 
\begin{lemma}[Transition Functions of $\mathbb{CP}^{1|2}_\omega$] Let $\mathbb{CP}^{1|2}_\omega$ be the non-projected supermanifold defined as above. Then in the (unique) intersection $U_0 \cap U_1$ the transition functions read
\bear \label{tfCP}
z  = \frac{1}{w} + \frac{\psi_1 \psi_2}{w^3}, \qquad \qquad \theta_i = \frac{\psi_i}{w^2}, \quad i = 1,2.
\eear
\end{lemma}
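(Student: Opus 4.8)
The plan is to assemble the transition functions \eqref{tfCP} directly from the three data defining $\mathbb{CP}^{1|2}_\omega$ — the reduced manifold $\mathbb{CP}^1$, the fermionic sheaf $\mathcal{F} = \mathcal{O}_{\mathbb{CP}^1}(-2)^{\oplus 2}$, and the non-zero obstruction class $\omega$ — by feeding them into the general prescription \eqref{eq:ztransf} for a non-projected $n|2$-dimensional supermanifold. Since everything needed has essentially been prepared in the preceding paragraphs, the proof is a matter of collecting these ingredients in the right order and checking the bookkeeping.

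First I would record the odd transition functions. Each summand of $\mathcal{F} = \mathcal{O}_{\mathbb{CP}^1}(-2)^{\oplus 2}$ has transition function $w^2 = (X_0/X_1)^2$ on the intersection $U_0 \cap U_1$, so the odd generators transform as $\theta_i = \psi_i/w^2$ for $i=1,2$, which is precisely \eqref{oddt}. These are untouched by the obstruction, since by construction $\omega$ modifies only the even coordinate. Next I would note that the reduced even transition is the standard chart change on $\mathbb{CP}^1$, namely $z \equiv 1/w$ modulo nilpotents, as in \eqref{reduced}.

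The heart of the proof is the lift of the obstruction class to the even correction term. Using the identification $H^1(|\mathbb{CP}^1|, \cat{T}_{\mathbb{CP}^1} \otimes \wedge^2 \mathcal{F}) \cong H^1(|\mathbb{CP}^1|, \mathcal{O}_{\mathbb{CP}^1}(-2))$ recorded in \eqref{omegacong}, I represent $\omega$ by the \v{C}ech cocycle $\lambda/(X_0 X_1)$ and rewrite it in the local frame as $\omega = \lambda\,[\psi_1\psi_2 w^{-3}\,\partial_z]$, equation \eqref{omegaP}, matching the twists through $w^3 = (X_0/X_1)^3$, $\psi_1\psi_2 = 1/X_1^4$, and $\partial_z = X_0^2$ viewed as a section of $\mathcal{O}_{\mathbb{CP}^1}(+2) \cong \cat{T}_{\mathbb{CP}^1}$. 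Plugging \eqref{reduced} and \eqref{omegaP} into \eqref{eq:ztransf}, the correction is the derivation $\omega_{01}$ applied to $z=1/w$. Writing $\partial_z = -w^2 \partial_w$ so that $\omega_{01} = -\lambda\,\psi_1\psi_2 w^{-1}\,\partial_w$, one computes $\omega_{01}(1/w) = -\lambda\,\psi_1\psi_2 w^{-1}(-w^{-2}) = \lambda\,\psi_1\psi_2 w^{-3}$, whence $z = 1/w + \lambda\,\psi_1\psi_2/w^3$.

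Finally I would invoke the fact, already observed in the text, that replacing $\omega$ by $\lambda\omega$ for $\lambda \in \mathbb{C}^\ast$ yields an isomorphic extension (the isomorphism being non-trivial on the base and on $\wedge^2 \mathcal{F}$), so without loss of generality $\lambda = 1$, producing exactly \eqref{tfCP}. The main place to be careful — and really the only delicate point, since the rest is direct reading-off — is this third step: correctly matching the line-bundle twists in the isomorphism $H^1(\cat{T}_{\mathbb{CP}^1} \otimes \wedge^2 \mathcal{F}) \cong H^1(\mathcal{O}_{\mathbb{CP}^1}(-2))$ and expressing the tangent-valued cocycle $\partial_z$ in the $\partial_w$ frame, keeping the sign from $\partial_z = -w^2\partial_w$ so that the correction emerges with the correct coefficient $\psi_1\psi_2/w^3$ rather than its negative or a spurious power of $w$.
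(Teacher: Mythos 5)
Your proposal is correct and follows essentially the same route as the paper: the Lemma there simply summarizes the preceding discussion, which likewise reads off the odd transitions $\theta_i = \psi_i/w^2$ from $\mathcal{F} = \mathcal{O}_{\mathbb{CP}^1}(-2)^{\oplus 2}$, represents $\omega$ by the cocycle $\lambda/(X_0X_1)$, rewrites it as $\lambda[\psi_1\psi_2 w^{-3}\partial_z]$ via the identifications $w^3 = (X_0/X_1)^3$, $\psi_1\psi_2 = 1/X_1^4$, $\partial_z = X_0^2$, plugs into \eqref{eq:ztransf}, and normalizes $\lambda = 1$ by the isomorphism of extensions. Your only addition is making explicit the evaluation of the derivation on $1/w$ through $\partial_z = -w^2\partial_w$ (which the paper leaves implicit, and which indeed gives $+\psi_1\psi_2/w^3$), a harmless and correct elaboration.
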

\noindent We now want to embed $\mathbb{CP}^{1|2}_\omega$ in a projective superspace, namely in $\mathbb{CP}^{2|2}.$ In order to do this, we need to find an ample line bundle $\mathcal{L}$ of $\mathbb{CP}^{1|2}_\omega$ which allows for such an embedding. In the case of $\mathbb{CP}^{1|2}_\omega$ it is easy to define $\mathcal{L}$ using the standard covering of $\mathbb{CP}^{1|2}$ given by $\{U_0, U_1 \}$ introduced above and then giving the expression of the unique transition function in the intersection $U_0 \cap U_1$. Namely, we consider the following 
\bear \label{linebun}
\mathcal{L} \quad \leftrightsquigarrow \quad \left ( \{U_0, U_1\}, e_{U_0} = (w^2 - \psi_1 \psi_2) e_{U_1} \right ),
\eear
where $({e}_{ U_0 }, {e}_{ U_1})$ are the basis or frames of $\mathcal{L}$ on $U_0$ and $U_1$.  
\begin{remark} A comment is in order here. Indeed, as in the ordinary case, one can always describe a line bundle $\mathcal{L}_{\mani}$ on a supermanifold $\mani$ by giving an open covering $\{ U_i \}_{i \in I}$ of $|\mani_{\mathpzc{red}}|$ and the transition functions $\{ g_{ij}\}_{i, j \in I}$ between two local frames $e_{U_i} $ and $e_{U_j}$ in the intersections $U_i \cap U_j$ for $i, j \in I$, so that $e_{U_i} = g_{ij} e_{U_j}$. In this fashion one has indeed the correspondence $\mathcal{L}_{\mani} \leftrightsquigarrow ( \{U_i \}_{i \in I}, \{g_{ij} \}_{i, j \in I})$, where we stress that $g_{ij}$ takes values in $\mathcal{O}^\ast_{\mani, 0} (U_i \cap U_j)$ for any $i,j \in I,$ since the transition functions need to be even, hence parity-preserving. Further, compatility on triple intersections gives a cocycle condition, \emph{i.e.}\ the transition functions $\{ g_{ij}\}_{i, j \in I}$ define classes in $H^1 (|\mani_{\mathpzc{red}}|, \mathcal{O}^\ast_{\mani,0})$. This observation leads to the super-analog of the usual identification of the Picard group $\mbox{Pic} (X)$ of isomorphy classes of line bundle on a complex manifold $X$ with $H^1 (|X|, \mathcal{O}^\ast_X)$: in the case of a supermanifold $\mani$ we have instead $\mbox{Pic} (\mani) \cong H^1 (|\mani_{\mathpzc{red}}|, \mathcal{O}^\ast_{\mani, 0}).$ \\
Along this line of thought, the explicit form of the transition functions \eqref{tfCP} of $\mathbb{CP}^{1|2}_\omega$ comes in handy to verify that the previous definition \eqref{linebun} of $\mathcal{L}$ is well-posed. Indeed, one can check that the transition function defines an element in the cohomology group $H^1 (|\mathbb{CP}^1|, \mathcal{O}^\ast_{\mathbb{CP}^{1|2}_\omega, 0} )$, which is identified with the Picard group $\mbox{Pic}(\mathbb{CP}^{1|2}_\omega)$ of the supermanifold $\mathbb{CP}^{1|2}_\omega$. More in general, the transition functions \eqref{tfCP} allow to compute, via \v{C}ech cohomology, the full Picard group of $\mathbb{CP}^{1|2}_\omega$. Namely, one finds that the Picard group of $\mathbb{CP}^{1|2}_\omega$ is made of lifts of line bundles on $\mathbb{CP}^{1}$ - recall that $\mbox{Pic} (\mathbb{CP}^1) \cong \mathbb{Z}$ - and a continuous part. Namely, one finds  
$
\mbox{Pic} (\mathbb{CP}^{1|2}_\omega) \cong \mathbb{Z}\oplus \mathbb{C}^{3}.
$

\end{remark}

%Indeed, one can check that the transition function defines an element in the cohomology group $H^1 (|\mathbb{CP}^1|, \mathcal{O}^\ast_{\mathbb{CP}^{1|2}_\omega, 0} )$, which is identified in the usual fashion (via transition functions) with the Picard group $\mbox{Pic}(\mathbb{CP}^{1|2}_\omega)$ of the supermanifold $\mathbb{CP}^{1|2}_\omega$. \\
\noindent Getting back to the line bundle $\mathcal{L}$ defined in \eqref{linebun}, the transition functions allow to verify that the following are global sections:
\begin{align}
X_0 \defeq \{ e_{U_0} ,  \; (w^2 - \psi_1 \psi_2) e_{U_1}\}, \quad X_1 \defeq \{ z e_{U_0} , \; w e_{U_1}\}, \quad X_2 \defeq \{ (z^2 -  \theta_1 \theta_2 ) e_{U_0} ,\; e_{U_1}\},
\end{align}
\begin{align}
\Theta_1 \defeq \{ \theta_1 e_{U_0} ,\;  \psi_1 e_{U_1} \}, \quad \Theta_2 \defeq \{ \theta_2 e_{U_0} , \; \psi_2 e_{U_1} \}.
\end{align}
Let us check, for instance, that $X_2$ is a global section:
\begin{align}
%Y_2 & = 
\left (z^2 - \theta_1 \theta_2 \right ) e_{U_0}  
%& = \left (z^2 -  \theta_1 \theta_2 \right ) \left ( w^2 -  \psi_1 \psi_2 \right ) e_{\mathcal{V}} \nonumber \\
 &= \left ( \left [ \frac{1}{w} + \frac{\psi_1 \psi_2}{w^3}\right ]^2 + \frac{\psi_1 \psi_2}{w^4} \right ) \left ( w^2 - \psi_1 \psi_2 \right ) e_{U_1} \nonumber \\
%& = \left ( \frac{1}{w^2} + 2 \frac{\psi_1 \psi_2}{w^4}+  \frac{\psi_1 \psi_2}{w^{4}} \right )\left ( w^2 - \psi_1 \psi_2 \right ) e_{U_1} \nonumber \\
 &= \left ( \frac{1}{w^2} + \frac{\psi_1 \psi_2}{w^4} \right )\left ( w^2 - \psi_1 \psi_2 \right ) e_{U_1}   = e_{U_1}. 
\end{align}
It is immediate to find the equation satisfied by these global sections using their local definitions. Working on $U_0$, for example one has 
\begin{align}
& \left [\Theta_1 \Theta_2 - X^2_1 -X_0 X_2 \right ] |_{U_0} = \theta_1 \theta_2 - z^2 +z^2 -  \theta_1 \theta_2 = 0,
\end{align}
and we leave to the reader to write down the corresponding map $\varphi : \mathbb{CP}^{1|2}_\omega \rightarrow \mathbb{CP}^{2|2}$ and check that it defines an embedding whose image is given by the equation
\bear \label{Cform1}
\Theta_1 \Theta_2 - X^2_1 -X_0 X_2 = 0 \quad \subset \quad \mathbb{CP}^{2|2}.
\eear 
Finally, in order to conclude the verification that $\mathbb{CP}^{1|2}_\omega$ is actually isomorphic to the supermanifold $\mathcal{C} \subset \mathbb{CP}^{2|2}$, one can bring the equation \eqref{Cform1} in the form \eqref{superc} via a transformation in $PGL(3|2)$ - the supergroup of automorphisms of $\mathbb{CP}^{2|2}$. Namely, this is achieved by the transformation 
\bear
PGL (3|2) \owns [T] = \left ( \begin{array}{ccc|cc} 
1 & 0 & i & 0 & 0 \\
0 & i & 0 & 0 & 0 \\
1 & 0 & -i & 0 & 0 \\
\hline
0 & 0 & 0 & 1 & 0 \\
0 & 0 & 0 & 0 & 1 
\end{array}
\right ).
\eear
\noindent We summarize the previous - rather informal - discussion in the following Lemma. 
\begin{lemma}[Super Conic] The complex supermanifold $\mathcal{C} \subset \mathbb{CP}^{2|2}$ cut out by the equation 
\bear
X_0^2 + X_1^2 + X_2^2 + \Theta_1 \Theta_2 = 0  \quad \subset \quad \mathbb{CP}^{2|2},
\eear
is isomorphic to the $1|2$-dimensional supermanifold $\mathbb{CP}^{1|2}_\omega$, determined (up to isomorphism) by the triple $(\mathbb{CP}^{1}, \mathcal{O}_{\mathbb{CP}^1} (-2)^{\oplus 2}, \omega)$, where $\omega$ is a non-zero cohomology class in $H^1 (|\mathbb{CP}^1|, \cat{\emph{T}}_{\mathbb{CP}^1} (-4)).$
\end{lemma}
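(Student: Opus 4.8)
\noindent The strategy is to realize the claimed isomorphism as an explicit projective embedding, assembling the computations already carried out in this appendix. I would start from the transition functions \eqref{tfCP} of $\mathbb{CP}^{1|2}_\omega$ relative to the standard covering $\{U_0, U_1\}$ of $\mathbb{CP}^1$, together with the line bundle $\mathcal{L}$ of \eqref{linebun}, whose reduction $\mathcal{L}|_{\mathbb{CP}^1} \cong \mathcal{O}_{\mathbb{CP}^1}(2)$ is very ample. The five expressions $X_0, X_1, X_2, \Theta_1, \Theta_2$ are global sections of $\mathcal{L}$, checked on the overlap $U_0 \cap U_1$ by means of \eqref{tfCP} exactly as was done for $X_2$; together they define a morphism $\varphi : \mathbb{CP}^{1|2}_\omega \to \mathbb{CP}^{2|2}$.

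\noindent The core of the argument is to prove that $\varphi$ is a closed embedding realizing an isomorphism onto the super conic. I would verify this chart by chart. On $U_0$, where $X_0 \neq 0$, the affine coordinates $X_1/X_0, X_2/X_0, \Theta_1/X_0, \Theta_2/X_0$ of $\mathbb{CP}^{2|2}$ pull back under $\varphi$ to $z,\ z^2 - \theta_1\theta_2,\ \theta_1,\ \theta_2$; since the local coordinates $(z,\theta_1,\theta_2)$ of $\mathbb{CP}^{1|2}_\omega$ are recovered as $(X_1/X_0, \Theta_1/X_0, \Theta_2/X_0)$, the map $\varphi|_{U_0}$ is an isomorphism onto its image, and the symmetric computation on $U_1$ in terms of $(w, \psi_1, \psi_2)$ covers the remaining chart. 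A substitution of the local expressions then shows that the image is cut out precisely by the single equation \eqref{Cform1}, $\Theta_1\Theta_2 - X_1^2 - X_0 X_2 = 0$ (on $U_0$ this collapses to $\theta_1\theta_2 - z^2 + (z^2 - \theta_1\theta_2) = 0$).

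\noindent To conclude, I would bring \eqref{Cform1} into the standard form \eqref{superc} by postcomposing $\varphi$ with the projective automorphism $[T] \in PGL(3|2)$ displayed above, which is block-diagonal and hence fixes the odd monomial $\Theta_1\Theta_2$ while converting the nondegenerate even quadric $-X_1^2 - X_0 X_2$ into the sum of squares $X_0^2 + X_1^2 + X_2^2$. This yields the desired isomorphism $\mathbb{CP}^{1|2}_\omega \cong \mathcal{C}$, and the identification of $\omega$ as a non-zero class in $H^1(|\mathbb{CP}^1|, \cat{T}_{\mathbb{CP}^1}(-4)) \cong H^1(|\mathbb{CP}^1|, \mathcal{O}_{\mathbb{CP}^1}(-2))$ is built into the definition of $\mathbb{CP}^{1|2}_\omega$ via \eqref{omegacong}. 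I expect the main obstacle to be the embedding step: one must confirm that $\varphi$ is a closed immersion in the super sense, \emph{i.e.}\ that $\varphi^\#$ is surjective onto the structure sheaf of $\mathcal{C}$ and that $\varphi$ separates the odd directions, rather than merely a morphism with the correct reduced image; the chart-wise inversion of coordinates above is exactly what is designed to control this.
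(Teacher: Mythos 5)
Your proposal is correct and follows essentially the same route as the paper's own (admittedly informal) argument in Appendix B: global sections of the line bundle $\mathcal{L}$ of \eqref{linebun} define $\varphi:\mathbb{CP}^{1|2}_\omega\to\mathbb{CP}^{2|2}$, the image is the quadric \eqref{Cform1}, and the block-diagonal $[T]\in PGL(3|2)$ carries it to \eqref{superc}. Your chart-by-chart verification that $\varphi$ is a closed embedding (recovering $(z,\theta_1,\theta_2)$ as $(X_1/X_0,\Theta_1/X_0,\Theta_2/X_0)$ and describing the image as a graph) is precisely the step the paper leaves to the reader, so your write-up is, if anything, slightly more complete.
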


\section*{Data Availability}

\noindent Data sharing not applicable to this article as no datasets were generated or analyzed during the current study.

\end{document}